\let\pgfmathrandomX=\pgfmathrandom@
\let\pgfmathrandom@=\pgfmathrandomX
  \theoremstyle{plain} 
  	\newtheorem{theorem}{Theorem}[section]
  	\newtheorem*{theorem*}{Theorem} 
  	\newaliascnt{coro}{theorem}
  	  \newtheorem{corollary}[coro]{Corollary}
  	\newaliascnt{lem}{theorem}
  		\newtheorem{lemma}[lem]{Lemma}
  	\newaliascnt{clm}{theorem}
  		\newtheorem{claim}[clm]{Claim}
	\newaliascnt{fact}{theorem}
  \newaliascnt{prop}{theorem}
  		\newtheorem{proposition}[prop]{Proposition}
	\newaliascnt{conj}{theorem}
  	\newtheorem{question}[theorem]{Question}
  	\newtheorem*{question*}{Question}
  \theoremstyle{remark} 
  	\newtheorem{remark}[theorem]{Remark}
 	 \newtheorem{example}[theorem]{Example}
  \theoremstyle{definition} 
  	\newaliascnt{defn}{theorem}
 		 \newtheorem{definition}[defn]{Definition}
 	 \newtheorem{observation}[theorem]{Observation}
\newenvironment{proofof}[1]{\begin{proof}[Proof of {#1}]}{\end{proof}}
\crefname{claim}{Claim}{Claims}
\providecommand{\email}[1]{\href{mailto:#1}{\nolinkurl{#1}\xspace}}
\newcommand{\eps}{\ensuremath{\varepsilon}\xspace}
\newcommand{\property}{\ensuremath{\mathcal{P}}\xspace} 
\newcommand{\task}{\ensuremath{\mathcal{P}}\xspace} 
\newcommand{\class}{\ensuremath{\mathcal{C}}\xspace} 
\newcommand{\eqdef}{:=}
\newcommand{\accept}{\textsf{accept}\xspace}
\newcommand{\reject}{\textsf{reject}\xspace}
\newcommand{\domain}{\ensuremath{\mathcal{X}}\xspace} 
\newcommand{\distribs}[1]{\Delta\!\left(#1\right)} 
\newcommand{\no}{{\sf{}no}\xspace}
\newcommand{\Testing}{\ensuremath{\mathcal{T}}\xspace} 
\newcommand{\UTesting}{\ensuremath{\mathcal{T}^{\rm u}}\xspace} 
\newcommand{\Learning}{\ensuremath{\mathcal{L}}\xspace}
\newcommand{\bigO}[1]{{O\mleft( #1 \mright)}}
\newcommand{\bigTheta}[1]{{\Theta\mleft( #1 \mright)}}
\newcommand{\bigOmega}[1]{{\Omega\mleft( #1 \mright)}}
\newcommand{\tildeO}[1]{\tilde{O}\mleft( #1 \mright)}
\newcommand{\setOfSuchThat}[2]{ \left\{\; #1 \;\colon\; #2\; \right\} } 			
\newcommand{\indicSet}[1]{\mathds{1}_{#1}}                                              
\newcommand{\indic}[1]{\indicSet{\left\{#1\right\}}}
\newcommand{\dtv}{\operatorname{d}_{\rm TV}}
\newcommand{\totalvardistrestr}[3][]{{\dtv^{#1}\!\left({#2, #3}\right)}}
\newcommand{\totalvardist}[2]{\totalvardistrestr[]{#1}{#2}}
\newcommand{\dist}[2]{\operatorname{dist}\mleft({#1, #2}\mright)}
\newcommand{\proba}{\Pr}
\newcommand{\probaOf}[1]{\proba\!\left[\, #1\, \right]}
\newcommand{\probaCond}[2]{\proba\!\left[\, #1 \;\middle\vert\; #2\, \right]}
\newcommand{\probaDistrOf}[2]{\proba_{#1}\left[\, #2\, \right]}
\newcommand{\expect}[1]{\mathbb{E}\!\left[#1\right]}
\newcommand{\expectCond}[2]{\mathbb{E}\!\left[\, #1 \;\middle\vert\; #2\, \right]}
\newcommand{\shortexpect}{\mathbb{E}}
\newcommand{\var}{\operatorname{Var}}
\newcommand{\bE}[2]{\shortexpect_{#1}{\left[#2\right]}}
\newcommand{\uniform}{\ensuremath{\mathbf{u}}}
\newcommand{\uniformOn}[1]{\ensuremath{\uniform_{ #1 } }}
\newcommand{\bernoulli}[1]{\ensuremath{\operatorname{Bern}( #1 ) }}
\newcommand{\poisson}[1]{\ensuremath{\operatorname{Poisson}\!\left( #1 \right) }}
\newcommand{\mutualinfo}[2]{ I\left(#1; #2\right) }
\newcommand{\norm}[1]{\lVert#1{\rVert}}
\newcommand{\normone}[1]{{\norm{#1}}_1}
\newcommand{\normtwo}[1]{{\norm{#1}}_2}
\newcommand{\norminf}[1]{{\norm{#1}}_\infty}
\newcommand{\abs}[1]{\left\lvert #1 \right\rvert}
\newcommand{\clg}[1]{\left\lceil #1 \right\rceil}
\newcommand{\R}{\ensuremath{\mathbb{R}}\xspace}
\newcommand{\N}{\ensuremath{\mathbb{N}}\xspace}
\newcommand{\lp}[1][1]{\ell_{#1}}
\newcommand{\p}{\mathbf{p}}
\newcommand{\q}{\mathbf{q}}
\newcommand{\cE}{\mathcal{E}} \newcommand{\cI}{\mathcal{I}}
\newcommand{\cP}{\mathcal{P}} \newcommand{\cS}{\mathcal{S}}
\newcommand{\cX}{\mathcal{X}} 
\title{Distributed Simulation and Distributed Inference}
\author{ Jayadev Acharya\thanks{Cornell University. Email: \email{acharya@cornell.edu}.} 
  \and Cl\'{e}ment L. Canonne\thanks{Stanford University. Email: \email{ccanonne@cs.stanford.edu}. Supported by a Motwani Postdoctoral Fellowship.} 
  \and Himanshu Tyagi\thanks{Indian Institute of Science. Email: \email{htyagi@iisc.ac.in}.}  }
\newcommand{\referee}{\mathcal{R}\xspace}
\newcommand{\ns}{n} 
\newcommand{\ab}{k} 
\newcommand{\numbits}{\ell}
\begin{document}

\maketitle

\begin{abstract}
  Independent samples from an unknown probability distribution $\p$ on a
domain of size $\ab$ are distributed across $\ns$ players, with each
player holding one sample. Each player can communicate $\numbits$ bits
to a central referee in a simultaneous message passing model of
communication to help the referee infer a property of the unknown 
$\p$. What is the least number of players for inference required in the communication-starved setting of $\numbits <\log \ab$? We begin by exploring a general \emph{simulate-and-infer} strategy for such inference problems where the center simulates the desired number of samples from the unknown distribution and applies standard inference algorithms for the collocated setting. Our first result shows that for $\numbits<\log \ab$ perfect simulation of even a single sample is not possible. Nonetheless, we present next a Las Vegas algorithm that simulates a single sample from the unknown  distribution using no more than $O(\ab/2^\numbits)$ samples in expectation. As an immediate corollary, it follows that simulate-and-infer attains the optimal sample complexity of $\Theta(\ab^2/2^{\numbits}\eps^2)$ for learning the unknown distribution to an accuracy of $\eps$ in total variation distance. 

For the prototypical testing problem of identity testing,
simulate-and-infer works with $O(\ab^{3/2}/2^{\numbits}\eps^2)$
samples, a requirement that seems to be inherent for all communication
protocols not using any additional resources. Interestingly, we can
break this barrier using public coins. Specifically, we exhibit a
public-coin communication protocol that accomplishes identity testing
using $O(\ab/\sqrt{2^{\numbits}}\eps^2)$ samples. Furthermore, we show
that this is optimal up to  constant factors. Our theoretically sample-optimal protocol is easy to implement in practice. Our proof of lower
bound entails showing a contraction in $\chi^2$ distance of product
distributions due to communication constraints and maybe of interest
beyond the current setting.
 \end{abstract}

\thispagestyle{empty}
\newpage
\setcounter{page}{1}

\tableofcontents

\newpage

\section{Introduction}

A set of sensor nodes are deployed in an active volcano to measure
seismic activity. They are connected to a central server over a low
bandwidth communication link, but owing to their very limited battery,
they can only send a fixed number of short packets. The server seeks
to determine if the distribution of the quantized measurements have
changed significantly from the one on record. How many sensors must be
deployed?

This situation is typical in many emerging sensor network applications
as well as other distributed learning scenarios where the data is
distributed across multiple clients with limited communication
capability. The question above is an instance of the \emph{distributed
inference} problem where independent samples from an unknown
distribution are given to physically separated players who can only
send a limited amount of communication to a central referee. The
referee uses the communication to infer some properties of the
generating distribution of the samples. A variant of this problem
where each player gets different (correlated) coordinates of the
independent samples has been studied extensively in the information
theory literature (cf.~\cite{AhlCsi86, Han87, HanAmari98}). The 
problem described above has itself received significant attention lately in various communities (see, for
instance,~\cite{BPCPE:11,BBFM:12,Shamir:14,DGLNOS:17,HOW:18}), with
the objective of performing parameter or density estimation while
minimizing the number of players (or, equivalently, the amount of
data). Of particular interest to us are the results
in \cite{DGLNOS:17,HOW:18}, which consider distribution learning problems. Specifically, it is shown that roughly trivial schemes quantizing and compressing each sample separately turn out to be
optimal for \emph{simultaneous message passing} (SMP) communication protocols. One of our goals in this research is to formalize 
this heuristic and {explore its limitations.

To formalize the question, we introduce a natural notion of
\emph{distributed simulation}: $\ns$ players observing an independent
sample each from an unknown $\ab$-ary distribution $\p$ can send
$\numbits$-bits each to a referee. A distributed simulation protocol consists
of an SMP and a randomized decision map that enables the referee to generate a sample from $\p$ using
the communication from the players. Clearly, when\footnote{We assume
throughout that $\log \ab$ is an integer.} $\numbits\geq \log \ab$ such a
sample can be obtained by getting the sample of any one player. But
what can be done in the communication-starved regime of 
$\numbits< \log \ab$?

This problem of distributed simulation is connected innately to the
aforementioned distributed inference problems where the generating distribution $\p$ is unknown and the
referee uses the communication from the players to accomplish a
specific inference task $\task$.
\begin{question} What is the minimum number of players $\ns$
required by an SMP that successfully accomplishes the inference $\task$, as a function of $\ab$, $\numbits$, and the relevant parameters of $\task$?
\end{question}
The formulation above encompasses both density and
parameter estimation, as well as {distribution testing} (see
e.g.~\cite{Rubinfeld:12:Survey,Canonne:15:Survey} and~\cite{Goldreich:17}
for a survey of property and distribution testing). 

Equipped with a distributed simulation, we can accomplish any distributed
inference task by simulating as many samples as warranted by the sample
complexity of the inference problem. Our objective is to
understand when such a \emph{simulate-and-infer} strategy is optimal. We study the distributed simulation problem and apply it
 to distribution learning and distribution testing. Our results are the most striking for distribution testing, which, to the best of our knowledge, has not been studied in the distributed setting prior to our work.

Starting with the distributed simulation problem, we first establish that perfect simulation is impossible using any finite number of players in the
communication-starved regime. This establishes an interesting
dichotomy where communication from a single party suffices for perfect
simulation when $\numbits
\geq \log \ab$ and no finite number of parties can accomplish it when
$\numbits<\log\ab$. If we allow a small probability of declaring failure,
namely Las Vegas schemes, distributed simulation is possible with
finitely many players. Indeed, we present such a Las Vegas distributed
simulation scheme that requires optimal (up to constant factors)
number of players to simulate $\ab$-ary distributions using
$\numbits$ bits of communication per player. Moving to the connection
between distributed simulation and distributed inference, we exhibit
an instance when simulate-and-infer {is} optimal. Perhaps more
interestingly, we even exhibit a case where a simple
simulate-and-infer scheme is far from optimal if we allow the
communication protocol to use public randomness. As a byproduct, we
 characterize the least number of players required for
distributed uniformity testing in the SMP model. We provide
a concrete description of our results in the next section, followed by
an overview of our proof techniques in the subsequent section. To put
our results in context, we provide an overview of the literature on
distribution learning as well.

\subsection{Main results}
Our first theorem shows that perfect distributed simulation with a
finite number of players is impossible:
\begin{theorem}\label{theo:distributed:simulation:impossiblity}
  For every $\ab \geq 1$ and $\numbits< \log\ab$, there exists no SMP
  with $\numbits$ bits of communication per player for distributed
  simulation over $[\ab]$ with finite number of players. Furthermore,
  the result continues to hold even when public-coin and interactive
  communication protocols are allowed.
\end{theorem}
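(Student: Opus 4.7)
The plan is to argue by contradiction, combining the support-preservation property of any perfect simulation with a pigeonhole count on the $\numbits$-bit message space. I will handle the SMP case first, with public and private coins absorbed into the common randomness, and then address the interactive extension as the main obstacle.

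First I would reduce to a single favorable realization of all coins. Writing $Q_r(\p)$ for the protocol's output distribution conditioned on the entire randomness $r$, perfect simulation says $\shortexpect_r Q_r(\p) = \p$; nonnegativity of $Q_r(\p)_y$ then forces $Q_r(\p)_y = 0$ almost surely whenever $y \notin \supp{\p}$. A union bound over the finitely many distributions that matter---the point masses $\delta_i$ and the uniform distributions $\p_{i_0}$ on $[\ab] \setminus \{i_0\}$, one per $i_0 \in [\ab]$---yields a realization $r^\ast$ on which $Q_{r^\ast}(\delta_i) = \delta_i$ for every $i$ and $\supp{Q_{r^\ast}(\p_{i_0})} \subseteq [\ab] \setminus \{i_0\}$ for every $i_0$. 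In the SMP case, let $T_i^{(l)} \subseteq \{0,1\}^\numbits$ be the support of player $l$'s message distribution on sample $i$ (under $r^\ast$); the joint transcript under $\p_{i_0}$ is then a product distribution supported on $\prod_l \bigcup_{i \neq i_0} T_i^{(l)}$, and the referee's decoder $\psi$ must output an element of $[\ab] \setminus \{i_0\}$ on every transcript in this rectangle with positive probability.

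The heart of the argument is a pigeonhole. For each player $l$, the $\ab$ nonempty subsets $T_1^{(l)}, \ldots, T_\ab^{(l)}$ of the $2^\numbits$-element universe $\{0,1\}^\numbits$ cannot be pairwise disjoint, since $2^\numbits < \ab$, so some element $m_l^\ast$ lies in two of them, say $T_{a_l}^{(l)} \cap T_{b_l}^{(l)}$ with $a_l \neq b_l$. Because the two-element set $\{a_l, b_l\}$ meets $[\ab] \setminus \{i_0\}$ for every $i_0 \in [\ab]$, the element $m_l^\ast$ belongs to $\bigcup_{i \neq i_0} T_i^{(l)}$ for every $i_0$, and by the product structure of the SMP the vector $m^\ast := (m_1^\ast, \ldots, m_\ns^\ast)$ has strictly positive probability under every $\p_{i_0}$. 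The support constraint from the previous paragraph then forces the decoded distribution $\psi(m^\ast)$ to place zero mass on $i_0$ for every $i_0 \in [\ab]$, contradicting the fact that $\psi(m^\ast)$ is a probability distribution on $[\ab]$.

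The main obstacle is the interactive case, where the joint transcript no longer factors as a product, so the pigeonhole above does not directly produce a transcript valid under every $\p_{i_0}$. My plan there is to exploit the following invariance: with all coins fixed, the entire interactive transcript is determined by the tuple of per-player messages, since by induction on rounds any two sample vectors producing identical per-round messages from every player cause identical referee broadcasts and identical subsequent responses. Starting from $(i, i, \ldots, i)$ (whose output is forced to be $i$), I would iteratively swap a single coordinate to a different value using a pigeonhole collision in the relevant player's trajectory map, preserving the full transcript---and hence the output---at each step, until reaching a sample vector in $([\ab] \setminus \{i\})^\ns$ whose label is still $i$, contradicting support preservation. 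The delicate point is that such a swap requires the current sample to lie in a pigeonhole collision of the relevant trajectory map; I expect a careful counting of the pigeonhole surplus $\ab - 2^\numbits \geq 1$ together with a judicious choice of starting $i$ and swap order to ensure that legal swaps are available throughout the iteration.
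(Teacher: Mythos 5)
Your SMP argument is correct and is, in spirit, the same pigeonhole-plus-support argument the paper uses for \cref{theo:sampling:impossibility:non:adaptive:k:l}: after fixing a favorable realization of all randomness, you find for each player $l$ a colliding message $m_l^\ast$ with two distinct preimages $\{a_l, b_l\}$, observe that $m^\ast=(m_1^\ast,\ldots,m_\ns^\ast)$ occurs with positive probability under every distribution $\p_{i_0}$ supported on $[\ab]\setminus\{i_0\}$, and conclude that the decoder must place zero mass on every $i_0\in[\ab]$ on input $m^\ast$, which is impossible. The paper phrases this slightly differently (it picks a single $x$ with $\delta_x(m,u)>0$, uses the freedom to choose $x_i\neq x$ for each player, and exhibits a single $\p$ with $\p_x=0$ but $\p_{x_i}>0$), but the two formulations are interchangeable.

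The interactive case, however, has a genuine gap. Your plan swaps one coordinate of $(i,i,\ldots,i)$ at a time while keeping the transcript fixed, and each swap requires the \emph{current} value $x_l$ to lie in a nonsingleton fiber of player $l$'s message map given the preceding (fixed) transcript. This can fail: player $l$ may send a message with a unique preimage on input $i$, in which case no transcript-preserving swap of coordinate $l$ exists. Since each player can have up to $2^\numbits-1$ such unswappable inputs and there are $\ns$ of them, no single starting $i$ is guaranteed to avoid all of them once $\ns(2^\numbits-1)\geq \ab$; moreover, reordering the swaps does not help, since the feasibility constraint for coordinate $l$ depends only on $x_l$ and the transcript preceding player $l$, both of which are invariant under every valid swap. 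So ``careful counting of the pigeonhole surplus'' will not rescue the argument as stated. The paper avoids this by an orthogonal reduction: any interactive $\numbits$-bit protocol with $\ns$ players unrolls into a (larger but still finite) SMP protocol, so SMP impossibility gives the interactive one for free. Alternatively, you can make your own argument go through by applying the pigeonhole directly to the protocol tree: at each internal node, the $\ab$ inputs of the next player to speak split across at most $2^\numbits<\ab$ branches, so some branch captures at least two inputs; always following that branch yields a leaf transcript $m^\ast$ whose set of consistent inputs is a combinatorial rectangle with every side of size at least two, after which your support argument applies verbatim.
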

In light of this impossibility result, one can ask if distributed
estimation is still possible by relaxing the requirement of finiteness
in the worst-case for the number of players. We demonstrate that this
is indeed the case and describe a protocol with finite {expected}
number of players.\footnote{Or, roughly equivalently, when one is
allowed to abort with a special symbol with small constant
probability.}
\begin{theorem}\label{theo:distributed:simulation}
  For every $\ab, \numbits\geq 1$, there exists a private-coin
  protocol with $\numbits$ bits of communication per player for
  distributed simulation over $[\ab]$ and expected number of players
  $\bigO{{\ab}/{2^\numbits}\vee 1}$. Moreover, this expected number is
  optimal, up to constant factors, even when public-coin and
  interactive communication protocols are allowed.
\end{theorem}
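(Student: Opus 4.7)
The plan is to tackle the upper bound (construct a Las Vegas SMP) and the matching lower bound (which must hold even with public coins and interactivity) separately. The upper bound is purely constructive; the lower bound reduces to a known distributed learning bound.

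For the upper bound, the plan is a partition-and-rejection scheme. Partition $[\ab]$ into $k := \lceil \ab/(2^\numbits-1)\rceil = \Theta(\ab/2^\numbits)$ blocks $B_1,\ldots,B_k$ of size at most $2^\numbits-1$, and reserve one of the $2^\numbits$ message codes (say $0$) to indicate ``failure''. Each player $i$ uses its private randomness to pick a uniformly random index $J_i\in[k]$, and sends the position of $X_i$ within $B_{J_i}$ (a nonzero message) if $X_i\in B_{J_i}$; otherwise it sends the failure code. The key probabilistic identity is that, unconditionally,
\[
  \Pr\!\bigl[\,J_i=j,\ X_i=x,\ \text{``success''}\,\bigr] \;=\; \tfrac{1}{k}\,\p(x)\,\mathds{1}_{x\in B_j},
\]
so the marginal success probability is exactly $1/k$ and $X_i$ conditioned on success is exactly $\p$-distributed. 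To let the referee recover $X_i$ even though $J_i$ is private, I would tie the block index to the player's identity (a round-robin assignment $J_i\equiv i\pmod k$, with the referee selecting a uniformly random target block $J^\star\in[k]$ using its own private coins and outputting the decoded $X_I$ for the first recruited player $I$ whose known block matches $J^\star$ and whose message is a success). Recruitment is done one player at a time so that previously observed failures are not wasted; a coupon-collector-style analysis gives expected $\Theta(k)$ recruited players and an output distributed exactly as $\p$. The $\vee\,1$ in the bound absorbs the trivial regime $\numbits\geq \log\ab$ in which a single player already suffices.

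For the lower bound, the plan is a black-box reduction to distributed distribution learning. Any (possibly public-coin, possibly interactive) simulator using $n$ expected players per sample can be chained by running $T$ independent copies in parallel on fresh players, producing $T$ i.i.d.\ samples from $\p$ and hence learning $\p$ in total variation to accuracy $\eps$ at the cost of $nT$ expected players, where $T=\Theta(\ab/\eps^2)$ is the collocated sample complexity of learning. Invoking the existing communication-constrained lower bound for distribution learning in the SMP model, which gives $\Omega(\ab^2/(2^\numbits\eps^2))$ players and, crucially, continues to hold under public randomness and interaction, then yields $n=\Omega(\ab/2^\numbits)$, matching the upper bound up to constants.

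The hard part is the correctness accounting for the upper bound: naively picking a fresh batch of $k$ round-robin players per attempt at a fresh $J^\star$ gives $\Theta(k^2)$ players, so the refined protocol must reuse players across retargeted trials without biasing the output distribution. Arguing that the reuse strategy (pick a new $J^\star$ at each step, recruit only when the currently indexed block has not yet been filled) produces a first successful output whose law is exactly $\p$ is the principal verification I expect to need care.
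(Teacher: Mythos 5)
Your lower bound matches the paper's argument: chain the simulator to produce i.i.d.\ samples for learning and invoke the $\Omega(\ab^2/(2^\numbits\eps^2))$ distributed-learning lower bound (which the paper establishes even for public-coin, adaptive protocols). The upper bound, however, has a genuine gap. In the protocol you commit to -- round-robin $J_i\equiv i\pmod k$, a single referee-private $J^\star$ uniform on $[k]$, output the decoded sample of the first recruited player in block $J^\star$ to report a success -- the output law is $\probaOf{\hat X=x}=\tfrac{1}{k}\cdot\p_x/\p(B_{j(x)})$, not $\p_x$: for a \emph{fixed} block $j$, the first success is already $\p$ \emph{conditioned} on $B_j$, i.e.\ re-normalized by $\p(B_j)$. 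The identity you write down is correct when $J_i$ is a private uniform coin held by the player and independent of $X_i$; making $J_i$ deterministic and substituting a single referee-side $J^\star$ does not reproduce the same conditional structure. (Moreover its cost can blow up: the expected number of players read is $\sum_j 1/\p(B_j)$, which can be $\Theta(k^2)$ or infinite if some $\p(B_j)=0$.) Your ``fresh target per step'' refinement does yield the law $\p$ -- each trial is a $\bernoulli{1/k}$ event carrying a $\p$-distributed payload -- but each trial demands a specific player index: under the paper's sequential cost model the referee must read through index $\max_j\bigl(j+(C_j-1)k\bigr)\lesssim k\max_j C_j$, where $C_j$ counts the trials targeting block $j$. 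With $T=\Theta(k)$ roughly uniform trials this is a maximum-load quantity of order $k\cdot\Theta(\log k/\log\log k)$ in expectation, not $\Theta(k)$; reading a fresh batch of $k$ players per trial is $\Theta(k^2)$, as you note. (A coupon-collector heuristic would anyway give $\Theta(k\log k)$, not $\Theta(k)$.)

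The device you are missing is a batch-wise rejection that simultaneously keeps the output law proportional to $\p$ and bounds the per-batch acceptance probability below by a constant. The paper assigns a \emph{pair} of players to each part $S_j$ and accepts a batch only when \emph{exactly one} part is hit and that pair's second player reports a miss; the second player's $(1-\p(S_j))$ factor cancels the spurious $1/(1-\p(S_j))$ introduced by the ``exactly one hit'' event, so the per-batch output probability for $x\in S_j$ is $\p_x\cdot\prod_{j'}(1-\p(S_{j'}))$, exactly proportional to $\p_x$, with acceptance probability $\prod_{j'}(1-\p(S_{j'}))$. It remains to lower-bound this acceptance probability: a doubling trick (split every element into two equiprobable copies, implemented by duplicating players and having the referee randomly flip received $1$'s to $0$'s) forces $\normtwo{\cdot}\le 1/\sqrt 2$, so the product is at least a universal constant, giving $O(1)$ expected batches of $O(\ab/2^\numbits)$ players each and hence the claimed $O(\ab/2^\numbits \vee 1)$ expected players.
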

\noindent We use this distributed simulation result to
derive protocols for \emph{any} distributed inference task:
\begin{theorem}[Informal]\label{theo:inference:simulation:informal}
  For any inference task $\task$ over $\ab$-ary distributions with
sample complexity $s$ in the non-distributed model, there exists a
private-coin protocol for $\task$, with $\numbits$ bits of
communication per player, and $\ns=O({s\cdot \ab/2^\numbits})$
players.
\end{theorem}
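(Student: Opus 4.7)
The plan is to bootstrap \Cref{theo:distributed:simulation} into a protocol for an arbitrary inference task by running the single-sample simulation in parallel many times. Let $\Algo$ be the optimal non-distributed algorithm for $\task$, which uses $s$ i.i.d.\ samples from $\p$ and outputs the correct answer with probability at least $2/3$. By \Cref{theo:distributed:simulation}, there is a private-coin protocol $\Pi$ with $\numbits$ bits per player that, using some random number $N$ of players with $\shortexpect[N] = C\cdot(\ab/2^\numbits \vee 1)$, produces one perfect sample from $\p$ (and otherwise declares failure). The composite protocol runs $s$ independent copies of $\Pi$ on disjoint groups of players, collects the $s$ resulting samples, and feeds them to $\Algo$.

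The first step is to fix a budget of players and show that, with this budget, all $s$ copies of $\Pi$ terminate successfully with high probability. Let $N_1,\dots,N_s$ be the number of players consumed by each copy; these are i.i.d.\ with mean $\shortexpect[N_i] = C\cdot(\ab/2^\numbits \vee 1)$. Setting the global budget to $\ns = 10 C s \cdot (\ab/2^\numbits \vee 1)$ and applying Markov's inequality to $\sum_{i=1}^s N_i$ shows that
\[
\probaOf{\sum_{i=1}^s N_i > \ns} \le \frac{1}{10}.
\]
Equivalently, one can think of this as pre-committing $\ns$ players, grouping them arbitrarily into $s$ batches of size $\ns/s$, and aborting any batch whose copy of $\Pi$ runs out of players; the displayed bound guarantees no abort occurs with probability $\ge 9/10$.

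Conditioned on no abort, the $s$ samples fed to $\Algo$ are genuinely i.i.d.\ from $\p$ (this is the key qualitative feature of the Las Vegas guarantee of \Cref{theo:distributed:simulation}: failure events are decoupled from the output distribution), so $\Algo$ returns the correct answer for $\task$ with probability at least $2/3$. A union bound over the abort event and $\Algo$'s error yields overall success probability at least $2/3 - 1/10 > 1/2$, and standard amplification by repeating $O(1)$ times and majority-voting pushes this back up to $2/3$, blowing up the total number of players only by a constant. This gives the desired $\ns = O(s\cdot\ab/2^\numbits)$ bound.

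The only real subtlety is making sure that the ``budget'' view of the protocol is legitimate: the number of players used by a single copy of $\Pi$ is data-dependent, so we cannot literally allocate them in advance. The standard workaround, which I would use here, is to describe the protocol as a streaming process in which players are consumed in order and each copy of $\Pi$ halts as soon as its internal stopping rule fires; the Markov bound above then controls the probability that the pre-allocated supply of $\ns$ players is exhausted. Handling this bookkeeping cleanly (and verifying that the distribution of the emitted samples is unaffected by conditioning on non-abort, because each $\Pi$ is Las Vegas) is the main thing to be careful about; everything else is a direct reduction.
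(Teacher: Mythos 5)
Your high-level plan — simulate $s$ i.i.d.\ samples at the referee via \Cref{theo:distributed:simulation} and then run the centralized algorithm — is exactly the paper's simulate-and-infer reduction, so the overall strategy is right. The concentration argument you use to bound the number of players, however, departs from the paper's and your middle paragraph contains a false equivalence that you only partially walk back at the end. Pre-committing $\ns$ players to $s$ fixed batches of size $\ns/s$ and aborting any copy of $\Pi$ that overruns its batch is \emph{not} the same as the event $\{\sum_i N_i \le \ns\}$. With fixed batches, each copy aborts independently with probability at most $1/10$ (by Markov applied to a single $N_i$), and the only thing you can conclude by a union bound is that the protocol succeeds with probability at least $1 - s/10$, which is vacuous for $s\ge 10$. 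The streaming view you describe in the final paragraph does repair this — if players are consumed in order and each copy halts when its stopping rule fires, then $\{\sum_i N_i\le \ns\}$ really does imply all $s$ copies complete — but you should have made it the primary argument rather than an afterthought, and you should verify that the streaming view still yields a legal SMP protocol. It does: in the simulation protocol of \Cref{theo:generate:sample:lbits:1sample}, each player's message function depends only on its index modulo the (fixed) batch size, and all the adaptive bookkeeping (when a batch succeeds, when a copy of $\Pi$ ends, the random bit-flipping in the "final version") is done by the referee, who sees all messages; so the players' maps are fixed, as SMP requires. Your remark that the samples fed to $\Algo$ remain i.i.d.\ $\p$ conditioned on non-abort is correct, because in each copy of $\Pi$ the output sample is independent of the number of batches consumed.

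The paper avoids the streaming subtlety altogether: it partitions the players into fixed blocks of size $54\ab/2^\numbits$ (roughly twice the expected cost of one simulation), argues via Markov that each block independently produces a sample with probability at least $1/2$, and then uses Chebyshev on the number of successful blocks to guarantee that at least $\psi(\task,\ab)$ samples are produced with high probability. That route is technically simpler because it never needs a data-dependent allocation of players to copies of $\Pi$; yours gives the same $\ns = O(s\cdot \ab/2^\numbits)$ bound once the streaming argument is made precise, but is more delicate.
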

\noindent Instantiating this general statement for the prototypical distribution testing problem of uniformity testing leads to: 
\begin{corollary}\label{theo:uniformity:private:randomness}
For every $\ab,\numbits\geq 1$, there exists a private-coin protocol
for testing uniformity over $[\ab]$, with $\numbits$ bits of
communication per player and
$\ns=\bigO{\frac{\ab^{3/2}}{(2^\numbits\wedge \ab)\eps^2}}$ players.
\end{corollary}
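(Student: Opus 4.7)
The plan is to derive this as a direct corollary of the simulate-and-infer paradigm (Theorem~\ref{theo:inference:simulation:informal}) combined with the known sample complexity of uniformity testing in the collocated setting. Specifically, in the standard (non-distributed) model, deciding whether an unknown distribution $\p$ on $[\ab]$ is uniform or $\eps$-far in total variation distance has sample complexity $s = \bigO{\sqrt{\ab}/\eps^2}$ (via, e.g., Paninski's tester or the classical collision-based tester). The strategy is then to simulate $s$ samples from $\p$ using the protocol of Theorem~\ref{theo:distributed:simulation} and feed them to such a tester.

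Concretely, I would partition the $\ns$ players into consecutive blocks and invoke the Las~Vegas simulator of Theorem~\ref{theo:distributed:simulation} on fresh blocks in sequence until $s$ successful samples have been produced (aborting and declaring \reject, say, if the players run out). Since each sample costs in expectation at most $\tau = \bigO{(\ab/2^\numbits) \vee 1}$ players, the total expected number of players needed to produce $s$ samples is $s\tau = \bigO{\ab^{3/2}/((2^\numbits \wedge \ab)\eps^2)}$. Allocating a constant multiple of this quantity and applying Markov's inequality to the sum of the per-sample player counts guarantees that all $s$ samples are successfully produced with probability at least, say, $3/4$. A union bound with the $2/3$ success probability of the underlying collocated tester yields an overall constant success probability, which can be boosted to $1-\delta$ by standard independent-repetition and majority-vote amplification at the cost of a $\log(1/\delta)$ factor.

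The only non-routine step, beyond invoking the black-box simulator and the black-box tester, is converting the expected-player guarantee of Theorem~\ref{theo:distributed:simulation} into a worst-case bound on the number of deployed players; Markov's inequality handles this cleanly and loses only a constant factor. The algebraic check $\sqrt{\ab}/\eps^2 \cdot \bigl((\ab/2^\numbits)\vee 1\bigr) = \ab^{3/2}/((2^\numbits \wedge \ab)\eps^2)$ delivers the claimed bound: when $2^\numbits \geq \ab$ a single player transmits their entire sample and we recover the collocated rate $\sqrt{\ab}/\eps^2$, while in the communication-starved regime $2^\numbits < \ab$ we pay the simulation overhead $\ab/2^\numbits$ per sample. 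I do not anticipate a real obstacle here: the corollary is a plug-and-play instantiation of the simulate-and-infer template with Paninski's sample complexity plugged in for $s$.
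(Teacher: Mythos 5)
Your proposal is correct and follows essentially the same simulate-and-infer route as the paper: \cref{theo:uniformity:private:randomness} is proved by instantiating the general simulate-and-infer reduction (\cref{theo:exactsampling:implies:ub}) with the $O(\sqrt{\ab}/\eps^2)$ sample complexity of centralized uniformity testing, which is exactly \cref{coro:exactsampling:implies:testing:uniformity}. The one place you diverge is in converting the \emph{expected}-player guarantee of the Las Vegas simulator (\cref{theo:generate:sample:lbits:1sample}) into a worst-case deployment: you apply Markov's inequality directly to the sum of the per-sample player counts across all $s$ simulations, whereas the paper partitions the players into fixed blocks of size $\Theta(\ab/2^\numbits)$ (one per attempted sample), uses Markov to argue that each block independently produces a sample with probability at least $1/2$, and then applies Chebyshev to the number of successful blocks. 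Both concentration arguments are valid, lose only constant factors, and respect the SMP model since the referee can post-process all messages at once; yours is slightly more direct, the paper's block structure keeps the per-sample player allocation fixed in advance. One small arithmetic caveat in your write-up: with Markov applied at tail probability $1/4$, the combined failure probability is $1/3 + 1/4 = 7/12 > 1/2$, which is not directly amenable to majority-vote amplification; you should tighten the Markov constant (e.g., to $1/10$, giving $1/3 + 1/10 = 13/30 < 1/2$) before amplifying, or simply report the constant-error guarantee as the paper does with $2/5$.
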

The optimality of the {simulate-and-infer} strategy that generates
$O(\sqrt{\ab})$ samples from the unknown $\p$ at the referee using
private-coin protocols is open. Note that for a general inference
problem even for $\ab$-ary observations the effective support-size can
be much smaller. Thus, we can define the size of a problem as the
least number of bits to which each samples can be compressed without
increase in the number of compressed sample required to solve the
problem (see~\cref{sec:conjecture} for a formal definition).  An
intriguing question ensues:
\begin{question}[The Flying Pony Question (Informal)]\label{conj:flying:poney}
Does the compressed simulate-and-infer scheme, which simulates
independent samples compressed to the size of the problem using
private-coin protocols and sends them to the referee who then infers
from them, require the least number of players?
\end{question}
For the problems considered in~\cite{DGLNOS:17,HOW:18}, the answer to
 the question above is in the affirmative. However, we exhibit an
 example in~\cref{sec:conjecture} for which the answer is
 negative. Roughly, the problem we consider is that of testing if the
 distribution is uniform on $[\ab]$ or instead satisfies the
 following: for every $i\in[\ab]$, $\p_{i}$ is either $0$ or
 $2/\ab$. We show that the size of this problem remains $\log \ab$,
 whereby the simple simulate-and-infer scheme of the question above
 for $\numbits=1$ will require $O(\ab^{3/2})$ players. On the other
 hand, one can obtain a simple scheme to solve this task using $1$-bit
 communication from only $O(\ab)$ players. Interestingly, even this
 new scheme is of simulate-and-infer form, although it compresses
 below the size of the problem. \smallskip

While the answer to the question above remains open for uniformity
testing using private-coin protocols, it is natural to examine its
scope and consider public-coin protocols for uniformity testing. As it
turns out, here, too, the answer to the question is
negative~--~public-coin protocols lead to an improvement in the
required number of parties over the simple simulate-and-infer protocol
described earlier by a factor of
$\sqrt{\ab/2^{\numbits}}$. Specifically, we provide a public-coin
protocol for uniformity testing that requires roughly
$O(\ab/2^{\numbits/2})$ players and show that no public-coin protocol
can work with fewer players.
\begin{theorem}\label{theo:uniformity:shared:randomness}
For every $\ab,\numbits\geq 1$, consider the problem of testing if the
distribution is uniform or $\eps$-far from uniform in total variation
distance.  There exists a public-coin protocol for uniformity testing
with $\numbits$ bits of communication per player and
$\ns=O\Big({\frac{\ab}{(2^{\numbits/2}\wedge \sqrt{\ab})\eps^2}}\Big)$
players. Moreover, this number is optimal up to constant factors.
\end{theorem}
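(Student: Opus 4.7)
My plan is to design a public-coin protocol that reduces $\ab$-ary uniformity testing to $2^{\numbits}$-ary uniformity testing via a random partition, and then apply a standard collision-based $\ell_2$ tester at the referee. When $\numbits \geq \log\ab$, the players can simply transmit their samples and the referee runs the standard $\bigO{\sqrt{\ab}/\eps^2}$-sample uniformity tester, so assume $\numbits < \log\ab$ and set $m \eqdef 2^{\numbits}$. Using public coins, all players agree on a uniformly random partition of $[\ab]$ into $m$ blocks $G_1,\dots,G_m$ of equal size $\ab/m$; player $i$ observing $X_i$ sends the $\numbits$-bit index $M_i$ with $X_i \in G_{M_i}$. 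Let $\q$ denote the distribution on $[m]$ induced by a single $M_i$. Under the null $\q$ equals the uniform distribution on $[m]$, while a second-moment calculation over the random partition (combined with the Paninski-style reduction that the hard $\eps$-far instances can be taken of the form $\p(i) = (1 \pm \eps)/\ab$) gives $\expect{\|\q - \uniformOn{[m]}\|_2^2} = \bigOmega{\eps^2/\ab}$. The collision-based $\ell_2$ uniformity tester on $[m]$ at squared separation $\eta^2$ has sample complexity $\bigO{1/(\sqrt{m}\,\eta^2)}$, and substituting $\eta^2 = \eps^2/\ab$ and $m = 2^{\numbits}$ yields the claimed $\bigO{\ab/(2^{\numbits/2}\eps^2)}$ upper bound, after a short concentration argument (or a few independent repetitions of the partition) converts the expectation over the partition into a high-probability guarantee on $\|\q - \uniformOn{[m]}\|_2^2$.

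\textbf{Lower bound strategy.} For the matching lower bound I would use the chi-squared method. Consider the Paninski family $\p_\sigma(i) = (1 + \eps\sigma_i)/\ab$ for uniformly random balanced $\sigma \in \{\pm 1\}^{\ab}$. For any public-coin protocol using $\numbits$ bits per player with transcript $(U, M_1,\dots,M_{\ns})$, the goal is to upper bound the chi-squared divergence between the transcript distributions under the uniform hypothesis and under the Paninski mixture, and then invoke the standard implication that $\chi^2 \leq c$ rules out successful testing with probability bounded away from $1/2$.

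\textbf{The key step and main obstacle.} The crux is a single-message chi-squared contraction under public coins. Conditioned on $U = u$ each message is $M_i = W_u(X_i)$ for some deterministic encoder $W_u : [\ab] \to [2^{\numbits}]$, whose fibers partition $[\ab]$ into $2^{\numbits}$ buckets. The contribution of a single player to $\chi^2$ admits an expansion involving $\sum_s \bigl(\sum_{x \in W_u^{-1}(s)} \sigma_x\bigr)^{2}/\probaDistrOf{u}{M_i = s}$; averaging over balanced $\sigma$ yields a per-player bound of $\bigO{\eps^{4} \cdot 2^{\numbits}/\ab^{2}}$ uniformly over the encoder~--~this is precisely where the factor $2^{\numbits}$ (rather than $2^{2\numbits}$) enters, and it is the best one can hope for because public coins allow adversarially correlated encoders across players. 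Tensorizing across the $\ns$ independent samples and averaging over $U$ then gives $\chi^{2} \leq \bigO{\ns^{2}\eps^{4} \cdot 2^{\numbits}/\ab^{2}}$; requiring this to be $\Omega(1)$ forces $\ns = \bigOmega{\ab/(2^{\numbits/2}\eps^{2})}$. The delicate parts will be (i) proving the per-player contraction without inadvertently taking the divergence \emph{before} conditioning on $U$, and (ii) showing that no cleverly non-uniform encoder $W_u$ can beat a random partition in the worst case~--~this is exactly the chi-squared contraction for product distributions under communication constraints alluded to in the abstract.
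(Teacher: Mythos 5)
Your plan follows the paper's route closely on both sides, but each half of the sketch has a gap. On the upper bound, you justify $\expect{\normtwo{\q-\uniformOn{[m]}}^2}=\Omega(\eps^2/\ab)$ via ``the Paninski-style reduction that the hard $\eps$-far instances can be taken of the form $\p(i)=(1\pm\eps)/\ab$.'' That reduction is not available for an upper bound---your protocol must handle every $\eps$-far $\p$, not a chosen hard family---but the conclusion is salvageable without it, by Cauchy--Schwarz ($\normtwo{\p-\uniform}^2\geq\normone{\p-\uniform}^2/\ab$) together with the fact that a random balanced flattening preserves the $\lp[2]^2$ distance in expectation. The real work, which you defer to ``a short concentration argument,'' is anticoncentration: the expectation bound alone is not enough to run the $\lp[2]$ tester, and you need the flattened $\lp[2]^2$ distance to exceed a constant fraction of its mean with constant probability. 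The paper proves this via a fourth-moment bound on $\normtwo{Z}^2$ (handling the correlations among the balanced-partition indicators $\indic{X_i=r}$) and Paley--Zygmund; that computation is the bulk of the upper-bound argument, not an afterthought.

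On the lower bound, your target $\chi^2\lesssim\ns^2\eps^4 2^\numbits/\ab^2$ is correct, but ``tensorizing across the $\ns$ independent samples'' does not produce the $\ns^2$. Product tensorization of $\chi^2$ applies only when both arguments are product measures; after mixing over $\sigma$, the alternative is no longer a product, and the per-player, linear-in-$\ns$ route (whether by naive tensorization or convexity plus a KL/Pinsker bound) gives only $\ns=\Omega(\ab/(2^\numbits\eps^2))$---off by a factor of $2^{\numbits/2}$, exactly as the paper remarks before refining the argument. The $\ns^2$ scaling comes from the mixture structure: one needs Pollard's identity $\chi^2(\bE{\sigma}{Q_\sigma^{\ns}},P^{\ns})=\bE{\sigma,\sigma'}{\prod_j(1+H_j(\sigma,\sigma'))}-1$ with $H_j(\sigma,\sigma')=\frac{\eps^2}{\ab}\sigma^T H_j\sigma'$ bilinear in independent copies $\sigma,\sigma'$, followed by a sub-Gaussian bound on the moment generating function $\bE{\sigma,\sigma'}{e^{\lambda\sigma^T\bar H\sigma'}}$ (which the paper obtains via a transportation-method inequality of Marton type), and finally the encoder-uniform Frobenius bound $\norm{H_j}_F^2\leq 2^\numbits$ for deterministic $W_u$---this is your ``delicate part (ii),'' and it relies on the fibers of $W_u$ being disjoint, which is why $U$ must be fixed before taking the divergence. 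You have identified the right quantities and the right structural concerns, but the mixture $\chi^2$ expansion and the MGF/transportation bound are the technical core of the lower bound and are absent from your sketch.
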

\noindent In fact, we provide two different protocols achieving
this optimal guarantee. The first is remarkably simple to describe and
requires $\Omega(\numbits\cdot \ab)$ bits of shared randomness; the
second is more randomness-efficient, requiring only $O(2^\numbits\cdot \log \ab)$ bits of shared randomness,\footnote{For our regime of interest, $\numbits\ll \log \ab$, and so, $2^\numbits\cdot \log \ab \ll \numbits \cdot\ab$.} but it is also more involved.\smallskip

Before concluding this section, we emphasize that all our results for uniformity testing immediately imply the analogue for the more general  question of identity testing, via a standard reduction argument. We detail this further in~\cref{sec:uniformity}.

\subsection{Proof techniques}
We now provide a high-level description of the proofs of our main
results.
\paragraph{Perfect and $\alpha$-simulation.} Our
general impossibility result for perfect simulation with a finite
number of players is based on simple heuristics. Observe that for any
distribution $\p$ with $\p_i=0$ for some $i$, the referee must not
output $i$ for any sequence of received messages from the
players. However, since $\numbits<\log \ab$, by the pigeonhole
principle one can find a sequence of messages $M=(M_1, ..., M_{\ns})$
where each message $M_i$ has a positive probability of appearing from
two different elements in $[\ab]$. Note that there exist distributions
for which $M$ can occur with a positive probability, and not being
able to abort with the symbol $\bot$, upon receiving this sequence the
referee must output \emph{some} element, say $i^\ast$. Then, for any
distribution with $\p_{i^\ast}=0$, this sequence $M$ must not be
sent. But by construction each message in $M$ can be triggered by at
least two elements in $[\ab]$. Thus, we can find a distribution with
$\p_{i^\ast}=0$ for which the sequence of messages $M$ will be sent
with positive probability, which is a contradiction.

Next, we consider $\alpha$-simulation protocols, namely simulation
protocols that are allowed to abort with probability less than
$\alpha$. The proof of the positive result establishing the existence
of $\alpha$-simulation proceeds by dividing the alphabet into
$\ab/(2^\numbits-1)$ sets of size $2^\numbits-1$ and assigning each
such set to two different players (each using their $\numbits$ bits to
indicate whether their sample fell in this subset, and if so on which
element). If only one pair of players finds the sample in its assigned
subset, the referee can declare this as the output, and it will have
the desired probability. But it is possible that several pairs of
players observe their assigned symbol and send conflicting
messages. In this case, the referee cannot decide which of the
elements to choose and must declare abort.  However, we show that this
happens with a probability that depends only on the $\lp[2]$ norm of
the unknown distribution $\p$; if we could assume this norm to be
bounded away from $1$, then our protocol would require $O(\ab)$
players. Unfortunately, this need not be the case. To circumvent this
difficulty, we artificially duplicate every element of the domain and
``split'' each element $i\in[\ab]$ into two equiprobable elements
$i_1,i_2\in[2\ab]$. This has the effect of decreasing the $\lp[2]$
norm of $\p$ by a factor $\sqrt{2}$, allowing us to instead apply our
protocol to the resulting distribution $\p'$ on $[2\ab]$, for which
the aforementioned probability of aborting can be bounded by a
constant.

\paragraph{Distributed uniformity testing.} 
To test 
whether an unknown distribution $\p$ is uniform using at most $\numbits$
bits to describe each sample, a natural idea is to randomly
partition the alphabet into $L\eqdef 2^\numbits$ parts, and send to
the referee independent samples from the $L$-ary distribution $\q$
induced by $\p$ on this partition. For a random balanced partition
(i.e., where every part has cardinality $\ab/L$), clearly the uniform
distribution $\uniform_\ab$ is mapped to the uniform
distribution $\uniform_L$. Thus, one can hope to reduce the problem of
testing uniformity of $\p$ (over $[\ab]$) to that of testing
uniformity of $\q$ (over $[L]$). The latter task would be easy to
perform, as every player can simulate one sample from $\q$ and 
communicate it fully to the referee with $\log L = \numbits$ bits of
communication. Hence, the key issue is to argue that this random
``flattening'' of $\p$ would somehow preserve the distance to
uniformity; namely, that if $\p$ is $\eps$-far from $\uniform_\ab$,
then (with a constant probability over the choice of the random
partition) $\q$ will remain $\eps'$-far from $\uniform_L$, for some
$\eps'$ depending on $\eps$, $L$, and $\ab$. If true, then it is easy
to see that this would imply a very simple protocol with
$O(\sqrt{L}/{\eps'}^2)$ players, where all agree on a random
partition and send the induced samples to the referee, who then
runs a centralized uniformity test. Therefore, in order to apply the
aforementioned natural recipe, it suffices to derive a ``random
flattening'' structural result for $\eps' \asymp \sqrt{(L/\ab)}\eps$. 
 
An issue with this approach, unfortunately, is that the total variation
distance (that is, the $\lp[1]$ distance) does not behave as desired under
these random flattenings, and the validity of our desired result
remains unclear. Fortunately, an
analogous statement with respect to the $\lp[2]$ distance turns out to
be much more manageable and suffices for our purposes. In more
detail, we show that a random flattening of $\p$ does preserve, with
constant probability, the $\lp[2]$ distance to uniformity; in our
case, by Cauchy--Schwarz the original $\lp[2]$ distance will be at least 
$\gamma\asymp\eps/\sqrt{\ab}$, which implies using known $\lp[2]$
testing results that one can test uniformity of the ``randomly
flattened'' $\q$ with
$O(1/(\sqrt{L}\gamma^2))=O(\ab/(2^{\numbits/2}\eps^2))$ samples. This
yields the desired guarantees on the protocol. However, the proposed
algorithm suffers one drawback: The amount of public randomness required for the players to
agree on a random balanced partition is $\Omega(\ab\log L)
= \Omega(\ab\cdot\numbits)$, which in cases with large alphabet size $\ab$ can be prohibitive.\smallskip

This leads us to our second protocol, whose main advantage is
that it requires much fewer bits of randomness
(specifically, $O_{\eps}(2^\numbits\log\ab)$); however, this comes at
the price of some loss in simplicity. In fact, our second algorithm
too pursues a natural, perhaps more greedy, approach: Pick uniformly at random a subset $S\subseteq[\ab]$ of
size $s\eqdef 2^\numbits-1$ and 
communicate to the referee either an element in $S$ that equals the
observed sample or indicate that the sample does not lie in $S$. If
$\p$ is indeed uniform, then the probability $\p(S)$ of set $S$ satisfies
$\p(S)=s/\ab$ and the conditional distribution $\p^S$ given that the
sample lies in $S$ is
uniform. On the other hand, it is not difficult to show that if $\p$
is $\eps$-far from uniform in total variation distance, then the
expected contribution of elements in $S$ to the $\lp[1]$ distance of $\p$ to
uniform is order $\eps$. By an averaging 
argument, this implies that  with probability at least $\eps$ either
(i)~$\p(S)$ differs from $s/\ab$ by a $(1\pm\Omega(\eps))$ factor, or
(ii)~$\p_S$ is itself $\Omega(\eps)$-far from uniform.  

For a given $S$, detecting if (i)~holds requires roughly $\ab/(s\eps^2)$
samples (and hence as many players), while under (ii)~one would need
$(\ab/s)\cdot \sqrt{s}/\eps^2 = \ab/(\sqrt{s}\eps^2)$ players (the cost of rejection sampling, times
that of uniformity testing on support size $s$) to test
uniformity. When public randomness is available, the players can 
 choose jointly the same random set $S$, so this protocol is
valid. But there is a caveat. Since each choice of $S$ is
only ``good'' with probability $\eps$, to achieve a constant probability
of success one needs to repeat the procedure outlined above for
$\Omega(1/\eps)$ different choices of $S$. This, along with the
overhead cost of a union bound over all repetitions, leads to a bound of
$O({ {\ab}/({2^{\numbits/2}\eps^3})\cdot\log(1/\eps) })$
on the number of players~--~far from the optimal answer of
$\ns=O( \ab/(2^{\numbits/2}\eps^2) )$. To avoid the extra
$1/\eps$, we rely instead on \emph{Levin’s work investment strategy}
(see e.g.~\cite[Appendix A.2]{Goldreich:14}), which by a more careful
accounting enables us to avoid paying the cost of the naive averaging
argument. Instead, by considering logarithmically many different
possible ``scales'' $\eps_j$ of distance between $\p_S$ and uniform,
each with its own probability $\alpha_j$ of occurring for a random
choice of $S$ and by keeping track of the various costs that ensue,
we can get rid of this extra $1/\eps$ factor. This only leaves us
with an extra $\log(1/\eps)$ factor to handle, which arises due to the
union bound. To omit this extra factor, we refine our argument by
allocating different failure probabilities to every different test
conducted, depending on the respective scale used. By choosing
these probabilities so that their sum is bounded by a constant (for
instance, by setting $\delta_i \propto 1/j^2$), we can still ensure
overall correctness  with a high, constant probability, while the
extra cost $\log(1/\delta_j)$ for the $j$-th scale considered is
subsumed in the accounting using Levin's strategy. This finally 
yields the desired bound of $\ns=O( \ab/(2^{\numbits/2}\eps^2) )$
for the number of players.\smallskip

For the lower bound, we take recourse to Le Cam's two-point
method. Specifically, we use the construction proposed by
Paninski~\cite{Paninski:08} for proving the lower bound for sample complexity 
in the collocated setting. Roughly, we consider the problem
of distinguishing the uniform distribution from a randomly selected
element of the family of distributions consisting each of a perturbation of
uniform distribution where the probabilities conditioned on pairs of
consecutive elements are changed from unbiased coins to coins of bias
$\eps$. However, Paninski's original treatment does not suffice now as we need to
handle the total variation distance between the distribution induced on the message sequence $M$ 
under the uniform distribution and a uniform mixture of the pairwise perturbed distributions.
This is further bounded above by the average distance between the message distribution under uniform input and under the pairwise perturbed input.
In fact, treating public randomness as a common observation in both settings, it suffices to obtain a worst-case bound for deterministic inference protocols. 
Capitalizing on the fact that both distributions of messages are in this case product distributions, we can show that this average distance for deterministic protocols is bounded above by $\sqrt{\ns(2^\numbits\eps^2)/\ab}$, which leads to a lower bound of
$\ns=\Omega(\ab/(2^\numbits\eps^2))$. 

The bound obtained above is tight for $\numbits=1$, but is sub-optimal in general .  
To refine this bound further, instead of considering the average
distance between the distributions, we need to carefully analyze the
distance of uniform from the average. However, this quantity is not
amenable to standard bounds for total variation distance in terms of
Kullback--Leibler divergence and Hellinger distance devised to handle product
distributions, as the average ``\no-distribution'' is not itself a product distribution anymore. Instead, we take recourse to a technique used 
in~\cite{Paninski:08}, building on~\cite{Pollard:2003}, that uses a
$\chi^2$-distance bound and proceeds by expanding the product
likelihood ratios in multilinear form. Obtaining the final bound requires a
sub-Gaussian bound for a log-moment generating function, which is
completed by using a standard transportation method technique.

\subsection{Related prior work}
For clarity, we divide our discussion of the relevant literature into 
three parts: The first discussing the literature in the collocated setting, and the next two the
prior work concerned with distributed inference and simulation.

\paragraph{Inference in the collocated setting.} 
The goodness-of-fit problem is a classic hypothesis testing problem
with a long line of work in statistics, but the finite-alphabet variants of interest to us were
first considered by Batu et al.~\cite{BFRSW:00} and Goldreich,
Goldwasser, and Ron~\cite{GGR:98} under {distribution testing}, which
in turn evolved as a a branch of
{property testing}~\cite{RS:96,GGR:98}, a field of theoretical
computer science focusing on ``ultra-fast'' (sublinear-time)
algorithms for decision problems. 
 Distribution testing has received much attention in the past decade,
with considerable progress made and tight answers obtained for many
distribution properties (see
e.g.~surveys~\cite{Rubinfeld:12:Survey,Canonne:15:Survey,BW:17} and
references within for an overview). Most pertinent to our work is  {uniformity testing}~\cite{GRexp:00,Paninski:08,DGPP:17}, {the} prototypical distribution testing problem with applications to many other property testing problems~\cite{BKR:04,DKN:15,Goldreich:16,CDGR:17:journal}. 

Another inference question in the finite-alphabet setting that has
received a lot of attention in recent years is that of {functional
estimation}, where the goal is to estimate a function of the
underlying distribution.
Recent advances in this area have pinpointed the optimal rates for
functionals such as entropy, support size, and many others (see for
instance~\cite{Paninski:04,RRSS:09,ValiantValiant:11,JVW:14,WY:14,AOST:17,JVHW:17,ADOS:17}  for some of the most recent work).   

The extreme case of functional estimation is the fundamental question
of distribution learning, namely the classic density estimation
problem in statistic where the goal is estimate the entire distribution. With more than a century of history (see the books~\cite{Tsybakov:09, DL:01}), distribution learning has recently seen a surge of interest in the computer science community as well, with a focus on discrete domains (see e.g.~~\cite{Diakonikolas:CRC} for a survey of these recent developments).

\paragraph{Inference in the distributed setting.}
As previously mentioned, distributed hypothesis testing and estimation
problems were first studied in information theory, albeit in a
different setting than what we consider~\cite{AhlCsi86, Han87,
HanAmari98}. The focus in that line of work has been to characterize
the trade-off between asymptotic error exponent and communication rate
per sample. Recent extensions have considered interactive
communication~\cite{XiangKimISIT13}, more complicated communication
models~\cite{WiggerTimo16}, and even no
communication~\cite{Watanabe17}. The communication
complexity of independence testing for fixed error has been considered
recently in~\cite{SahasTyagiISIT:18}.

Closer to our work is distributed parameter estimation and functional
estimation that has gained significant attention in recent years (see
e.g.~\cite{DJW:13,GMN:14,GMNW:16,Watson:18}). In these works, much
like our setting, independent samples are distributed across players,
which deviates from the information theory setting described above
where each player observes a fixed dimension of each independent
sample. However, the communication model in these results differs from
ours, and the communication-starved regime we consider has not been
studied in these works.

Our communication model is the same as that considered
in~\cite{HOW:18}, which establishes, under some mild assumptions, a
general lower bound for estimation of model parameters under squared
$\lp[2]$ loss. Although the problems considered in our work differ
from those in~\cite{HOW:18} and the results are largely incomparable,
we build on a result of theirs to establish one of our lower
bounds.\footnote{In fact, the same communication model was proposed in
a different work, presented at the 2018 ITA
workshop~\cite{HMOW:18}. In this talk, the authors described a
protocol for learning discrete distributions under $\lp[1]$ error,
with a number of players that they showed to be optimal up to constant
factors.}

The problem of distributed density estimation, too, has gathered
recent interest in various statistical
settings~\cite{BPCPE:11,BBFM:12,ZDJW:13,Shamir:14,DGLNOS:17,HOW:18,XR:18,ASZ:18}. Our
work is closest to two of these: The aforementioned~\cite{HOW:18,
HMOW:18} and~\cite{DGLNOS:17}. The latter considers both $\lp[1]$
(total variation) and $\lp[2]$ losses, although in a different setting
than ours. Specifically, they study an interactive model where the
players do not have any individual communication constraint, but
instead the goal is to bound the {total} number of bits communicated
over the course of the protocol. This difference in the model leads to
incomparable results and techniques (for instance, the lower bound for
learning $\ab$-ary distributions in our model is higher than the upper
bound in theirs).

Our current work further deviates from this prior literature, since we
consider distribution testing as well and examine the role of
public-coin for SMPs. Additionally, a central theme here is the
connection to distribution simulation and its limitation in enabling
distributed testing. In contrast, the prior work on distribution
estimation, in essence, establishes the optimality of simple protocols
that rely on distributed simulation for inference. (We note that
although recent work of~\cite{BCG:17} considers both communication
complexity and distribution testing, their goal and results are very
different~--~indeed, they explain how to leverage on negative results
in the standard SMP model of communication complexity to obtain sample
complexity lower bounds in {collocated} distribution testing.)

\paragraph{Distributed simulation.} Problems related to joint
simulation of probability distributions have been the object of focus
in the information theory and computer science literature. 
Starting with the works of G{\'a}cs and K{\"o}rner~\cite{GK:73} and
Wyner~\cite{Wyner:75} where the problem of generating shared
randomness from correlated randomness and vice-versa, respectively,
were considered, several important variants have been studied such as
{correlated sampling}~\cite{Broder:97,KT:02,Holenstein:07,BGHKRS:16}
and {non-interactive simulation}~\cite{KA:12,GKS:16,DMN:18}.
Yet, our problem of {exact} simulation of a single (unknown)
distribution with communication constraints from multiple parties has
not been studied previously to the best of our knowledge.

\subsection{Organization}
We begin by setting notation and recalling some useful definitions and
results in~\cref{sec:preliminaries}, before formally introducing our
distributed model in~\cref{sec:model}. \cref{sec:simulation}
introduces the question of distributed simulation and contains our
protocols and impossibility results for this problem
(specifically,~\cref{theo:distributed:simulation:impossiblity} 
and~\cref{theo:distributed:simulation} are proven
in~\cref{ssec:sampling:impossibility} and
in~\cref{ssec:sampling:possibility}). In~\cref{sec:simulation:inference}, we consider the
relation between distributed simulation and (private-coin)
distribution inference. Namely, we explain
in~\cref{sec:sampling:applications} how a distributed simulation
protocol immediately implies protocols for every inference task
(\cref{theo:inference:simulation:informal}) and instantiate this result for two concrete examples of distribution learning and uniformity
testing. \cref{sec:conjecture} is concerned with
~\cref{conj:flying:poney}: ``Is inference {via} distributed simulation optimal in general?'' After rigorously formalizing this question, we answer it in the negative in~\cref{theo:refuting:fpc}.

The subsequent section,~\cref{sec:uniformity}, focuses on the problem
of uniformity testing and contains the proofs of the upper and lower
bounds of~\cref{theo:uniformity:shared:randomness} (as
previously mentioned, we provide there two proofs of the upper bound
using different protocols, with a simple, albeit randomness-heavy, protocol, and a more involved,
randomness-savvy one).

Although we rely throughout on the formal description of our model
given in~\cref{sec:model}, the other sections are self-contained and
can be read independently.

\section{Preliminaries}\label{sec:preliminaries}
We write $\log$ (resp. $\ln$) for the binary (resp. natural) logarithm, and $[\ab]$ for the set of integers $\{1,2,\dots,\ab\}$. Given a fixed (and known) discrete domain $\domain$ of size $\ab$, we denote by $\distribs{\domain}$ the set of probability distributions over $\domain$, i.e.,
  \[
      \distribs{\domain} = \setOfSuchThat{ \p\colon\domain\to[0,1] }{ \normone{\p}=1 }\,.
  \]
  A \emph{property of distributions} over $\domain$ is a subset $\property \subseteq\distribs{\domain}$. Given $\p\in\distribs{\domain}$ and a property $\property$, the distance from $\p$ to the property is defined as
  \begin{equation}
      \totalvardist{ \p }{\property} \eqdef \inf_{\q\in\property} \totalvardist{ \p }{ \q }
  \end{equation}
  where $\totalvardist{ \p }{ \q } = \sup_{S\subseteq\domain} \left(\p(S)-\q(S)\right)$ for $\p,\q\in\distribs{\domain}$, is the \emph{total variation distance} between $\p$ and $\q$. For a given parameter $\eps\in(0,1]$, we say that $\p$ is \emph{$\eps$-close} to $\property$ if $\totalvardist{ \p }{\property}\leq \eps$; otherwise, we say that $\p$ is \emph{$\eps$-far} from $\property$. For a discrete set $\domain$, we write $\uniformOn{\domain}$ for the uniform distribution on $\domain$, and will sometimes omit the subscript when the domain is clear from context. We indicate by $x\sim\p$ that $x$ is a sample drawn from the distribution $\p$.
  
In addition to total variation distance, we shall rely in some of our proofs on the $\chi^2$ and Kullback--Leibler (KL) divergences between discrete distributions $\p,\q\in\distribs{\domain}$, defined respectively as $\chi^2(\p,\q) := \sum_{x\in\domain} \frac{(\p_x-\q_x)^2}{\q_x(1-\q_x)}$ and $D(\p\|\q) := \sum_{x\in\domain} \p_x \ln\frac{\p_x}{\q_x}$.
  
We use the standard asymptotic notation $\bigO{\cdot}$, $\bigOmega{\cdot}$, and $\bigTheta{\cdot}$; and will sometimes write $a_n \lesssim b_n$ to indicate that there exists an absolute constant $c>0$ such that $a_n \leq c\cdot b_n$ for all $n$. Finally, we will denote by $a\wedge b$ and $a\vee b$ the minimum and maximum of two number $a$ and $b$, respectively. 
 
\section{Communication, Simulation, and Inference Protocols}\label{sec:model}
We set the stage by describing the communication protocols we study for both the distributed simulation and the distributed inference problems. 
Throughout the paper, we restrict to simultaneous communication models with private and public randomness. 

Formally, $n$ players observe samples $X_1,\dots, X_{\ns}$ with player $i$ given access to $X_i$. The samples are assumed to be generated independently from an unknown distribution $\p$.  In addition, player $i$ has access to uniform randomness $U_i$ such that $(U_1, \ldots, U_{\ns})$ is jointly independent of $(X_1,\dots,X_{\ns})$. An $\numbits$-bit \emph{simultaneous message-passing} (SMP) communication protocol $\pi$ for the players consists of $\{0,1\}^{\numbits}$-valued mappings $\pi_1, \dots, \pi_{\ns}$ where player $i$ sends the message $M_i=\pi_i(X_i, U_i)$. 
The message $M=(M_1, \dots, M_{\ns})$ sent by the players is received by a common referee. Based on the assumptions on the availability of the randomness $(U_1, \dots, U_{\ns})$ to the referee and the players, three natural classes of protocols arise:
\begin{enumerate}
  \item 
  \textit{Private-coin protocols:} $U_1,  \dots, U_{\ns}$ are mutually independent and  unavailable to the referee.
  \item 
  \textit{Pairwise-coin protocols:} $U_1, \dots, U_{\ns}$ are mutually independent and  available to the referee.
  \item
   \textit{Public-coin protocols:} All player and the referee have access to $U_1, \dots, U_{\ns}$. 
\end{enumerate} 
In this paper, we focus only on private- and public-coin communication protocols; an interesting question is distinguishing pairwise-coin protocols from the other two. For the ease of presentation, we represent the private randomness
communication $f_i(x_i, U_i)$ using a channel $W_i\colon \cX \to \{0,1\}^{\numbits}$
where player $i$ upon observing $x_i$ declares $y$ with probability
$W_i(y|x_i)$. Also, for public-coin protocols, we can assume without loss
of generality that $U_1 = U_2 = \dots = U_{\ns} = U$. 
 
\paragraph*{Distributed simulation protocols.} An $\numbits$-bit \emph{simulation} $\cS = (\pi, \delta)$ of $\ab$-ary distributions using $\ns$ players consists of an $\numbits$-bit SMP $\pi$ and a decision map $\delta$ comprising mappings $\delta_x\colon (M,U) \mapsto [0,1]$ such that for each message $m$ and randomness $u$, 
\[
\sum_x\delta_x(m,u)\leq 1. 
\] 
Upon observing the message $M=(M_1, \dots, M_{\ns})$ and (depending on the type of protocol) randomness $U=(U_1,
\dots, U_{\ns})$, the referee declares the random sample $\hat X = x$ with
probability $\delta_x(M, U)$ or declares an abort symbol $\perp$ if no
$x$ is selected. For concreteness, we assume that the random variable
$\hat X$ takes values in $\cX\cup\{\perp\}$ with $\{\hat X=\perp\}$
corresponding to the abort event. When $\pi$ is a private, pairwise, or public-coin protocol, respectively, the simulation $\cS$ is called private, pairwise, or public-coin simulation.\smallskip

\noindent A simulation $\cS$ is an \emph{$\alpha$-simulation} if for every $\p$
\[
\probaDistrOf{\p}{\hat X =x \mid \hat X \neq \perp} = \p_x, \quad \forall\, x\in \cX,
\]
and the abort probability satisfies 
\[
\probaDistrOf{\p}{\hat X =\perp} \leq \alpha.
\] 
When the
probability of abort is \emph{zero}, $\cS$ is termed a \emph{perfect
simulation}.
   
\paragraph*{Distributed inference protocols.} We give a general
definition of distributed inference protocols that is applicable
beyond the use-cases considered in this work. An inference
problem $\task$ can be described by a 
tuple $(\class, \cX, \cE, L)$ where $\class$ 
denotes a family of distributions on the alphabet $\cX$, $\cE$ a class
of allowed estimates for elements of $\class$ (or their functions), and 
$L\colon \class\times\cE \to \R_+^q$ is a loss function that evaluates the accuracy of our
estimate $e\in \cE$ when $\p\in \class$ was the ground truth. 

An $\numbits$-bit \emph{distributed inference protocol} $\cI =
(\pi, e)$ for the
inference problem $(\class, \cX, \cE, L)$ consists of an $\numbits$-bit SMP $\pi$ and an estimator $e$
available to the referee who, upon observing the message $M = \pi(X^n,
U)$ and the randomness $U$, estimates the unknown $\p$ as $e(M, U)\in \cE$. As
before, we say that a private, pairwise, or public-coin inference protocol, respectively, uses a private, pairwise, or public-coin communication protocol $\pi$.

For $\vec{\gamma}\in\R_+^q$, an inference protocol $(\pi, e)$ is a
\emph{$\vec{\gamma}$-inference protocol} if 
\[
\bE{\p}{L_i(\p, e(M, U))} \leq \gamma_i, \quad\, \forall 1\leq i \leq q.
\]
We instantiate the abstract definition above in two illustrative
examples that we will pursue in this paper.

\begin{example}[Distribution learning] Consider  the
problem $\Learning_{\ab}(\eps,\delta)$ of estimating a
$\ab$-ary distribution $\p$ by observing independent samples from it,
namely the finite alphabet distribution learning problem. This problem
is obtained from the general formulation above by setting $\cX$
to be $[\ab]$, $\class$ and $\cE$ both to be the $(\ab-1)$-dimensional
probability simplex $\class_\ab$, and $L(\p, \hat{\p})$ as follows:
\[
L(\p, \hat{\p}) = \indic{\totalvardist{\p}{\hat {\p}}> \eps}.
\]
For this case, we term the $\delta$-inference protocol an $\numbits$-bit
$(\ab, \eps, \delta)$-\emph{learning protocol} for $\ns$ player. 
\end{example}

\begin{example}[Uniformity testing] In the uniformity testing problem
$\Testing_{\ab}(\eps, \delta)$, our goal is to determine whether $\p$ is
the uniform distribution $\uniformOn{\ab}$ over $[\ab]$ or if it
satisfies $\totalvardist{\p}{\uniformOn{\ab}} >\eps$. This can be
obtained as a special case of our general formulation by setting
$\cX = [\ab]$, $\class$ to be set containing $\uniformOn{\ab}$ and all
$\p$ satisfying $\totalvardist{\p}{\uniformOn{\ab}} >\eps$, $\cE
= \{0,1\}$, and loss function $L$ to be
\[
L(\p, b) = b\cdot \indic{\p= \uniformOn{\ab}}+
(1-b)\cdot \indic{\p\neq \uniformOn{\ab}}, \quad b \in \{0,1\}.
\]
For this case, we term the $\delta$-inference protocol an $\numbits$-bit
$(\ab, \eps, \delta)$-\emph{uniformity testing protocol} for $\ns$ players. Further, for simplicity we will refer to $(\ab, \eps, 1/3)$-uniformity testing protocols simply as $(\ab, \eps)$-\emph{uniformity testing protocols}.
\end{example}

Note that distributed variants of several other inference problems
such as that of estimating functionals of distributions and parametric
estimation problems can be included as instantiations of the
distributed inference problem described above. 

We close by noting that while we have restricted to the SMP model of
communication, the formulation can be easily extended to include
interactive communication protocols where the communication from each
player can be heard by all the other players (and the referee), and in
its turn, a player communicates using its local observation and the
communication received from all the other players in the past. A formal
description of such a protocol can be given in the form of a multiplayer
protocol tree \emph{\`a la} \cite{KushilevitzNisan97}. However, such
considerations are beyond the scope of this paper.

\paragraph*{A note on the parameters.} It is immediate to see that for $\numbits \geq \log \ab$ the distributed and centralized settings are equivalent, as the players can simply send their input sample to the referee (thus, both upper and lower bounds from the centralized setting carry over).

\section{Distributed Simulation}\label{sec:simulation}
In this section, we consider the distributed simulation problem described in the previous section. We start by considering the more ambitious problem of perfect simulation and 
show that when $\numbits< \log \ab$, perfect simulation using $\ns$ players is impossible using any $\ns$. Next, we consider $\alpha$-simulation for a constant $\alpha\in(0,1)$ and 
exhibit an $\numbits$-bit $\alpha$-simulation of $\ab$-ary distributions using $O({\ab}/{2^\numbits})$ players. In fact, by drawing on a reduction from distributed distribution learning, we will show in~\cref{sec:sampling:applications} that this is the least number of players required (up to a constant factor) for $\alpha$-simulation for any $\alpha\in(0,1)$.

\subsection{Impossibility of perfect simulation when $\numbits<\log \ab$}
\label{ssec:sampling:impossibility}
We begin with a proof of impossibility which shows that any simulation that works for all points in the interior of 
the $(\ab-1)$-dimensional probability simplex must fail for a distribution on the boundary.
Our main result of this section is the following:
\begin{theorem}\label{theo:sampling:impossibility:non:adaptive:k:l}
For any $n\geq 1$, there exists no $\numbits$-bit perfect simulation
of $\ab$-ary distributions using $\ns$ players unless
$\numbits \geq \log \ab$. 
\end{theorem}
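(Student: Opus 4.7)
The plan is to proceed by contradiction: assume an $\numbits$-bit perfect simulation $\cS = (\pi, \delta)$ of $\ab$-ary distributions using $\ns$ players exists for some $\numbits < \log \ab$. The backbone is a local pigeonhole observation at each player, followed by a global ``zero-coordinate'' trick at the referee. Concretely, for each player $i$, fold the private randomness into the channel and write $W_i(m\mid x) := \probaOf{\pi_i(x, U_i) = m}$. Since $W_i$ maps an alphabet of size $\ab$ into $\{0,1\}^{\numbits}$ with $\ab > 2^\numbits$, there must exist a message $m_i^\ast \in \{0,1\}^\numbits$ and a set $S_i \subseteq [\ab]$ with $|S_i| \geq 2$ such that $W_i(m_i^\ast \mid x) > 0$ for every $x \in S_i$; otherwise, picking for each $x$ any $m$ with $W_i(m\mid x)>0$ would yield an injection from $[\ab]$ into $\{0,1\}^\numbits$, which is impossible.

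Next I would set $M^\ast = (m_1^\ast, \dots, m_\ns^\ast)$ and exploit the prohibition on aborting. Under $\uniformOn{[\ab]}$, every $x \in S_i$ has positive probability, so each player sends $m_i^\ast$ with positive probability, and by independence $M^\ast$ itself is emitted with positive probability. Because $\cS$ is perfect, we must have $\sum_x \delta_x(M^\ast) = 1$, so there exists $i^\ast \in [\ab]$ with $\delta_{i^\ast}(M^\ast) > 0$. Now switch to the distribution $\p'$ uniform on $[\ab]\setminus\{i^\ast\}$, so $\p'_{i^\ast} = 0$. Since $|S_i| \geq 2$, each $S_i$ still meets $\supp{\p'}$, hence under $\p'$ each player still sends $m_i^\ast$ with positive probability, which gives $\probaDistrOf{\p'}{M = M^\ast} > 0$ and therefore $\probaDistrOf{\p'}{\hat X = i^\ast} \geq \probaDistrOf{\p'}{M = M^\ast}\cdot \delta_{i^\ast}(M^\ast) > 0$. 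This contradicts perfect simulation, which demands $\probaDistrOf{\p'}{\hat X = i^\ast} = \p'_{i^\ast} = 0$.

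The main obstacle is conceptual rather than computational: one must faithfully translate the intuition ``there are not enough messages to encode everything'' into a pigeonhole statement about randomized channels (working with the positive-probability support of each $W_i(\cdot\mid x)$), and then notice that the forbidden abort leaves the referee no way out when the same tuple $M^\ast$ is feasible both under a distribution that places mass on $i^\ast$ and under one that does not. I expect the argument to extend to public-coin protocols by conditioning on any value $u$ of the shared randomness that occurs with positive probability: given $u$, the induced per-player channel $W_i^{(u)}$ still has $\ab > 2^\numbits$ inputs, so the same collision/zero-coordinate construction applies and the contradiction goes through verbatim.
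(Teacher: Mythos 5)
Your proof is correct and takes essentially the same approach as the paper's: apply the pigeonhole principle at each player to find a message tuple $M^\ast$ that can be emitted from at least two domain elements per coordinate, force the referee to output some $i^\ast$ on $M^\ast$ because aborting is forbidden, and then derive a contradiction from a distribution with $\p_{i^\ast}=0$ under which $M^\ast$ still occurs with positive probability. Your phrasing via per-player channels $W_i$ and a positive-probability support (rather than fixing the randomness and applying pigeonhole to the resulting deterministic map, then relabeling $x_i\leftrightarrow x_i'$ to avoid the referee's output as the paper does) is a slightly cleaner way to absorb private randomness, but the argument is the same in substance.
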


\begin{proof}
Let $\cS=(\pi, \delta)$ be an $\numbits$-bit perfect simulation for
$\ab$-ary distributions using $\ns$ players. Suppose that
$\numbits< \log \ab$. We show a contradiction for any such public-coin
simulation $\cS$. Fix a realization $U=u$ of the public randomness. By
the pigeonhole principle we can find a message vector $m=(m_1, \dots, m_n)$ and distinct elements $x_i,
x_i'\in [\ab]$ for each $i\in [n]$ such that 
\[
\pi_i(x_i, u) = \pi_i(x_i',u) = m_i.
\]
Note that the probability of declaring $\bot$ for a public-coin simulation
must be $0$ for every $\ab$-ary distribution. Therefore, since the
message $m$ occurs with a positive probability under a distribution
$\p$ with $\p_{x_i}>0$ for all $i$, the referee must declare an output $x\in [\ab]$ with 
positive probability when it receives $m$, i.e., there exists $x\in
[\ab]$ such that $\delta_x(m, u)>0$. Also, since $x_i$ and $x_i'$ are
distinct for each $i$, we can assume without loss of generality that
$x_i\neq x$ for each $i$.  Now, consider a distribution $\p$ such that
$\p_x = 0$ and $\p_{x_i}> 0$ for each $i$. For this case, the referee
must never declare $\p_x$, i.e., $\probaOf{\hat X =x} = 0$. In
particular, $\probaCond{\hat X=x }{ U=u}$ must be $0$, which can only happen
if $\probaCond{M=m}{U=u}=0$. But since $\p_{x_i}>0$ for each $i$, 
\[
\probaCond{M=m}{U=u} \geq \prod_{i=1}^n\p_{x_i} > 0\,,
\]
which is a contradiction.
\end{proof}
Note that the proof above shows, as stated before, that any perfect simulation that
works for every $\p$ in the interior of the $(\ab-1)$-dimensional
probability simplex, must fail at one point on the boundary of the
simplex. In fact, a much stronger impossibility result holds. We show
next that for $\ab = 3$ and $\numbits=1$, we cannot find a perfect simulation
that works in the neighborhood of any point in the interior of the
simplex. 

\begin{theorem}\label{theo:sampling:impossibility:non:adaptive:k3}
For any $\ns\geq 1$, there does not exist any $\numbits$-bit perfect
simulation of $3$-ary distributions unless $\numbits\geq 2$, even
under the promise that the input distribution comes from an open set
in the interior of the probability simples.
\end{theorem}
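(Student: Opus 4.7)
The plan is to reduce this strengthened impossibility to Theorem~\ref{theo:sampling:impossibility:non:adaptive:k:l} by polynomial continuation. Suppose towards contradiction that some $1$-bit perfect simulation $\cS=(\pi,\delta)$ of $3$-ary distributions with $n$ players achieves zero abort probability on an open subset $\mathcal{O}$ of the interior of the probability simplex. Parametrize the simplex by $(\p_1,\p_2)\in\R^2$ with $\p_3 = 1-\p_1-\p_2$, and represent each player's randomized communication by a channel $W_i\colon[3]\to\{0,1\}$ (for a public-coin $\cS$, proceed after averaging over the shared randomness). The probability of any message $m=(m_1,\ldots,m_n)\in\{0,1\}^n$ is then
\[
P_\p(m) \;=\; \prod_{i=1}^n \Bigl(\p_1 W_i(m_i\mid 1) + \p_2 W_i(m_i\mid 2) + (1-\p_1-\p_2)\,W_i(m_i\mid 3)\Bigr),
\]
a polynomial in $(\p_1,\p_2)$ of degree at most $n$, and consequently the perfect-simulation identity $\sum_m P_\p(m)\,\delta_x(m) = \p_x$ is itself a polynomial equation in $(\p_1,\p_2)$ for every $x\in[3]$.

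By the identity theorem for polynomials, any polynomial on $\R^2$ vanishing on a nonempty open set vanishes identically. Under our parametrization, $\mathcal{O}$ is an open subset of $\R^2$ on which the identity $\sum_m P_\p(m)\,\delta_x(m) - \p_x = 0$ is assumed to hold, so it extends to every point of the simplex, including the boundary. Summing the extended identity over $x\in[3]$ yields $\sum_m P_\p(m)\sum_x \delta_x(m) = 1$; combined with $\sum_m P_\p(m) = 1$ and the built-in constraint $\sum_x \delta_x(m)\leq 1$, this forces $\sum_x \delta_x(m) = 1$ on the support of $P_\p$. Hence $\cS$ is a perfect simulation for \emph{every} $3$-ary distribution. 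Applying Theorem~\ref{theo:sampling:impossibility:non:adaptive:k:l} with $\ab=3$ and $\numbits=1<\log 3$ now yields the desired contradiction.

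The main obstacle I anticipate is handling the public-coin case cleanly: the identity theorem cannot be applied conditional on a fixed realization $U=u$, since the simulation identity need not hold pointwise in $u$ on $\mathcal{O}$; instead, it should be applied to the marginal polynomial $\sum_u\Pr(U=u)\sum_m P_\p^u(m)\,\delta_x^u(m)$, which remains polynomial in $(\p_1,\p_2)$. A secondary, topological check is that an open subset of the interior of the $2$-simplex is open in $\R^2$ under the parametrization by $(\p_1,\p_2)$—immediate since $\p_3$ is determined by $\p_1+\p_2$—so the identity theorem applies as stated.
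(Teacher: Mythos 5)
Your proof is correct, and it takes a genuinely different route from the paper's. Both approaches exploit that the output distribution is a polynomial of degree at most $\ns$ in the free coordinates of $\p$, but they diverge from there. The paper first derandomizes via \cref{lemma:sampling:impossibility:non:adaptive:deterministic} (replacing each randomized player by two deterministic ones, an argument special to $\cX=\{0,1,2\}$ and $\numbits=1$), then analyzes the coefficients of the resulting polynomial to conclude that the referee places zero mass on every output upon receiving the all-zero message, and hence aborts with positive probability whenever $\p$ lies in the interior. You instead parametrize the simplex by two free coordinates, apply the identity theorem to extend the perfect-simulation identity from the promised open set to the entire closed simplex, and then invoke \cref{theo:sampling:impossibility:non:adaptive:k:l} as a black box. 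Your route is more modular: it works directly with randomized channels (no derandomization lemma), your note about averaging over the public coin $U$ before invoking the identity theorem is precisely the right fix for the public-coin case, and the argument generalizes immediately to arbitrary $\ab$ and $\numbits$~--~something the paper explicitly remarks is unclear for its own proof, since the derandomization lemma does not extend beyond $\ab=3$, $\numbits=1$. One minor simplification: once the extended identity $\sum_m P_\p(m)\,\delta_x(m)=\p_x$ holds for every $\p$ in the closed simplex, summing over $x\in[3]$ already yields $\probaDistrOf{\p}{\hat X\neq\bot}=1$ everywhere, so the intermediate deduction that $\sum_x\delta_x(m)=1$ on the support of $P_\p$ can be omitted.
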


Before we prove the theorem, we show that there is no loss of
generality in restricting to~\emph{deterministic} protocols, namely
protocols where each player uses a deterministic function of its
observation to communicate. The high-level argument is relatively simple: By
replacing player $j$ by two players $j_1,j_2$, each with a suitable
deterministic strategy, the two $1$-bit messages received by the
referee will allow him to simulate player $j$'s original randomized
mapping. 

\begin{lemma}\label{lemma:sampling:impossibility:non:adaptive:deterministic}
For $\cX=\{0, 1, 2\}$, suppose there exists a $1$-bit perfect
simulation $S'=(\pi', \delta')$ with $\ns$ players. Then, there is a $1$-bit perfect
simulation $S=(\pi,\delta)$ with $2\ns$ players such that, for each
$j\in [2\ns]$, the communication $\pi$ is deterministic, i.e., for
each realization $u$ of public randomness
\[
\pi_j(x_j, u) =\pi_j(x), \qquad x\in \cX\,.
\]
\end{lemma}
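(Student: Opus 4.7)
The plan is to simulate each randomized $1$-bit message of the original protocol by the two $1$-bit deterministic messages of a pair of new players, together with the referee's own randomness. Once each simulated message $\tilde{M}_j$ agrees in distribution with the original $M_j$, independence across players will let the new referee run the original decision map $\delta'$ to produce a sample distributed as $\p$.

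Let $W_j(m \mid x) = \Pr[\pi'_j(x, U_j) = m]$ denote the channel induced by the randomized mapping of player $j$. For each $j$, I would attach two deterministic mappings $\pi_{j_1}, \pi_{j_2}\colon \cX \to \{0,1\}$ and, at the referee, a randomized rule that outputs $\tilde{M}_j = 1$ with probability $r_{b_1, b_2}$ given the bits $(B_1, B_2) = (\pi_{j_1}(X_{j_1}), \pi_{j_2}(X_{j_2}))$. Writing $s_i$ for $\p(\{x : \pi_{j_i}(x) = 1\})$, a direct expansion shows that $\Pr[\tilde{M}_j = 1]$ is a polynomial in $(s_1, s_2)$ of total degree two, with coefficient of $s_1 s_2$ equal to $r_{0,0} + r_{1,1} - r_{0,1} - r_{1,0}$. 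Imposing the affine constraint $r_{0,0} + r_{1,1} = r_{0,1} + r_{1,0}$ kills that monomial, leaving an expression linear in $\p$; matching it to $\sum_x \p_x W_j(1 \mid x)$ for every $\p$ is equivalent to the three conditions $W_j(1 \mid x) = r_{(\pi_{j_1}(x), \pi_{j_2}(x))}$ as $x$ ranges over $\cX$.

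The crux of the proof will be the feasibility check: since $|\cX| = 3$ and there are four quadrants in $\{0,1\}^2$, one quadrant is unused, and its $r$-value is forced by the linearity constraint. Writing the three target values $W_j(1\mid 0), W_j(1\mid 1), W_j(1\mid 2)$ in sorted order as $q_a \leq q_b \leq q_c$, my plan is to assign the smallest to $(0,0)$, the median to $(0,1)$, and the largest to $(1,1)$; the linearity constraint then forces $r_{1,0} = q_a + q_c - q_b$. A short sandwich gives $q_a \leq q_a + q_c - q_b \leq q_c$ (using $q_b \geq q_a$ and $q_b \leq q_c$), so $r_{1,0}$ is a valid probability and all four $r_{b_1,b_2}$ lie in $[0,1]$, yielding a bona fide randomized rule for the referee.

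To assemble the full protocol, the new referee would draw independent slices $V_1, \ldots, V_n$ of its randomness from $U$ and compute each $\tilde{M}_j$ with the rule above; the $\tilde{M}_j$'s are mutually independent because the pairs $(X_{j_1}, X_{j_2})$ are disjoint across $j$ and the $V_j$'s are fresh, and each has the correct marginal by the preceding step. Hence the joint distribution of $(\tilde{M}_1, \ldots, \tilde{M}_n)$ coincides with that of the original messages $(M_1, \ldots, M_n)$, and applying $\delta'$ to this tuple (with further randomness carved out of $U$) yields an output whose distribution equals $\p$. The main obstacle is the feasibility claim of the previous paragraph, which genuinely uses $|\cX| = 3$; once it is in hand, the rest of the argument is independence bookkeeping.
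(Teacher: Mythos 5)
Your approach and the paper's share the same skeleton: replace each original player $j$ by a pair of deterministic players, expand the referee's output probability as a degree-$2$ polynomial in $\p$, kill the quadratic term via the constraint $r_{0,0}+r_{1,1}=r_{0,1}+r_{1,0}$, and solve for the three remaining degrees of freedom. Your feasibility check is in fact a nice improvement on the paper's: you choose the two deterministic maps per player so that the \emph{median} of the three target values sits diagonally opposite the unused quadrant, and the sandwich $q_a \leq q_a+q_c-q_b \leq q_c$ then always closes the fourth value in $[0,1]$. The paper instead fixes $g_i(x)=\indic{x=i}$ and invokes a ``WLOG $\alpha_u\leq\beta_u+\gamma_u$'' that does not by itself cover every boolean triple; your median assignment is cleaner there.

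However, there is a genuine gap. You work throughout with the averaged channel $W_j(1\mid x)=\Pr_{U_j}[\pi'_j(x,U_j)=1]\in[0,1]$ and draw \emph{fresh} randomness $V_j$ at the referee to realize $r_{b_1,b_2}$. This matches only the \emph{marginal} law of each $M_j$, and your claim that ``the joint distribution of $(\tilde M_1,\dots,\tilde M_\ns)$ coincides with that of the original messages'' holds only when the original protocol is private-coin. The lemma has to cover public-coin protocols (it is invoked to prove an impossibility that the paper explicitly states for public-coin protocols, and in the paper's own construction the combiner $h$ depends on the public string $u$). In the public-coin case all players share $U$, so the tuple $(M_1,\dots,M_\ns)$ is a \emph{mixture} over $U$ of product distributions, not a single product; your $(\tilde M_1,\dots,\tilde M_\ns)$ is a product, and the two generally do not agree even though the marginals do. Worse, the original decision map $\delta'$ takes $(M,U)$ as input, so the new referee must reproduce the conditional law of $M_j$ \emph{given} $U=u$ for each $u$, not the averaged law; feeding your $\tilde M$ together with an independent fresh $U$ into $\delta'$ does not in general return a $\p$-distributed sample. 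The paper sidesteps this by matching the per-$u$ quantities $\alpha_u,\beta_u,\gamma_u\in\{0,1\}$ exactly, with $h(\cdot,\cdot,u)$ depending on $u$; to repair your proof you would need to match conditionally on $u$ as well, at which point the element achieving the median value may change with $u$, and your sorting-based choice of $g_1,g_2$ (which must be fixed, independent of $u$) no longer obviously suffices.
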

\begin{proof}
Consider the mapping $f\colon\{0,1,2\}\times \{0,1\}^\ast\to \{0,1\}$. We will
show that we can find mappings $g_1\colon \{0,1,2\}\to \{0,1\}$,
$g_2\colon \{0,1,2\}\to \{0,1\}$, and
$h\colon \{0,1\}\times \{0,1\} \times \{0,1\}^\ast\to \{0,1\}$ such that for every
$u$ 
\begin{equation}\label{e:deterministic}
\probaOf{f(X, u)=1} = \probaOf{h(g_{1}(X_1), g_{2}(X_2),u)=1},
\end{equation}
where random variables $X_1$, $X_2$, $X$  are independent and
identically distributed and take
values in $\{0,1,2\}$. We can then use this construction to get our claimed simulation $S$
using $2\ns$ players as follows: Replace the communication $\pi_j'(x,u)$ from  player
$j$ with communication  $\pi_{2j-1}(x_{2j-1})$ and
$\pi_{2j}(x_{2j})$, respectively, from two players
$2j-1$ and $2j$, where $\pi_{2j-1}$ and $\pi_{2j}$ correspond to
mappings $g_1$ and $g_2$ above for $f= \pi'_j$. The referee can then
emulate the original protocol using the corresponding mapping $h$ and
using $h(\pi_{2j-1}(x_{2j-1}), \pi_{2j}(x_{2j}), u)$ in place of communication
from player $j$ in the original protocol. Then, since the probability
distribution of the communication does not change, we retain the
performance of $S'$, but using only deterministic communication now.

Therefore, it suffices to establish \eqref{e:deterministic}. For convenience, denote $\alpha_u\eqdef \indic{f(0,u) =
1}$, $\beta_u\eqdef \indic{f(1,u) = 1}$, and
$\gamma_u\eqdef \indic{f(2,u) = 1}$. Assume without loss of generality that
$\alpha_u\leq \beta_u+\gamma_u$; then,
$(\beta_u+\gamma_u-\alpha_u)\in\{0,1\}$. Let $g_i(x)=\indic{x=i}$ for
$i\in\{1,2\}$. Consider the mapping $h$ given by
\[
h(0,0,u)=\alpha_u,\,\, h(1,0,u)=\beta_u, \,\, h(0,1,u)=\gamma_u,\,\, h(1,1,u)=(\beta_u+\gamma_u-\alpha_u)\,.
\]
Then, for every $u$,
\begin{align*}
&\probaOf{h(g_1(X_1), g_2(X_2),
u)=1}
\\
&\qquad= \alpha_u(1-\p_1)(1-\p_2)+ \beta_u(1-\p_1)\p_2
+\gamma_u\p_1(1-\p_2)+(\beta_u+\gamma_u-\alpha_u)\p_1\p_2
\\
&\qquad= \alpha_u(1-\p_1-\p_2)+\beta_u\p_2+\gamma_u\p_1 = \probaOf{f(X,u) = 1}\,,
\end{align*}
which completes the proof.
\end{proof}
We now prove~\cref{theo:sampling:impossibility:non:adaptive:k3}, but
in view of our previous observation, we only need to consider deterministic communication. 
\begin{proofof}{\cref{theo:sampling:impossibility:non:adaptive:k3}}
Suppose by contradiction that there exists such a $1$-bit perfect
simulation protocol $S=(\pi,\delta)$ for $\ns$ players on
$\cX=\{0,1,2\}$ such that $\pi(x,u)=\pi(x)$. Assume that this protocol is correct for all distributions $\p$ in the neighborhood of some $\p^\ast$ in the interior of the simplex. Consider a 
partition the players into three sets $\cS_0$, $\cS_1$, and $\cS_2$, with
\[
    \cS_i \eqdef \setOfSuchThat{ j\in[\ns] }{ \pi_j(i) = 1 }, \qquad  i\in\cX\,.
\]
Note that for deterministic communication the message $M$ is
independent of public randomness $U$. Then, by the definition of perfect simulation, it must be the case that
\begin{align}\label{eq:lowerbound:sampling:output-probability}
\p_x &= \shortexpect_{U} \sum_{m\in\{0,1\}^{\ns}} \delta_x(m,U) \probaCond{ M = m }{ U } = \shortexpect_{U} \sum_{m} \delta_x(m,U) \probaOf{ M = m } \notag\\
     &= \sum_{m} \shortexpect_U[\delta_x(m,U)] \probaOf{ M = m }
\end{align}
for every $x\in\cX$, which with our notation of $\cS_0, \cS_1, \cS_2$
can be re-expressed as
\begin{align*}\label{eq:lowerbound:sampling:output-probability}
\p_x &= \sum_{m\in\{0,1\}^{\ns}} \shortexpect_U[\delta_x(m,U)] \prod_{i=0}^2 \prod_{j\in\cS_i} (m_j \p_i + (1-m_j)(1-\p_i)) \\
    &= \sum_{m\in\{0,1\}^{\ns}} \shortexpect_U[\delta_x(m,U)] \prod_{i=0}^2 \prod_{j\in\cS_i} (1-m_j +(2m_j-1)\p_i)\,,
\end{align*}
for every $x\in\cX$. But since the right-side above is a polynomial in
$(\p_0,\p_1, \p_2)$, it can only be zero in an open set in the interior
if it is identically zero. In particular, the constant term must be zero:
\begin{align*}
    0
    = \sum_{m\in\{0,1\}^{\ns}} \shortexpect_U[\delta_x(m,U)] \prod_{i=0}^2 \prod_{j\in\cS_i} (1-m_j)
    = \sum_{m\in\{0,1\}^{\ns}} \shortexpect_U[\delta_x(m,U)] \prod_{j=1}^\ns
    (1-m_j)\,.
\end{align*}
Noting that every summand is non-negative, this implies that for all
$x\in\cX$ and $m\in\{0,1\}^\ns$,
$\shortexpect_U[\delta_x(m,U)] \prod_{j=1}^\ns (1-m_j) = 0$. In
particular, for the all-zero message $\textbf{0}^\ns$, we get
$ \shortexpect_U[\delta_x(\textbf{0}^\ns,U)] = 0 $ for all $x\in\cX$,
so that again by non-negativity we must have
$\delta_x(\textbf{0}^\ns,u)=0$ for all $x\in\cX$ and randomness
$u$. But the message $\textbf{0}^\ns$ will happen with probability
\[
\probaOf{M=\textbf{0}^\ns} = \prod_{i=0}^2 \prod_{j\in\cS_i} (1-\p_i)
= (1-\p_0)^{\abs{\cS_0}}(1-\p_1)^{\abs{\cS_1}}(1-\p_2)^{\abs{\cS_2}} > 0,
\]
where the inequality holds since $\p$ lies in the interior of the
simplex. Therefore, for the output $\hat{X}$ of the referee we have
\begin{align*}
    \probaOf{ \hat{X} \neq \bot }
    &= \sum_m \sum_{x\in \cX} \shortexpect_U[\delta_x(m,U)]\cdot \probaOf{M=m}
    = \sum_{m\neq \textbf{0}^\ns} \probaOf{M=m} \sum_{x\in \cX} \shortexpect_U[\delta_x(m,U)] \\
    &\leq \sum_{m\neq \textbf{0}^\ns} \probaOf{M=\textbf{0}^\ns} =
    1- \probaOf{M=\textbf{0}^\ns} < 1\,,
\end{align*}
contradicting the fact that $\pi$ is a perfect simulation protocol.
\end{proofof}

\begin{remark}
It is unclear how to extend the proof
of~\cref{theo:sampling:impossibility:non:adaptive:k3} arbitrary
$\ab, \numbits$. In particular, the proof of~\cref{lemma:sampling:impossibility:non:adaptive:deterministic}
does not extend to the general case. A plausible proof-strategy is a 
black-box application of the $\ab=3$, $\numbits=1$ result 
to obtain the general result  using a direct-sum-type argument.
\end{remark}

We close this section by noting that perfect
simulation is impossible even when the communication from each player
is allowed to depend on that from the previous ones.  Specifically,
we show that availability of such an interactivity can at most bring an
exponential improvement in the number of players.
\begin{lemma}
  For every $\ns\geq 1$, if there exists an interactive public-coin
$\numbits$-bit perfect simulation of $\ab$-ary distributions with
$\ns$ players, then there exists a
public-coin $\numbits$-bit perfect simulation of $\ab$-ary
distributions with $2^{\numbits\ns+1}$ players that uses only SMP.
\end{lemma}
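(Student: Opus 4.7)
The plan is to ``flatten'' the interactive protocol by allocating one dedicated SMP player to each internal node of its protocol tree, and then have the referee recover the interactive transcript by traversing that tree along the received messages. Let $\cS=(\pi,\delta)$ be the given interactive public-coin $\numbits$-bit perfect simulation of $\ab$-ary distributions with $\ns$ players. Its execution defines a protocol tree of depth $\ns$ with fan-out $2^{\numbits}$, whose internal nodes are in bijection with transcript prefixes $h\in\bigcup_{i=0}^{\ns-1}\{0,1\}^{\numbits i}$; at the node indexed by a prefix $h$ of length $\numbits(i-1)$, the designated speaker in $\cS$ would transmit the $\numbits$-bit string $\pi(X_i,U,h)$ using its private sample $X_i\sim\p$ and the public coin $U$. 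To produce an SMP protocol, spawn one fresh player per internal node: for each $h$, draw an independent sample $X_h\sim\p$ and let this player transmit non-interactively the $\numbits$-bit string $\pi(X_h,U,h)$, i.e., what the interactive speaker would have said \emph{had the prefix been $h$}. The referee, with access to the shared $U$, then reconstructs the on-path transcript greedily: it reads $M_1$ from the player indexed by the empty prefix, then $M_2$ from the player indexed by $(M_1)$, then $M_3$ from the player indexed by $(M_1,M_2)$, and so on, finally applying the interactive decision map $\delta$ to $(M_1,\ldots,M_{\ns})$ and $U$.

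The correctness argument is short: the $\ns$ samples actually consulted along the reconstructed path belong to $\ns$ distinct SMP players (one per depth along the path), hence they are i.i.d.\ copies of $\p$, exactly as in the interactive setup. Consequently, conditioned on $U$, the joint law of the on-path messages $(M_1,\ldots,M_{\ns})$ is identical to that in $\cS$, and so is the induced distribution of the referee's output. In particular, since $\cS$ is a perfect simulation, our SMP construction also outputs $\hat{X}$ distributed as $\p$ with zero abort probability. The one point to be careful about is the independence of the counterfactual samples across sibling and cousin nodes; this is precisely why we must spawn a fresh player for each internal node rather than try to recycle a player across several nodes of the tree.

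Finally, the player count equals the number of internal nodes of the protocol tree, namely
\[
\sum_{i=0}^{\ns-1}2^{\numbits i}=\frac{2^{\numbits\ns}-1}{2^{\numbits}-1}\le 2^{\numbits\ns}\le 2^{\numbits\ns+1}
\]
for every $\numbits,\ns\ge 1$, matching the bound claimed in the lemma (the slack factor of two comfortably absorbs the degenerate case $\numbits=1$, where $2^{\numbits}-1=1$). There is no substantial obstacle beyond bookkeeping; the essence of the reduction is the simple observation that interaction adds at most an exponential blow-up in the number of messages one needs to pre-compute in parallel, because there are only $2^{\numbits\ns}$ possible transcripts to account for.
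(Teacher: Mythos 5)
Your proof is correct and matches the paper's argument essentially exactly: both flatten the interactive protocol tree, assign one fresh SMP player per internal node to precompute the message that would be sent at that node, and have the referee traverse the tree along the received messages before applying the original decision map. Your write-up is in fact somewhat more explicit than the paper's about why the on-path samples are independent (one node per depth on any root-to-leaf path, each with its own player), which is the key observation guaranteeing the transcript has the right distribution.
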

\begin{proof}
Consider an interactive communication protocol $\pi$ for
distributed simulation with $\ns$ players and
$\numbits$ bits of communication per player.  We can view the overall
protocol as a $(2^{\numbits})$-ary tree of depth $\ns$ where
player $j$ is assigned all the nodes at depth $j$. An execution of the protocol is a path
from the root to the leaf of the tree. Suppose the protocol starting
at the root has reached a node at depth $j$, then the next node at
depth $j+1$ is determined by the communication from player $j$. 
Thus, this protocol can be simulated non-interactively using at most
$ ((2^{\numbits})^{\ns}-1)/(2^\numbits-1) <
        2^{\numbits\ns+1}$ players, where players $(2^{j-1}+1)$ to $2^j$ send all messages
correspond to nodes at depth $j$ in the tree. Then, the
referee receiving all the messages can output the leaf by following
the path from root to the leaf.
\end{proof}

\begin{corollary}
\cref{theo:sampling:impossibility:non:adaptive:k:l,theo:sampling:impossibility:non:adaptive:k3} extend to interactive protocols as well.
\end{corollary}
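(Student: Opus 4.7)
The plan is to leverage the preceding lemma, which converts any interactive $\numbits$-bit perfect simulation with $\ns$ players into a non-interactive SMP $\numbits$-bit perfect simulation with $N \eqdef 2^{\numbits\ns+1}$ players. The key observation is that this reduction preserves both the per-player communication budget and the joint output distribution: for every input distribution $\p$, the simulated SMP protocol produces $\hat X$ with exactly the same law as the original interactive one. Consequently, if the original interactive protocol were a perfect simulation (satisfying $\probaDistrOf{\p}{\hat X = x} = \p_x$ with zero abort probability), then so would be the simulated non-interactive SMP protocol, on the same family of input distributions.

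With this in hand, both corollaries follow by contradiction. For the extension of \cref{theo:sampling:impossibility:non:adaptive:k:l}, suppose there exists an interactive $\numbits$-bit perfect simulation of $\ab$-ary distributions with some finite number of players $\ns$ under the assumption $\numbits < \log \ab$. Applying the lemma yields a non-interactive SMP simulation with the finite number $N$ of players and the same parameters $\ab,\numbits$, directly contradicting \cref{theo:sampling:impossibility:non:adaptive:k:l}. Similarly, for the extension of \cref{theo:sampling:impossibility:non:adaptive:k3}, an interactive $1$-bit perfect simulation of $3$-ary distributions that is correct on some open neighborhood of the interior of the simplex would, after the reduction, yield an SMP protocol with $N$ players enjoying the same correctness guarantee on the same open set, contradicting \cref{theo:sampling:impossibility:non:adaptive:k3}.

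I do not expect any genuine obstacle here, as the entire content of the corollary is captured by the preceding reduction lemma. The only point worth verifying is that the reduction is \emph{input-distribution oblivious}: the constructed SMP strategies must not depend on $\p$. This is clear from the construction, since each of the $N$ players deterministically simulates one node of the interactive protocol tree (using only its local sample and the public coin), and the referee reconstructs the appropriate root-to-leaf path using the messages received. Thus the contradiction applies uniformly across all distributions in the relevant family, which is exactly what is needed for both theorems.
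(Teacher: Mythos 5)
Your proof is correct and matches the paper's intended argument: the corollary follows immediately by combining the preceding reduction lemma (interactive to SMP, blowing up the number of players to $2^{\numbits\ns+1}$ but preserving the output distribution and the per-player communication budget) with a contradiction against \cref{theo:sampling:impossibility:non:adaptive:k:l} and \cref{theo:sampling:impossibility:non:adaptive:k3}, respectively. The observation that the reduction is input-distribution oblivious is a worthwhile point to verify, and your argument handles it correctly.
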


\subsection{An $\alpha$-simulation protocol using rejection sampling}
\label{ssec:sampling:possibility}

In this section, we establish~\cref{theo:distributed:simulation} and
provide  $\alpha$-simulation protocols for $\ab$-ary distributions
using   $\ns=O(\ab/2^\numbits)$ players. We first present the protocol
for the case $\numbits=1$, before extending it to general $\numbits$.
The proof of lower bound for the number of players required for
$\alpha$-simulation of $\ab$-ary distributions is based on the
connection between distributed simulation and distributed distribution
learning and will be provided in the next section where this
connection is discussed in detail.

For ease of presentation, we allow a slightly different class of
protocols where we have an infinitely long sequence of players, each
with access to one independent sample from the unknown $\p$. The
referee's protocol entails checking each player's message and deciding
either to declare an output ${\hat X=x}$ and stop, or see the next
player's output. We assume that with probability one the referee uses
finitely many players and declares an output. The cost of maximum
number of players of the previous setting is now replaced with 
the  expected number of players used to declare an output. By an
application  of Markov's inequality, this can be easily related to our
original setting of private-coin $\alpha$-simulation.  

\begin{theorem}\label{theo:generate:sample:1bit:1sample}
  There exists a $1$-bit private-coin protocol that outputs a sample
  $x\sim \p$ using messages of at most $20\ab$ players in expectation.
\end{theorem}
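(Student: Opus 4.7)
The plan is to devise a two-stage rejection sampling scheme, preceded by a duplication reduction that caps the maximum probability of any atom at $1/2$. First I would define an auxiliary distribution $\p'$ on $[2\ab]$ by $p'_{2i-1}=p'_{2i}=p_i/2$ for every $i\in[\ab]$, so that $\max_{j}p'_j\le 1/2$. Each player can internally draw $X\sim\p$ together with a private fair coin $B\in\{0,1\}$ to form $X'=2X-B\sim\p'$, and if $j\sim\p'$ then $\lceil j/2\rceil\sim\p$. It therefore suffices to exhibit a protocol that outputs a sample distributed as $\p'$ on $[2\ab]$.

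A single \emph{attempt} will consume the next $2\ab+1$ players from the infinite sequence and proceeds in two stages. In the detection stage, players $1,\ldots,2\ab$ are assigned the distinct labels $1,\ldots,2\ab$, and player $j$ transmits $M_j=\indic{X_j=j}$ based on its private sample $X_j\sim\p'$. The referee reads the bits sequentially: if $\sum_{j=1}^{2\ab}M_j\neq 1$, the attempt is abandoned and the next $2\ab$ players are polled. Otherwise, letting $i^\ast$ denote the unique index with $M_{i^\ast}=1$, a single confirmation player draws $Y\sim\p'$ and sends $N=\indic{Y=i^\ast}$. The referee outputs $\lceil i^\ast/2\rceil$ if $N=0$, and otherwise restarts the attempt with a fresh group of $2\ab+1$ players.

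By independence across players, the per-attempt probability of outputting a given label $i\in[2\ab]$ equals $p'_i(1-p'_i)\prod_{j\neq i}(1-p'_j)=p'_i\prod_{j}(1-p'_j)$, which is proportional to $p'_i$; summing over $i$, each attempt succeeds with probability $\rho\eqdef\prod_j(1-p'_j)$, and conditional on success the output on $[2\ab]$ is distributed exactly as $\p'$. The main technical point is obtaining this output distribution \emph{exactly}: the detection step alone produces a candidate drawn from a distribution proportional to $p'_i/(1-p'_i)$, and the confirmation step is calibrated to multiply by the missing factor $(1-p'_i)$, so that the joint probability of an accepted output becomes proportional to $p'_i$.

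For the final player count, the bound $p'_j\le 1/2$ gives $-\ln(1-p'_j)\le 2p'_j$ and hence $\rho\ge\exp(-2\sum_j p'_j)=e^{-2}$; the number of attempts is then a geometric random variable with mean at most $e^2$, and since each attempt uses at most $2\ab+1$ players, the expected total is bounded by $e^2(2\ab+1)\le 20\ab$ for $\ab\ge 2$ (the case $\ab=1$ being trivial since $\p$ is then a point mass). Duplication is what converts the otherwise uncontrolled success rate $\rho$ into a universal constant bounded below by $e^{-2}$, and is the only place where the $\lp[2]$-style control on the input distribution is needed.
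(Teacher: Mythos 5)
Your plan is structurally the same as the paper's (duplicate the domain to cap the atom probabilities, then run a detect-and-confirm rejection step across batches), but the execution has a genuine flaw: the confirmation player is asked to transmit $N=\indic{Y=i^\ast}$, where $i^\ast$ is the unique detected index, yet $i^\ast$ is a function of the other players' messages. In the SMP model — including the ``infinitely many players, referee reads sequentially'' relaxation invoked here — each player's message must be a function of that player's own sample and private randomness only; the confirmation player cannot see, and hence cannot adapt to, the detection players' bits. As written, your protocol is interactive and not an SMP protocol.

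The standard fix, which is what the paper does, is to pre-commit a \emph{dedicated} confirmation player for every label: player $2j-1$ and player $2j$ both send $\indic{X=j}$, and the referee consults the confirmation bit $M_{2j^\ast}$ for whichever $j^\ast$ was uniquely detected among the odd-indexed bits. But this doubles your per-attempt cost to $4\ab$ players, and then your estimate of the acceptance probability is not sharp enough to close the argument. You lower-bound $\rho=\prod_j(1-p'_j)$ via $-\ln(1-p'_j)\le 2p'_j$ (valid since $p'_j\le 1/2$), giving $\rho\ge e^{-2}\approx 0.135$; with $4\ab$ players per batch this yields $4\ab\cdot e^2\approx 29.6\,\ab$, which exceeds the claimed $20\ab$. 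The paper instead derives $\rho\ge (1-\normtwo{\p'})\big/\, e^{1-\normtwo{\p'}}$ by expanding $-\ln(1-p'_j)=\sum_{t\ge 1}(p'_j)^t/t$ and controlling the tail with the $\lp[2]$ norm, and uses that duplication gives $\normtwo{\p'}\le 1/\sqrt{2}$; this yields $\rho\gtrsim 0.219$, hence $4\ab/\rho\approx 18.3\,\ab\le 20\ab$. So both the dedicated-confirmation structure and the sharper $\lp[2]$-based bound are needed; your single-confirmation-player trick was exactly the source of the extra slack that made the weaker $e^{-2}$ bound appear sufficient.
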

\begin{proof}
To help the reader build heuristics for the proof, we describe the
protocol and analyze its performance in steps. We begin by describing the basic idea and
building blocks; we then build upon it to obtain a full-fledged
protocol, but with potentially unbounded expected number of players
used. Finally, we
describe a simple modification which yields our desired bound for
expected number of player's accessed.

\paragraph{The scheme, base version.}
Consider a protocol with $2\ab$ players where the $1$-bit
communication from players $(2i-1)$ and $(2i)$ just indicates if their
observation is $i$ or not, namely $\pi_{2i-1}(x)=\pi_{2i}(x)=\indic{x=i}$.

\noindent On receiving these $2\ab$ bits, the referee~$\referee$ acts as follows:
\begin{itemize}
  \item if exactly one of the bits $
M_1, M_3, \dots, M_{2\ab-1}$ is equal to one, say the bit $M_{2i-1}$,
and the corresponding bit $M_{2i}$ is zero, then 
the referee outputs $\hat X= i$; \item otherwise, it outputs
  $\bot$.
\end{itemize}
In the above, the probability $\rho_{\p}$ that some $i\in[\ab]$ is
declared as the output (and not $\bot$) is 
\[
    \rho_{\p} \eqdef \sum_{i=1}^{\ab} \left(\p_i \prod_{j\neq i}
    (1-\p_j)\right) \cdot (1-\p_i) = \prod_{j=1}^{\ab}
    (1-\p_j) \cdot \sum_{i=1}^{\ab} \p_i = \prod_{j=1}^{\ab} (1-\p_j),
\]
so that
\[
    \rho_{\p} = \exp \sum_{j=1}^{\ab} \ln(1-\p_j)
    = \exp\mleft( -\sum_{t=1}^\infty \frac{\norm{\p}_t^t}{t} \mright)\geq \exp\mleft( -\mleft(1+\sum_{t=2}^\infty \frac{\normtwo{\p}^t}{t}\mright) \mright)
    = \frac{1-\normtwo{\p}}{e^{1-\normtwo{\p}}}
\]
which is bounded away from $0$ as long as $\p$ is far from being a
point mass.

Further, for any fixed $i\in[\ab]$, the probability that $\referee$
outputs $i$ is
\[
      \p_i\cdot \prod_{j=1}^{\ab} (1-\p_j) = \p_i\rho_{\p}\propto \p_i\,.
\]

\paragraph{The scheme, medium version.}
The (almost) full protocol proceeds as follows. Divide the countably
infinitely many players into successive, disjoint batches of $2\ab$
players each, and apply the base scheme to each of these runs. Execute
the base scheme to each of the batch, one at a time and moving to the next
batch only when the current batch declares a $\bot$; else declare the
output of the batch as $\hat X$.

It is straightforward to verify that the distribution of the output
$\hat X$ is exactly $\p$, and moreover that on expectation
$1/\rho_{\p}$ runs are considered before a sample is
output. Therefore, the expected number of players accessed
(i.e., bits considered by the referee) satisfies
\begin{equation}\label{eq:sampling:scheme:expected:query}
    \frac{2\ab}{\rho_{\p}} \leq
    2\ab \cdot \frac{e^{1-\normtwo{\p}}} {1-\normtwo{\p}}
\,.
\end{equation}

\paragraph{The scheme, final version.}
The protocol described above can have the expected number of players
blowing to infinity when  $\p$ has $\lp[2]$
norm close to one. To circumvent this difficulty, we modify the
protocol as follows: Consider the distribution $\q$ on $[2\ab]$
defined by
\[
    \q_{2i}=\q_{2i-1} = \frac{\p_i}{2},\qquad i\in[\ab]\,.
\] 
Clearly, $\normtwo{\q}=\normtwo{\p}/\sqrt{2}\leq
1/\sqrt{2}$, and therefore by~\eqref{eq:sampling:scheme:expected:query} the expected number of players required to
simulate $\q$ using our previous protocol is at most 
\[
4k\cdot \frac{e^{1-\frac 1{\sqrt{2}} }} {1-\frac 1{\sqrt 2}}
\leq 20k.
\]
But we can simulate a sample from $\p$ using a sample from $\q$ simply by
mapping $(2i-1)$ and $2i$ to $i$. The only thing remaining now is to
simulate samples from $\q$ using samples from $\p$. This, too, is
easy. Every $2$ players in a batch that declare $1$ on observing symbols $(2i-1)$ and
$(2i)$ from $\q$ declare $1$ when they see $i$ from $\p$. The referee
then simply flips each of this $1$ to $0$, thereby simulating the
communication corresponding to samples from $\q$. In summary, we
modified the original protocol for $\p$ by replacing each player with
two identical copies and modifying the referee to flip $1$
received from these players to $0$ independently with probability $1/2$; the output is declared in a
batch only when there is exactly one $1$ in the modified messages, in
which case the output is the element assigned to the player that sent $1$. 
Thus, we have a
simulation for $\ab$-ary distributions that uses at most $20\ab$
players, completing the proof of the theorem.
\end{proof}

Moving now to the more general setting, we have the following result. 

\begin{theorem}\label{theo:generate:sample:lbits:1sample}
  For any $\numbits\geq 2$, there exists a $\numbits$-bit private-coin
  protocol that outputs a sample  $x\sim \p$ using messages of at most
  $20\clg{\frac{\ab}{2^{\numbits}-1}}$ players in expectation.
\end{theorem}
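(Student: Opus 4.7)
The plan is to mimic, with blocks in place of singletons, the three-stage construction in the proof of~\cref{theo:generate:sample:1bit:1sample}. I would first fix a partition of $[\ab]$ into $T\eqdef\clg{\ab/(2^\numbits-1)}$ blocks $B_1,\dots,B_T$ of size at most $2^\numbits-1$, and have each batch consist of $2T$ players, with the pair $(2t-1,2t)$ jointly assigned to $B_t$. A player $j$ assigned to $B_t$ encodes either the within-block index of $X_j$ if $X_j\in B_t$ (one of at most $2^\numbits-1$ values) or the symbol $\bot$ otherwise; the total message alphabet has size $2^\numbits$, fitting in $\numbits$ bits. The referee's rule generalizes the $\numbits=1$ case directly: declare $\hat X=i\in B_{t^*}$ when odd player $2t^*-1$ reports the within-block index of $i$, every \emph{other} odd player reports $\bot$, and the paired even player $2t^*$ reports $\bot$; otherwise move to the next batch. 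A routine product-form computation, strictly parallel to the base version of~\cref{theo:generate:sample:1bit:1sample}, then gives $\probaDistrOf{\p}{\hat X=i}=\p_i\prod_{t=1}^T(1-\p(B_t))$, so conditioning on success returns exactly $\p$ and the per-batch success probability is $\rho_\p\eqdef\prod_t(1-\p(B_t))$.

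The main obstacle is bounding $\rho_\p$ away from $0$ uniformly in $\p$: unlike the singleton case, $\p(B_t)$ may reach $1$ (e.g., if $\p$ is supported inside one block), in which case $\rho_\p=0$. I would resolve this exactly as in the final version of~\cref{theo:generate:sample:1bit:1sample}, via a splitting step: work instead with $\q$ on $[2\ab]$ defined by $\q_{2i-1}=\q_{2i}=\p_i/2$, and partition $[2\ab]$ into $T'=2T$ blocks $B'_1,\dots,B'_{T'}$ such that the two copies $(2i-1,2i)$ of each original element land in \emph{distinct} blocks --- concretely, for each chunk $C_\tau$ of originals, block $B'_{2\tau-1}$ collects its odd copies and $B'_{2\tau}$ its even copies. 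Each player simulates a $\q$-sample from its $\p$-sample using an independent fair private coin to pick the odd or the even copy of the observed value; since the $1/2$ factor attaches uniformly in $i$ and $t^*$, mapping the $\q$-output back to $[\ab]$ preserves the $\p$-distribution.

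Finally, the balanced partition ensures $\q(B'_{2\tau-1})=\q(B'_{2\tau})=\tilde p_\tau/2$ with $\tilde p_\tau\eqdef\sum_{i\in C_\tau}\p_i$, whence $\sum_t\q(B'_t)^2=\tfrac12\sum_\tau\tilde p_\tau^2\leq\tfrac12\max_\tau\tilde p_\tau\leq\tfrac12$; so the $\lp[2]$ norm of the block-induced distribution is at most $1/\sqrt2$. The same series expansion as in~\cref{theo:generate:sample:1bit:1sample} then yields $\prod_t(1-\q(B'_t))\geq (1-1/\sqrt2)/e^{1-1/\sqrt2}$, a constant bounded away from $0$. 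With $2T'=4T$ players per batch and an expected $e^{1-1/\sqrt2}/(1-1/\sqrt2)$ batches before success, the expected total number of players is at most $4T\cdot e^{1-1/\sqrt2}/(1-1/\sqrt2)\leq 20\clg{\ab/(2^\numbits-1)}$, as required. The one delicate verification is that the coin-flip/splitting step preserves the proportionality $\probaOf{\hat X=i}\propto\p_i$; everything else is a mechanical adaptation of the $\numbits=1$ argument.
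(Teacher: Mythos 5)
Your proposal is correct and follows the same route as the paper: reduce to a block version of the base rejection-sampling scheme, observe $\rho_\p = \prod_t (1-\p(B_t))$ can vanish, and fix this by splitting each symbol into two equiprobable copies so the block-marginal has $\lp[2]$ norm at most $1/\sqrt{2}$, then invoke the series bound $\rho \geq (1-\normtwo{\cdot})/e^{1-\normtwo{\cdot}}$ from the $\numbits=1$ proof. Two small ways in which your write-up is actually crisper than the paper's: you make explicit that the partition of $[2\ab]$ must place the two copies of each original element in \emph{distinct} blocks (so that each $\q(B'_t)=\tilde p_\tau/2\leq 1/2$ and $\sum_t \q(B'_t)^2\leq 1/2$), whereas the paper glosses this as ``replacing each player with two identical copies and flipping''; and you state the referee's acceptance rule precisely in the odd/even-player form that actually yields $\probaOf{\hat X=i}\propto\p_i$, rather than the paper's informal ``exactly one $1$'' shorthand, which read literally over all players would not give proportionality. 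You also handle non-divisibility directly via the ceiling, which the paper elides. The arithmetic $4T\cdot e^{1-1/\sqrt2}/(1-1/\sqrt2)\leq 20T$ checks out.
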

\begin{proof}
For simplicity, assume that $2^\numbits -1$  divides $\ab$. We can then
 extend the previous protocol by considering a partition of
domain into $m=\ab/(2^{\numbits}-1)$ parts and assigning one part of 
size $2^\numbits-1$ each to a player. Each player then sends the all-zero
sequence of length $\numbits$ when it does not see an element from its
assigned set, or indicates the precise element from its assigned set
that it observed. For each batch, the referee, too, proceeds as before
 and declares an output if exactly one player in the batch sends a
 $1$ -- the declared output is the element indicated by the player
 that sent a $1$; else it moves to the next batch. To bound the number
 of players, consider the analysis of the base protocol. The
 probability that an output is declared for a batch (a $\bot$ is
not declared in the base protocol) is given by
\begin{align*}
    \rho_{\p} &\eqdef \sum_{i=1}^m \sum_{\numbits\in
    S_i} \left(\p_\numbits \prod_{j\neq i} (1-\p(S_j))\right) \cdot
    (1-\p(S_i)) \\
&= \prod_{j=1}^m
    (1-\p(S_j)) \cdot \sum_{i=1}^m \sum_{\numbits\in S_i} \p_\numbits \\
  &  = \prod_{j=1}^m (1-\p(S_j))\,,
\end{align*}
where $\{S_1, \dots, S_m\}$ denotes the partition used. Then, writing $\p^{(S)}$  for the distribution on $[m]$ given by $\p^{(S)}(j)
= \p(S_j)$, by proceeding as in the $\numbits=1$ case  we obtain
\[
\rho_\p\geq \frac{1-\normtwo{\p^{(S)}}}{e^{1-\normtwo{\p^{(S)}}}}\,.
\]
Once again, this quantity may be unbounded and we circumvent this
difficulty by replacing each player with two players that behave
identically and flipping their communicated $1$'s to $0$'s randomly at the
referee; the output is declared in a batch only when there is exactly
one $1$ in the modified messages, in
which case the output is the element indicated by the player that sent
$1$. The analysis can be completed exactly in the manner of the
$\numbits=1$ case proof by noticing that the protocol is tantamount to
simulating $\q$ with $\normtwo{\q^{(S)}}\leq 1/\sqrt{2}$ and accesses
messages from at
most $20m$ players in expectation.
\end{proof}
 
\section{Distributed Simulation for Distributed Inference}\label{sec:simulation:inference}

In this section, we focus on the connection between distributed
simulation and (private-coin) distributed inference. We first describe
the implications of the results from~\cref{sec:simulation}
for \emph{any} distributed inference task; before considering the
natural question this general connection prompts: ``Are the resulting
protocols optimal?''

\subsection{Private-coin distributed inference via distributed simulation}\label{sec:sampling:applications}

Having a distributed simulation protocol at our disposal, a natural
protocol for distributed inference entails using distributed
simulation to generate independent samples from the underlying
distribution, as many as warranted by the sample complexity of the
underlying problem, before running a sample inference algorithm (for
the centralized setting) at the referee. The resulting protocol will
require a number of players roughly equal to the sample complexity of
the inference problem when the samples are centralized times
$\big(\ab/2^\numbits)$, the number of players required to simulate
each independent sample at the referee. We refer to such protocols
that first simulate samples from the underlying distribution and then
use a standard sample-optimal inference algorithm at the referee
as \emph{simulate-and-infer} protocols. Formally, we have the
following result.

\begin{theorem}\label{theo:exactsampling:implies:ub}
Let $\task$ be an inference problem for distributions over a domain of
size $\ab$ that is solvable using $\psi(\task, \ab)$ samples with
error probability at most 1/3. Then, the simulate-and-infer protocol
for $\task$ requires at most
$O\left(\psi(\task, \ab)\cdot \frac{\ab}{2^{\numbits}}\right)$
players, with each player sending at most $\numbits$ bits to the
referee and the overall error probability at most $2/5$.
\end{theorem}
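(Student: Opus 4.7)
The plan is to invoke the distributed simulation protocol of~\cref{theo:generate:sample:lbits:1sample} as a black-box subroutine for drawing samples from $\p$, chained $\psi \eqdef \psi(\task,\ab)$ times, and to feed the resulting simulated samples into a sample-optimal centralized algorithm $\Algo$ for $\task$ at the referee. By~\cref{theo:generate:sample:lbits:1sample}, each invocation uses at most $T\eqdef 20\clg{\ab/(2^{\numbits}-1)}$ players in expectation, so the expected total number of players is $\psi T = \bigO{\psi\cdot \ab/2^{\numbits}}$. The only remaining issue is to turn this expected guarantee into a worst-case bound on $\ns$ while keeping the overall error below $2/5$.

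To do this, I would budget $\ns \eqdef 15\psi T$ players in advance and arrange them in a sequential stream. Each player applies the $\numbits$-bit private-coin protocol of~\cref{theo:generate:sample:lbits:1sample} to its own sample $X_i$. The referee reads the incoming messages in order, running $\psi$ consecutive copies of the simulation scheme on disjoint blocks of players, and stops as soon as $\psi$ samples have been produced; it then outputs $\Algo$ applied to these $\psi$ samples. If the $\ns$-player budget is exhausted before $\psi$ samples are generated, the referee returns an arbitrary default element of $\cE$, which is counted as an error.

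The main analytical step is to bound the probability of budget exhaustion. Because the $j$-th invocation of the simulation subroutine operates on a fresh block of players with independent samples, and the abort/success decisions within that block depend only on messages from that block, the per-invocation player counts $N_1,\dots,N_{\psi}$ are i.i.d.\ with mean at most $T$ by~\cref{theo:generate:sample:lbits:1sample}. Markov's inequality then gives $\probaOf{\sum_{j=1}^{\psi} N_j > 15\psi T}\leq 1/15$. On the complementary event, the simulation produces $\psi$ genuinely independent samples from $\p$ (independence from the disjointness of the player blocks; exactness from the Las Vegas guarantee of~\cref{theo:generate:sample:lbits:1sample} conditioned on non-abort), so $\Algo$ errs with probability at most $1/3$. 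A union bound yields total error at most $1/3+1/15=2/5$, and the player count is $\ns=\bigO{\psi\cdot \ab/2^{\numbits}}$, as required. I expect the only subtle point is arguing independence of the $\psi$ simulated samples across invocations; this follows by construction from the disjointness of the player blocks, but should be stated explicitly rather than taken for granted.
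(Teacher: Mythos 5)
Your proposal is correct and arrives at the same $1/15+1/3 = 2/5$ error budget, but the concentration argument differs from the paper's. The paper partitions the players into $B$ disjoint blocks of a \emph{fixed} size (twice the expected cost of a single Las Vegas simulation), applies Markov's inequality per block to conclude that each block yields a sample with probability at least $1/2$, and then applies Chebyshev's inequality to the (independent) count of successful blocks to show that with probability $\geq 14/15$ at least $\psi$ samples are produced for a suitable $B = \Theta(\psi)$. You instead chain $\psi$ Las Vegas invocations, each of \emph{variable} length $N_j$, and apply Markov's inequality \emph{once} to the total consumption $\sum_j N_j$ (using only $\shortexpect[\sum_j N_j] \leq \psi T$ by linearity, so the i.i.d.\ structure is not even strictly needed there). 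Both yield a worst-case bound of $O(\psi\cdot \ab/2^\numbits)$ players. Your single-Markov route is somewhat more direct and avoids the two-step Markov-then-Chebyshev accounting; the paper's fixed-block decomposition has the minor presentational advantage that the block membership of each player is determined a priori rather than depending on earlier simulation outcomes, though, as you implicitly use, this does not matter in the SMP model since the referee sees all messages simultaneously and can delimit the variable-length chains post hoc. The independence-of-simulated-samples point you flag is indeed worth stating and holds for exactly the reason you give (disjoint player blocks, i.i.d.\ inputs, exactness of the conditional output distribution of the Las Vegas simulation).
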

\begin{proof}
  The reduction is quite straightforward, and works in the following
   steps \begin{enumerate} \item Partition the players into blocks of
   size $54\ab/2^{\numbits}$.  \item Run the distributed simulation
   protocol (\cref{theo:generate:sample:lbits:1sample}) on each
   block.  \item Run the centralized algorithm over the simulated
   samples.  \end{enumerate}
   From~\cref{theo:generate:sample:lbits:1sample}, we have a Las Vegas
   protocol for distributed simulation using $27\ab/2^{\numbits}$
   players in expectation. Thus, by Markov's inequality, each block in
   the above protocol simulates a sample with probability at least
   1/2.  If the number of samples simulated is larger than
   $\psi(\task, \ab)$, then the algorithm has error at most
   1/3. Denoting the number of blocks by $B$, the number of samples
   produced has expectation at least $B/2$, and variance at most
   $B/4$. By Chebychev's inequality, the probability that the number
   of samples simulated being less than $B/2 -\sqrt{B/4}\sqrt{15}$ is
   at most 1/15. If $B>4\psi(\task, \ab)+8$, then $B/2
   -\sqrt{B}\sqrt{15/4}>\psi(\task,\ab)$. Since $1/3+1/15=2/5$, the
   result follows from a union bound.
\end{proof}
As immediate corollaries of the result, we obtain distributed
inference protocols for distribution learning and uniformity testing.
Specifically, using the well-known result that $\bigTheta{\ab/\eps^2}$
samples are sufficient to learn a distribution over $[\ab]$ to within
a total variation distance $\eps$ with probability 2/3, we obtain:
\begin{corollary}\label{coro:exactsampling:implies:learning}
  Let $\numbits\in\{1,\ldots, \log\ab\}$. Then, there exists an
  $\numbits$-bit private-coin $(\ab, \eps, 3/5)$-learning protocol for
  $\bigO{\frac{\ab^{2}}{2^\numbits \eps^2}}$ players.
\end{corollary}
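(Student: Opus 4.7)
The plan is to obtain \cref{coro:exactsampling:implies:learning} as an immediate application of \cref{theo:exactsampling:implies:ub} to the distribution learning task $\Learning_\ab(\eps,\delta)$, combined with the classical centralized sample complexity of learning an arbitrary discrete distribution in total variation distance.

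First, I would recall the textbook fact that the empirical estimator $\hat\p_n$ built from $n$ i.i.d.\ samples of $\p\in\distribs{[\ab]}$ satisfies $\shortexpect[\totalvardist{\hat\p_n}{\p}] \leq c\sqrt{\ab/n}$ for an absolute constant $c$ (e.g.\ by splitting $\totalvardist{\hat\p_n}{\p}=\frac{1}{2}\sum_i|\hat\p_n(i)-\p_i|$ and applying Cauchy--Schwarz to $\shortexpect|\hat\p_n(i)-\p_i|\leq \sqrt{\p_i/n}$). Choosing $n = \Theta(\ab/\eps^2)$ and applying Markov's inequality yields error at most $\eps$ with probability at least $2/3$, so $\psi(\Learning_\ab,\ab) = O(\ab/\eps^2)$ suffices in the centralized model with failure probability $1/3$.

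Second, I would instantiate \cref{theo:exactsampling:implies:ub} with $\task = \Learning_\ab(\eps,\cdot)$ and this value of $\psi$. The theorem then produces a private-coin protocol in which every player sends at most $\numbits$ bits and the total number of players is
\[
O\!\left(\psi(\Learning_\ab,\ab)\cdot \frac{\ab}{2^{\numbits}}\right) \;=\; O\!\left(\frac{\ab}{\eps^2}\cdot \frac{\ab}{2^\numbits}\right) \;=\; O\!\left(\frac{\ab^2}{2^\numbits\eps^2}\right),
\]
with overall error probability at most $2/5$, which exactly matches the required parameters of an $(\ab,\eps,3/5)$-learning protocol.

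There is no genuine obstacle in this argument; it is a pure ``plug-and-play'' corollary. The only bookkeeping to track is that the two sources of error, namely the distributed simulation stage failing to produce enough simulated samples (bounded by $1/15$ inside the proof of \cref{theo:exactsampling:implies:ub} via Chebyshev on a block count $B = \Theta(\ab/\eps^2)$) and the centralized learner being wrong given those samples (probability $1/3$), combine by a union bound to the claimed $2/5$, so no additional work is needed.
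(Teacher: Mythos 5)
Your proof is correct and is essentially the paper's own argument: the paper obtains the corollary by plugging the well-known $\Theta(\ab/\eps^2)$ centralized sample complexity of learning a $\ab$-ary distribution to TV error $\eps$ into \cref{theo:exactsampling:implies:ub}, exactly as you do. The only difference is cosmetic: you supply the short Cauchy--Schwarz/Markov justification for the $O(\ab/\eps^2)$ bound, whereas the paper simply cites it as a well-known fact, and you correctly track that the $2/5$ error from \cref{theo:exactsampling:implies:ub} is within the $3/5$ budget of the stated corollary.
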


\noindent From the existence of uniformity testing algorithms using $O(\sqrt{\ab}/\eps^2)$ samples~\cite{Paninski:08,VV:17,DGPP:17}, we obtain:
\begin{corollary}\label{coro:exactsampling:implies:testing:uniformity}
 Let $\numbits\in\{1,\ldots, \log\ab\}$. Then, there exists an
 $\numbits$-bit private-coin $(\ab, \eps, 3/5)$-uniformity testing
 protocol for $\bigO{\frac{\ab^{3/2}}{2^\ell \eps^2}}$ players.
\end{corollary}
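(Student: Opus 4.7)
The plan is to derive this corollary as a direct instantiation of the general simulate-and-infer reduction in Theorem~\ref{theo:exactsampling:implies:ub}, combined with the known sample complexity of uniformity testing in the centralized (collocated) setting. There is essentially no new ideas required: the work has already been done in constructing the Las Vegas simulation protocol of Theorem~\ref{theo:generate:sample:lbits:1sample} and in bundling it with an arbitrary centralized tester in Theorem~\ref{theo:exactsampling:implies:ub}.

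First, I would invoke the classical result of Paninski (with later refinements by Valiant--Valiant and Diakonikolas--Gouleakis--Peebles--Price) that uniformity testing over $[\ab]$ in the collocated model admits a tester with sample complexity $\psi(\Testing_\ab(\eps,1/3), \ab) = O(\sqrt{\ab}/\eps^2)$ and error probability at most $1/3$. This is the tester that will be run by the referee once enough samples have been simulated.

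Next, I would apply Theorem~\ref{theo:exactsampling:implies:ub} to the inference task $\task = \Testing_\ab(\eps,1/3)$ with $\psi(\task,\ab) = O(\sqrt{\ab}/\eps^2)$. The reduction stipulates dividing the players into blocks of size $O(\ab/2^\numbits)$, running the distributed simulation of Theorem~\ref{theo:generate:sample:lbits:1sample} on each block (which via Markov's inequality produces a valid sample with probability at least $1/2$ per block), and then feeding the collection of successfully simulated samples into the centralized tester. A Chebyshev argument inside Theorem~\ref{theo:exactsampling:implies:ub} ensures that with constant probability, the number of simulated samples exceeds the centralized sample complexity, so that the combined failure probability stays below $2/5$. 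Plugging in numbers, the total number of players used is
\[
O\mleft(\psi(\task, \ab) \cdot \frac{\ab}{2^{\numbits}}\mright) = O\mleft(\frac{\sqrt{\ab}}{\eps^2} \cdot \frac{\ab}{2^{\numbits}}\mright) = O\mleft(\frac{\ab^{3/2}}{2^{\numbits}\eps^2}\mright),
\]
with each player sending exactly $\numbits$ bits, and with success probability at least $3/5$. This is precisely the claimed bound.

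Since every ingredient is already in place, no step is truly an obstacle. The only thing worth checking carefully is that the centralized tester tolerates a random (rather than deterministic) number of simulated samples that is at least the sample complexity threshold~--~but this is handled by running the tester only once the block count $B$ has been chosen large enough (specifically $B > 4\psi(\task,\ab)+8$) so that the Chebyshev concentration in the proof of Theorem~\ref{theo:exactsampling:implies:ub} guarantees enough samples with probability at least $14/15$, combining with the tester's internal $1/3$ failure probability to give total failure at most $2/5$.
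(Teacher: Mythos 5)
Your proof is correct and follows exactly the paper's approach: instantiate the simulate-and-infer reduction of Theorem~\ref{theo:exactsampling:implies:ub} with the centralized uniformity tester of sample complexity $O(\sqrt{\ab}/\eps^2)$, giving $O((\sqrt{\ab}/\eps^2)\cdot(\ab/2^\numbits)) = O(\ab^{3/2}/(2^\numbits\eps^2))$ players. The extra detail you include about Chebyshev concentration and the $2/5$ failure probability simply recapitulates the internals of Theorem~\ref{theo:exactsampling:implies:ub} and is consistent with the paper.
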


Interestingly, a byproduct of this ``simulate-and-infer'' connection
(and, more precisely, of~\cref{coro:exactsampling:implies:learning})
is that the $\alpha$-simulation protocol
from~\cref{theo:generate:sample:lbits:1sample} has optimal number of
players, up to constants.
\begin{corollary}\label{coro:optimality:lasvegas}
  Let $\numbits \in\{1,\ldots, \log\ab\}$, and $\alpha \in
  (0,1)$. Then, any $\numbits$-bit public-coin (possibly adaptive)
  $\alpha$-simulation protocol for $\ab$-ary distributions must have
  $\ns=\Omega(\ab/2^\numbits)$ players.
\end{corollary}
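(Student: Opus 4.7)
The plan is to derive the lower bound by a black-box reduction from distributed distribution learning, invoking the known lower bound of $\Omega(\ab^2/(2^\numbits \eps^2))$ players required for $\numbits$-bit public-coin (even interactive) learning of $\ab$-ary distributions to total variation distance $\eps$, which follows from \cite{HOW:18}. Paired with the simulate-and-infer upper bound of \cref{coro:exactsampling:implies:learning}, this squeezes the simulation cost to $\Theta(\ab/2^\numbits)$.

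Concretely, suppose for contradiction that there is a public-coin, possibly adaptive, $\alpha$-simulation $\cS$ using $n$ players in expectation with $n = o(\ab/2^\numbits)$. The first step is to convert $\cS$ into a bounded-player protocol: by Markov's inequality, truncating $\cS$ after at most $C n$ players (for a constant $C = C(\alpha)$ large enough) yields a modified protocol $\cS'$ whose abort probability is at most some $\alpha' < 1$, and whose worst-case number of players is $O(n)$. Each invocation of $\cS'$ either outputs a genuine sample drawn from $\p$ (conditioned on non-abort, the output is distributed as $\p$, by the definition of $\alpha$-simulation) or aborts, with the former event having probability at least $\beta \eqdef 1-\alpha' > 0$.

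The second step is to use $\cS'$ to build a distributed learner. Partition the players into $B = \Theta(\ab/\eps^2)$ independent blocks, each of size $O(n)$, and have each block run an independent copy of $\cS'$. By a Chernoff bound, with constant probability at least $\beta B / 2 = \Omega(\ab/\eps^2)$ of the blocks output a genuine sample from $\p$, so the referee can feed the surviving samples to the optimal centralized $\ab$-ary learner and produce an $\eps$-accurate estimate with constant probability. Adaptivity and public randomness of $\cS$ are preserved verbatim. The resulting learning protocol uses $O(n \cdot \ab/\eps^2)$ players in total.

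Finally, applying the lower bound $\Omega(\ab^2/(2^\numbits \eps^2))$ for distributed learning to the constructed protocol gives $n \cdot \ab / \eps^2 = \Omega(\ab^2/(2^\numbits \eps^2))$, i.e., $n = \Omega(\ab / 2^\numbits)$, contradicting our assumption. The main obstacle is the first step: cleanly converting an expected-player budget with possibly large abort probability into a worst-case-bounded protocol whose failure probability remains bounded away from $1$, while preserving adaptivity and public coins. Once that conversion is in place, the reduction is essentially tight and the claimed bound drops out.
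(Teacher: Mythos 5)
Your proof is correct and follows essentially the same route as the paper: convert the given $\alpha$-simulation into a distributed distribution learner by running it on $\Theta(\ab/\eps^2)$ disjoint blocks and feeding the produced samples to the centralized learner, then invoke the $\Omega(\ab^2/(2^\numbits\eps^2))$ public-coin, adaptive learning lower bound (the paper's \cref{theo:learning:lb}, adapted from~\cite{HOW:18}). One small point: the detour through Markov truncation (``converting an expected-player budget into a worst-case-bounded protocol'') is unnecessary here, because the corollary, and the definition of $\alpha$-simulation in \cref{sec:model}, already concerns a protocol with a fixed worst-case number of players $\ns$ and abort probability at most $\alpha < 1$. You seem to have conflated this with the Las Vegas / expected-player variant; the extra step does no harm, but the argument works (and matches the paper's) without it.
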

\begin{proof}
  Let $\pi$ be any $\numbits$-bit $\alpha$-simulation protocol with
  $\ns$ players; by~\cref{theo:exactsampling:implies:ub}, and
  analogously to~\cref{coro:exactsampling:implies:learning} we have
  that $\pi$ implies an $\numbits$-bit $(\ab, \eps, 1/3)$-learning
  protocol for $\ns' = \bigO{\ns\cdot {\ab}/{\eps^2}}$
  players.\footnote{Improving the probability of success from $3/5$ to
  $1/3$ can be achieved by standard arguments, with at most a constant
  factor blowup in the number of players.}{} (Moreover, the resulting
  protocol is adaptive, private-, pairwise-, or public-coin,
  respectively, whenever $\pi$ is.) However, as shown
  in~\cref{app:learning:lb} (\cref{theo:learning:lb}), any
  $\numbits$-bit public-coin (possibly adaptive) $(\ab, \eps,
  1/3)$-learning protocol must have
  $\bigOmega{\ab^2/(2^\numbits\eps^2)}$ players. It follows that $\ns$
  must satisfy $\ns \gtrsim \ab/2^\numbits$, as claimed.
\end{proof}

\begin{remark}
We note that the learning upper bound
of~\cref{coro:exactsampling:implies:learning} appears to be
established in~\cite{HMOW:18} as well (with however, to the best of
our knowledge, completely different techniques). The authors
of~\cite{HOW:18} also describe a distributed protocol for distribution
learning, but their criterion is the $\lp[2]$ distance instead of
total variation.\footnote{We note that, based on a preliminary version of our manuscript on arXiv, the $\lp[2]$ learning upper bound of~\cite{HOW:18} was updated to use a ``simulate-and-infer'' protocol as well.} 
Finally, our learning lower bound
(\cref{app:learning:lb}), invoked in the proof of~\cref{coro:optimality:lasvegas} above, is established by adapting a similar lower bound from~\cite{HOW:18} which again applies to learning in the $\lp[2]$ metric.
\end{remark}

\subsection{Is distributed simulation essential for distributed inference?}\label{sec:conjecture}
In the previous subsection, we saw that it is easy to derive distributed learning and testing protocols from distributed sampling. However, the optimality of simulate-and-infer for uniformity testing using private-coin protocols is unclear. In fact, a natural question arises: \emph{Is the simulate-and-infer approach always optimal?} Note that such an optimality would have appealing implementation consequences, where one need not worry about the target inference application when designing the communication protocol -- the communication protocol will simply enable distributed simulation and the referee can implement the specific inference algorithm needed. A similar result, known as \emph{Shannon's source-channel separation theorem}, has for instance allowed for development of compression algorithms separately from the error-correcting codes for noisy channels. Unfortunately, optimality of simulate-and-infer can be refuted by the following simple example:
\begin{observation}
In the distributed setting model ($\ns$ players, and $\numbits=1$ bit of communication per player to the referee), testing whether a distribution over $[\ab]$, promised to be monotone, is uniform vs. $\eps$-far from uniform can be done with $\ns=O(1/\eps^2)$ (moreover, this is optimal).
\end{observation}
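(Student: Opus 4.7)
My plan is to exploit the structural rigidity of monotone distributions via the concavity of their cumulative distribution functions. Assuming $\p$ is monotone (say, non-increasing) on $[\ab]$, I would first observe that $F_\p$ dominates the uniform CDF $t \mapsto t/\ab$ and that $\{i : \p_i > 1/\ab\}$ is a prefix, so the TV-optimal distinguishing set is a prefix and $\totalvardist{\p}{\uniformOn{\ab}} = \max_{t}\bigl(F_\p(t) - t/\ab\bigr)$. The core structural step is then to upgrade this maximum to a bound at the fixed midpoint: if $\p$ is $\eps$-far from uniform, then $F_\p(\flr{\ab/2}) \geq 1/2 + \eps/2$. Setting $g(t) \eqdef F_\p(t) - t/\ab$, this function is concave on $\{0,1,\dots,\ab\}$ (since $g(t+1) - g(t) = \p_{t+1} - 1/\ab$ is non-increasing in $t$), vanishes at both endpoints, and attains a maximum $\eps^\ast \geq \eps$ at some $t^\ast$; drawing the chord from whichever of $0$ or $\ab$ is closer to $\flr{\ab/2}$ up to $(t^\ast, \eps^\ast)$, concavity yields $g(\flr{\ab/2}) \geq \eps^\ast/2 \geq \eps/2$ since $\flr{\ab/2}$ is within a factor two of either endpoint.

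Given this structural fact, the $1$-bit protocol is immediate: each player $i$ sends $M_i \eqdef \indic{X_i \leq \flr{\ab/2}}$, and the referee accepts iff $\frac{1}{\ns}\sum_i M_i \leq 1/2 + \eps/4$. The $M_i$ are i.i.d.\ Bernoulli with parameter $F_\p(\flr{\ab/2})$, which is exactly $1/2$ under the uniform hypothesis and at least $1/2 + \eps/2$ under a monotone $\eps$-far alternative, so a standard Hoeffding bound gives constant-probability correctness with $\ns = O(1/\eps^2)$ players.

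For the matching lower bound I would appeal to the two-point method in the collocated (unrestricted-communication) setting, since any such lower bound transfers for free to the distributed model. Concretely, take the monotone perturbation $\p^\ast$ with $\p^\ast_i = (1+2\eps)/\ab$ for $i \leq \flr{\ab/2}$ and $\p^\ast_i = (1-2\eps)/\ab$ otherwise; this is at total variation distance $\eps$ from $\uniformOn{\ab}$ and has $\chi^2(\p^\ast, \uniformOn{\ab}) = O(\eps^2)$, so tensorization followed by the standard $\chi^2$-to-TV comparison shows that $\ns = \Omega(1/\eps^2)$ is necessary to distinguish the two hypotheses with constant probability. I expect the main obstacle to be the concavity-based inequality $g(\flr{\ab/2}) \geq \eps^\ast/2$, which requires a brief case split on the position of $t^\ast$ relative to $\ab/2$; the rest is routine Bernoulli discrimination and standard two-point machinery.
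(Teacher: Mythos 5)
Your protocol is essentially the same as the paper's (each player sends $\indic{X_i\le\flr{\ab/2}}$ and the referee thresholds the empirical fraction), and your lower-bound reduction to the collocated case matches the paper's one-line appeal to it. The interesting difference is in the structural step: the paper asserts that a monotone $\p$ is $\eps$-far from uniform \emph{if and only if} $\p(\{1,\dots,\ab/2\}) > \p(\{\ab/2+1,\dots,\ab\}) + 2\eps$, but the ``only if'' direction of this equivalence is actually false as stated (e.g., for $\ab=4$ and $\p=(1/4+3\delta,\,1/4-\delta,\,1/4-\delta,\,1/4-\delta)$ with $\delta$ slightly above $\eps/3$, the total variation distance exceeds $\eps$ while the prefix gap $4\delta$ does not exceed $2\eps$). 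Your weaker bound $F_\p(\flr{\ab/2}) \ge \tfrac12 + \eps/2$, proved via concavity of $g(t)=F_\p(t)-t/\ab$ and a chord argument, is the correct statement; it costs only a constant in the separation and gives the same $O(1/\eps^2)$ player bound, so your write-up is in fact more rigorous than the paper's sketch. One small phrasing nit: the chord should be drawn from $(t^\ast,\eps^\ast)$ to whichever endpoint lies on the \emph{opposite} side of $t^\ast$ from the midpoint $\flr{\ab/2}$ (so that the midpoint lies between the two chord endpoints), not from the endpoint closest to $\flr{\ab/2}$; with that correction the inequality $g(\flr{\ab/2})\ge\eps^\ast/2$ follows cleanly in both cases of the position of $t^\ast$.
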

\begin{proof}[Sketch]
Optimality is trivial, since that many samples are required in the non-distributed setting. To see why this is enough, recall that a monotone distribution $\p\in\distribs{[\ab]}$ is \eps-far from uniform if, and only if, $\p(\{1,\dots,\ab/2\}) > \p(\{\ab/2+1,\dots,\ab\}) + 2\eps$. Therefore, we only need $\ns=O(1/\eps^2)$ players, where each player sends 1 if their sample was in $\{1,\dots,\ab/2\}$ and $0$ otherwise.
\end{proof}
However, this counter-example is perhaps not satisfactory since the inference problem itself was compressible since the dimension of the parameter space was increased artificially.\footnote{That is, the inference task was only ``superficially'' parameterized by $\ab$, but was actually a task on $\{0,1\}$ and entails only estimating the bias of a coin in disguise.}{} In fact, it is natural to consider an extension of simulate-and-infer where we first compress the observation to capture the effective dimension of the underlying parameter space. To formally define such \emph{compressed simulate-and-infer} schemes, we must first define a counterpart of sufficient statistic that will be relevant here. 

Let $\cP$ denote an inference problem (for instance, the distribution learning and the uniformity testing problem of the previous section) and $\ns(\cP)$  denote the minimum number of independent samples required to solve it, namely its sample complexity. Note that the description of the problem $\cP$ includes the observation alphabet $\cX$, the loss function used to evaluate the performance, and the required performance. For a (fixed, deterministic) mapping $f$ on $\cX$, denote by $\cP_f$ the problem where we replace each observed sample $X$ with $f(X)$. 

\begin{definition}[The size of a problem]
A problem $\cP$ is said to be \emph{compressible to $\numbits$ bits} if there exists a mapping $f\colon\cX\to \{0,1\}^\numbits$ such that $\ns(\cP_f) = n(\cP)$. For such a function $f$ with $\numbits \leq \log \abs{\cX}$, we call $f(X)$ a \emph{compressed statistic}. 

The \emph{size} $\abs{\cP}$ of a problem $\cP$ is then defined as the least $\numbits$ such that $\cP$ is compressible to $\numbits$. If a mapping $f$ attains $\abs{\cP}$, we call $f$ a \emph{maximally compressed statistic} for $\cP$. 
\end{definition}  

\begin{example}
For the uniformity testing problem $\UTesting(\ab,\eps)$ considered in the previous section, we must have
\[
  \abs{\UTesting(\ab,\eps)} \geq \log \ab - \log \frac{1}{1-\eps}.
\]
The proof follows by noting that for each mapping $f$ with range cardinality $\ab(1-\eps)$, we can find a distribution $Q$ on $[\ab]$ that is $\eps$-far from uniform, yet $f(X)$ is uniform under $Q$.
\end{example}
A compressed simulate-and-infer scheme then proceeds by replacing the original observation $X_j$ by its maximally compressed sufficient statistic
$f(X_j)$ at player $j$ and then applying simulate-and-infer for $f(X)$. Note that this new scheme, too, has the appealing feature that
we can use our distributed simulation protocol as a black-box communication step to enable distributed inference. But are such
compressed simulate-and-infer schemes optimal? Formally,
\begin{question}[The Flying Pony Question]\label{conj:flying:pony}
To solve $\cP$ in the distributed setting, must the number of parties $\ns$ satisfy $\ns= \bigOmega{ 2^{\abs{\cP}-\ell }\cdot n(\cP) }$?
\end{question}
This essentially asks whether the most economical communication scheme to solve $\cP$ is indeed to simulate $\ns(\cP)$ samples from a maximally compressed statistic. Observe that we noted the optimality of such a scheme for distribution learning, even when public-coin protocols are allowed. Further, to the
best of our knowledge, previous results on this topic, in essence, establish lower bounds to show the optimality of such simple schemes (cf.~\cite{GMNW:16,DGLNOS:17,HOW:18}.\smallskip

In spite of this evidence, we are able to refute this conjecture.\footnote{Thus implying that, even if wishes \emph{were} horses, there would be no flying ponies.}{} Specifically, we exhibit an inference task $\task$ over $\ab$-ary distributions which admits a $1$-bit private-coin protocol with $\ns=o(2^{\abs{\cP}} n(\cP))$ players.
\begin{theorem}\label{theo:refuting:fpc}
  There is an inference task $\task$ over $\ab$-ary distributions with $2^{\abs{\cP} }\cdot n(\cP) = \Omega({\ab^{3/2}})$, yet for which there exists a $1$-bit private-coin protocol with $\ns = \bigO{ \ab }$ players.
\end{theorem}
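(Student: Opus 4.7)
My plan is to take $\task$ to be the promise-testing problem deciding between $H_0 = \{\uniformOn{[\ab]}\}$ and $H_1 = \{\p \in \distribs{[\ab]} : \p_i \in \{0, 2/\ab\} \text{ for every } i\in[\ab]\}$; the alternatives in $H_1$ are exactly the uniform distributions $\p_T$ on half-sized subsets $T \in \binom{[\ab]}{\ab/2}$. With this choice, verifying the theorem breaks naturally into three pieces.

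First I would show $n(\task) = \Theta(\sqrt{\ab})$. Since every $\p_T$ satisfies $\dtv(\uniformOn{[\ab]}, \p_T) = 1/2$ and $\|\p_T\|_2^2 = 2/\ab = 2\|\uniformOn{[\ab]}\|_2^2$, the upper bound follows from a standard collision-based uniformity tester at constant distance. For the matching lower bound I would run a Le Cam two-point argument against the uniform prior $\pi$ on $\binom{[\ab]}{\ab/2}$: an Ingster--Suslina expansion reduces $\chi^2\bigl(\uniformOn{[\ab]}^{\otimes n}, \mathbb{E}_{T\sim\pi}[\p_T^{\otimes n}]\bigr)$ to the MGF of the hypergeometric overlap $|T\cap T'|$, which a sub-Gaussian bound controls by $e^{n^2/(2\ab)}-1$, yielding $n=\Omega(\sqrt{\ab})$.

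Next, for $\abs{\task} = \log \ab$, I would use a direct balancing argument: given any deterministic compression $f\colon [\ab]\to[m]$ with $m<\ab$, construct $T^\ast \in \binom{[\ab]}{\ab/2}$ with $|T^\ast \cap f^{-1}(c)|$ equal to $|f^{-1}(c)|/2$ for every $c \in [m]$ (up to rounding of at most $1$ per cell when the preimage has odd size, which introduces only $O(m/\ab)$ TV error). For this $T^\ast$ the push-forward distributions $f(\uniformOn{[\ab]})$ and $f(\p_{T^\ast})$ essentially coincide, so no tester acting only on $f(X_1),\dots,f(X_n)$ can separate the two hypotheses, forcing $n(\task_f)=\infty$. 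Combining the two pieces gives $2^{\abs{\task}}\cdot n(\task) = \ab \cdot \Theta(\sqrt{\ab}) = \Theta(\ab^{3/2})$, as required.

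The principal and most delicate step is exhibiting a $1$-bit private-coin protocol with $\ns = O(\ab)$ players. This is a compressed simulate-and-infer scheme in the sense flagged in~\cref{sec:conjecture} that compresses each sample below the problem size $\log\ab$---down to $1$ bit---and then has the referee aggregate the $O(\ab)$ received bits into a statistic tailored to the rigid two-valued structure $\|\p\|_2^2 \in \{1/\ab, 2/\ab\}$ on $H_0\cup H_1$. Concretely I would have each player $j$ report a structured single-bit probe $\indic{X_j \in S_j}$ for a deterministic family $\{S_j\}_{j=1}^{\ns}$ designed so the induced product message-distribution retains nontrivial discriminative information about $\|\p\|_2^2$, and have the referee compute from the bit-vector an unbiased estimator of a scalar that separates $H_0$ from every $\p_T$. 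The hard part of the analysis is proving soundness uniformly over the exponentially large combinatorial family of half-supports $T$: this reduces to a worst-case lower bound on $\dtv\bigl(Q_0, \mathbb{E}_{T\sim\pi}[Q_1^T]\bigr)$ between the induced product message-distribution under $H_0$ and the uniform mixture of alternatives, which I would tackle by adapting the product-distribution $\chi^2$-contraction machinery developed in~\cref{sec:uniformity}.
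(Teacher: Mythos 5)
Your choice of $\task$ and the argument that $n(\task) = \Theta(\sqrt{\ab})$ essentially match the paper (the paper uses the pair-structured Paninski alternatives rather than all of $\binom{[\ab]}{\ab/2}$, but this is immaterial). However, two of the three pieces of your plan have genuine gaps.

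\textbf{The balancing argument for $\abs{\task}$ falls short.} For a compression $f\colon[\ab]\to[m]$ with $m<\ab$ your balanced $T^\ast$ gives $\totalvardist{f(\p_{T^\ast})}{f(\uniformOn{[\ab]})} = O(m/\ab)$, which is nonzero whenever some preimage has odd size, and this does \emph{not} force $n(\task_f)=\infty$. A small-but-nonzero total variation distance only yields $n(\task_f) = \Omega\bigl((\ab/m)^2\bigr)$ (via Hellinger), and requiring this to exceed $n(\task)=\Theta(\sqrt{\ab})$ only rules out $m \lesssim \ab^{3/4}$, i.e.\ $\abs{\task}\geq \tfrac{3}{4}\log\ab - O(1)$. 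That gives $2^{\abs{\task}}\cdot n(\task) = \Omega(\ab^{5/4})$, strictly weaker than the required $\Omega(\ab^{3/2})$. The paper sidesteps this with an indirect argument: a maximally compressed statistic on $L$ bits yields an $L$-bit SMP with $O(\sqrt{\ab})$ players, while the public-coin lower bound of \cref{theo:uniformity:shared:randomness:lb} (instantiated at $\eps=1/2$, which is exactly this task) shows any $L$-bit protocol needs $\Omega(\ab/2^{L/2})$ players; combining the two gives $2^L\gtrsim\ab$.

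\textbf{The $1$-bit protocol is over-engineered and not concretely established.} You propose a structured family of probes $\indic{X_j\in S_j}$ with a $\chi^2$-contraction analysis to handle soundness over all $T$, and flag this as the ``delicate'' step. But the whole point of the paper's counterexample is that this step is trivial: every player sends the single bit $\indic{X_j=1}$. Under $H_0$ this is a $\bernoulli{1/\ab}$ coin; under \emph{any} alternative $\p_T$ one has $\p_T(1)\in\{0,2/\ab\}$, so the bit is a $\bernoulli{0}$ or $\bernoulli{2/\ab}$ coin. Telling these apart takes $O(\ab)$ samples, and soundness is automatically uniform over the exponentially many $T$---no mixture argument or $\dtv$ contraction is required. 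Without identifying this protocol, the third piece of your proof does not go through.
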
 
\begin{proof}
We start by describing the inference task in question. For every even $\ab\geq 2$, $\task$ consists in distinguishing between the following two cases: either $\p=\uniformOn{\ab}$, the uniform distribution over $[\ab]$; or  $\p$ is any of the $2^{\ab/2}$ possible uniform distributions over a subset of size $\ab/2$ defined as follows. For a parameter $\theta\in\{-1,1\}^{\ab/2}$, $\p_{\theta}$ is the distribution such that, for every $i\in[\ab/2]$,
\[
    \p_{\theta}(2i-1) = \frac{1+\theta_i}{\ab}, \qquad \p_{\theta}(2i) = \frac{1-\theta_i}{\ab}
\]
and in particular $\totalvardist{\p_{\theta}}{\uniformOn{\ab}} = 1/2$ for every $\theta\in\{-1,1\}^{\ab/2}$. 

By an easy birthday-paradox type argument, we have that $\ns(\task) = \Omega(\sqrt{\ab})$ (and this is tight), so to prove the first part of the statement it is enough to show that $\abs{\task} = \Omega(\ab)$. To see why this is the case, set $L\eqdef \abs{\task}$, and consider any maximally compressed statistic $f\colon [\ab] \to \{0,1\}^{L}$ for $\task$. This $f$ immediately implies a (private-coin) $L$-bit $(\ab,1/2)$-uniformity testing protocol: namely, a protocol where each player first applies $f$ to their sample, then sends the resulting $L$ bits to the referee. Further, by definition of a maximally compressed statistic, we have $\ns(\task_f)=\ns(\task) = \Theta(\sqrt{\ab})$; as in the aforementioned $L$-bit protocol the referee only needs $\ns(\task_f)$ samples from the distribution on $2^L$, this therefore gives an $O(\sqrt{\ab}\cdot 2^{L-L}) = O(\sqrt{\ab})$ upper bound on the number of players required.

 However, peeking ahead,~\cref{theo:uniformity:shared:randomness} shows that any $L$-bit protocol for $\task$ (even allowing for public coins) must have $\Omega(\ab/2^{L/2})$ players.\footnote{Indeed, this is because the inference task $\task$ described here is a specific case of the lower bound construction underlying the proof of~\cref{theo:uniformity:shared:randomness}, obtained by taking $\eps=1/2$.}{} Combining this lower bound with the $O(\sqrt{\ab})$ upper bound we have just established yields
\begin{equation}\label{eq:refuting:fpc}
    \frac{\ab}{2^{L/2}}\lesssim \sqrt{\ab}\,,
\end{equation}
i.e., $\ab \lesssim 2^L$. This, along with the lower bound on $\ns(\task)$, implies that $2^{\abs{\cP} }\cdot n(\cP) = \Omega( \ab \cdot \sqrt{\ab}) = \Omega(\ab^{3/2})$, as claimed.\smallskip

To obtain a contradiction, it remains to prove the second part of the statement, i.e., to describe a $1$-bit private-coin protocol with $\ns=O(\ab)$ players. Consider the protocol where every of the $\ns$ players simply sends $1$ if their sample is equal to $1$, and $0$ otherwise. If $\p=\uniformOn{\ab}$, then each bit is independently $1$ with probability $1/\ab$. However, if $\p$ is one of the distributions uniform over $\ab/2$ elements, then $\p_1\in\{0,2/\ab\}$, and therefore either each player's bit is independently $1$ with probability $0$, or each player's bit is independently $1$ with probability $2/\ab$. In either case, the problem then amounts to distinguish a coin with bias $1/\ab$ to one with bias either $0$ or $2/\ab$; for which $\ns=O(\ab)$ players suffice, concluding the proof.
\end{proof}

While we have refuted the optimality compressed simulate-and-infer,
the strategy used in the counter-example above still entails 
simulating samples from a fixed distribution at the referee. This
statistic, while compressed form of the original problem, is not
a compressed sufficient statistic as it mandates a higher number of
samples in the centralized setting. We call such inference protocols
that entail simulate-and-infer for some compressed statistic of the
problem\footnote{This need not be a compressed sufficient statistic.}
\emph{generalized simulate-and-infer}; the optimality of generalized
simulate-and-infer is unclear, in general. For our foregoing example
of uniformity testing, it is not even clear whether there is a private-coin
protocol that requires fewer players than the vanilla
simulate-and-infer scheme. Interestingly, we can provide a public-coin
protocol that outperforms simulate-and-infer for uniformity testing
and show that it is optimal. This is the content of the next section.

\section{Public-Coin Uniformity Testing}\label{sec:uniformity}
In this section, we consider public-coin protocols for
 $(\ab, \eps)$-uniformity testing and establish the following upper
 and lower bounds for the required number of players.
\begin{theorem}\label{theo:uniformity:shared:randomness:ub}
For $1\leq \numbits \leq \log\ab$, there exists an $\numbits$-bit
public-coin $(\ab, \eps)$-uniformity testing protocol for
$\ns=\bigO{\frac{\ab}{2^{\numbits/2}\eps^2}}$ players.
\end{theorem}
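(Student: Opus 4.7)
The plan is to implement the ``random flattening'' recipe sketched in the introduction. The protocol uses the public randomness to draw a uniformly random \emph{balanced} partition $\cS=(S_1,\dots,S_L)$ of $[\ab]$ into $L \eqdef 2^\numbits$ blocks of size $\ab/L$; each player sends the $\numbits$-bit index $j\in[L]$ of the block containing its sample; and the referee runs a standard centralized $\lp[2]$-uniformity tester on $[L]$ applied to the resulting samples, which are i.i.d.\ from the induced distribution $\q$ defined by $\q_j = \p(S_j)$.

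Correctness in the completeness case is immediate: when $\p=\uniformOn{\ab}$, the balancedness of the partition forces $\q=\uniformOn{[L]}$ deterministically. For soundness, I would first apply Cauchy--Schwarz to convert the hypothesis $\totalvardist{\p}{\uniformOn{\ab}}>\eps$ into the $\lp[2]$ bound $\normtwo{\p-\uniformOn{\ab}} \geq 2\eps/\sqrt{\ab}$. The main structural lemma asserts that with constant probability over $\cS$, $\normtwo{\q-\uniformOn{[L]}}^2 = \bigOmega{\eps^2/\ab}$. Setting $v_i \eqdef \p_i - 1/\ab$, one expands
\[
    \normtwo{\q-\uniformOn{[L]}}^2 = \sum_{j=1}^L \Bigl(\sum_{i\in S_j} v_i\Bigr)^2,
\]
and computes the first moment by symmetry: the probability that two distinct indices of $[\ab]$ end up in the same block is $(\ab/L-1)/(\ab-1)$, so using $\sum_i v_i = 0$ one obtains $\shortexpect[\normtwo{\q-\uniformOn{[L]}}^2] = (1-1/L)\cdot \ab/(\ab-1)\cdot \normtwo{v}^2 \geq \normtwo{v}^2/2 \gtrsim \eps^2/\ab$.

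To lift this expectation to a constant-probability statement I would invoke Paley--Zygmund, which reduces the task to a matching upper bound on $\shortexpect[\normtwo{\q-\uniformOn{[L]}}^4]$. This is the main obstacle: expanding the fourth moment produces a sum over quadruples $(i_1,i_2,i_3,i_4)$ weighted by the probability that $\cS$ merges them into at most two blocks, and one has to track the various cross-terms in $v$ carefully. The bound ultimately reduces to inequalities of the form $\sum_i v_i^4 \lesssim \norminf{v}^2 \normtwo{v}^2$; to prevent $\norminf{v}$ from being too large I would, if necessary, first ``regularize'' $\p$ by the same domain-doubling trick used in the proof of~\cref{theo:generate:sample:lbits:1sample}, which halves $\norminf{\p}$ while only shrinking $\normtwo{v}$ by a constant factor.

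Given the random-flattening lemma, the referee holds i.i.d.\ samples from a distribution $\q$ on $[L]$ that is either $\uniformOn{[L]}$, or, with constant probability over the public coins, $\gamma$-far from $\uniformOn{[L]}$ in $\lp[2]$ distance with $\gamma = \bigOmega{\eps/\sqrt{\ab}}$. I would then invoke a standard $\lp[2]$-uniformity tester, whose sample complexity is $\bigO{1/(\sqrt{L}\gamma^2)}$ on a domain of size $L$. Substituting gives $\ns = \bigO{\ab/(\sqrt{L}\eps^2)} = \bigO{\ab/(2^{\numbits/2}\eps^2)}$ players, as claimed. A standard amplification step (running the inner tester with a small constant error probability, at a constant-factor overhead) absorbs the failure probability of the random-partition step and yields the overall success probability $2/3$.
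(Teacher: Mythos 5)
Your proposal is essentially identical to the paper's first (``smooth'') protocol in~\cref{sec:uniformity:ub:smooth}: a public random balanced partition into $L=2^\numbits$ blocks, Cauchy--Schwarz to get $\lp[2]$ farness $\Omega(\eps/\sqrt\ab)$, the Paley--Zygmund anticoncentration lemma (the paper's~\cref{theo:uniformity:anticoncentration:general}) with first- and fourth-moment bounds on $\normtwo{\q-\uniformOn{[L]}}^2$, and a centralized $\lp[2]$ tester on $[L]$. One small note: the regularization/domain-doubling detour you float is never needed here, since $\norminf{v}\leq\normtwo{v}$ holds trivially and hence $\sum_i v_i^4 \leq \norminf{v}^2\normtwo{v}^2 \leq \normtwo{v}^4$ unconditionally, which is exactly what the paper uses when bounding the cross-terms in the fourth moment.
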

Note that this is much fewer than the
$O(\ab^{3/2}/(2^\numbits\eps^2))$ players required using private-coin
protocols in~\cref{coro:exactsampling:implies:testing:uniformity}. In
fact, this is optimal, being the least number of players (up to
constant factors) needed for any public-coin protocol:
\begin{theorem}\label{theo:uniformity:shared:randomness:lb}
For $1\leq \numbits \leq \log\ab$, any $\numbits$-bit public-coin
$(\ab, \eps)$-uniformity testing protocol must have
$\ns=\bigOmega{\frac{\ab}{2^{\numbits/2}\eps^2}}$ players.
\end{theorem}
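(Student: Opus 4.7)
The approach is to apply Le Cam's two-point method with Paninski's construction. For the no-case, index by $\theta\in\{-1,+1\}^{\ab/2}$ the perturbation $\p_\theta(2i-1) = (1+\eps\theta_i)/\ab$, $\p_\theta(2i) = (1-\eps\theta_i)/\ab$; each $\p_\theta$ is $\eps$-far from $\uniformOn{\ab}$ in total variation. Writing $\mathbb{P}_\theta$ (resp.~$\mathbb{P}_\uniform$) for the joint distribution of the transcript $M=(M_1,\dots,M_\ns)$ under i.i.d.\ input from $\p_\theta$ (resp.~from $\uniformOn{\ab}$), and $\bar{\mathbb{P}} \eqdef \shortexpect_\theta[\mathbb{P}_\theta]$ for the averaged no-distribution, it suffices to show $\dtv(\mathbb{P}_\uniform,\bar{\mathbb{P}}) < 1/3$ whenever $\ns \ll \ab/(2^{\numbits/2}\eps^2)$. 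Conditioning on the public randomness $U=u$ and averaging reduce the task to the worst case over \emph{deterministic} channels $W_i\colon [\ab]\to\{0,1\}^\numbits$.

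\textbf{Passing to $\chi^2$ and multilinear expansion.} First bound $\dtv^2 \leq \tfrac14 \chi^2(\bar{\mathbb{P}}\,\|\,\mathbb{P}_\uniform)$. The main obstacle at this stage is that $\bar{\mathbb{P}}$ is \emph{not} a product distribution in $\ns$: the Paninski parameter $\theta$ couples all $\ns$ players, so the standard product-form tensorization of $\chi^2$ does not apply. Following the Pollard--Paninski technique, introduce an independent copy $\theta'$ of $\theta$ and expand
\[
1 + \chi^2(\bar{\mathbb{P}}\,\|\,\mathbb{P}_\uniform) = \shortexpect_{\theta,\theta'}\!\left[\prod_{i=1}^{\ns}\bigl(1+\Delta_i(\theta,\theta')\bigr)\right],
\]
where the per-player term $\Delta_i(\theta,\theta') \eqdef \sum_{m\in\{0,1\}^\numbits} \frac{(\q_{i,\theta}(m)-\q_{i,\uniform}(m))(\q_{i,\theta'}(m)-\q_{i,\uniform}(m))}{\q_{i,\uniform}(m)}$ is the likelihood correlation induced by the deterministic channel $W_i$, with $\q_{i,\cdot}$ the induced message distribution. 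A direct calculation shows $\Delta_i$ is bilinear in $(\theta,\theta')$ with coefficients controlled by how $W_i$ partitions $[\ab]$ into its at most $2^\numbits$ preimages.

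\textbf{Main obstacle: sub-Gaussian bound via transportation.} The technical core is to control $\shortexpect_{\theta,\theta'}[\prod_i(1+\Delta_i)]$ tightly enough to recover the $2^{\numbits/2}$ scaling: a naive use of $1+x\leq e^x$ term-wise only yields the weaker bound $\ns=\bigOmega{\ab/(2^{\numbits}\eps^2)}$ sketched in the introduction, which is tight only at $\numbits=1$. The plan is to apply $1+x\leq e^x$ globally and then, for each fixed $\theta'$, exploit the fact that $F(\theta)\eqdef\sum_{i=1}^{\ns} \Delta_i(\theta,\theta')$ is \emph{linear} in $\theta\in\{-1,+1\}^{\ab/2}$. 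A Rademacher/Hoeffding computation then bounds $\shortexpect_\theta[e^{F(\theta)}]$ by $\exp(\tfrac12 T(\theta'))$, where $T(\theta') \eqdef \sum_j b_j(\theta')^2$ is the sum of squared coefficients of $F$. A Marton-type transportation-cost inequality on the Hamming cube yields sub-Gaussian concentration of $T(\theta')$ around its mean; a second-moment computation gives $\shortexpect_{\theta'}[T(\theta')] \lesssim \ns^2 \eps^4 \cdot 2^\numbits /\ab^2$, because each deterministic channel has at most $2^\numbits$ output symbols and the contributions across symbols telescope correctly. Combining these ingredients yields $\chi^2(\bar{\mathbb{P}}\,\|\,\mathbb{P}_\uniform) \lesssim \ns^2 \eps^4 \cdot 2^\numbits/\ab^2$, which keeps $\dtv$ below $1/3$ unless $\ns = \bigOmega{\ab/(2^{\numbits/2}\eps^2)}$, establishing the theorem. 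The hardest step is the sub-Gaussian bound itself, precisely because the coupling across players through $\theta$ destroys the product structure and renders standard KL/Hellinger tensorization ineffective.
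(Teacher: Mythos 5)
Your proposal follows essentially the same route as the paper's proof: Paninski's construction, conditioning on public coins to reduce to deterministic protocols, bounding total variation by $\chi^2$, the Pollard--Paninski multilinear expansion with an independent copy $\theta'$, $1+x\leq e^x$, and a transportation-based sub-Gaussian bound on the resulting moment generating function combined with the crucial Frobenius-norm estimate $\norm{H_j}_F^2\leq 2^\numbits$ that exploits determinism. The only minor structural variant is that you apply Hoeffding to integrate out $\theta$ first and then control the quadratic form $T(\theta')$ via concentration, whereas the paper applies the transportation lemma once to the bilinear form $\theta^T H\theta'$ jointly in $(\theta,\theta')$; both yield the same $\exp\bigl(O(\ns^2\eps^4 2^\numbits/\ab^2)\bigr)$ bound on $1+\chi^2$.
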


We establish~\cref{theo:uniformity:shared:randomness:ub}
and~\cref{theo:uniformity:shared:randomness:lb}
in~\cref{sec:uniformity:ub,sec:uniformity:lb}, respectively. Before
delving into the proofs, we note that the results for uniformity
testing imply similar upper and lower bounds for the more general
question of \emph{identity testing}, where the goal is to test whether
the unknown distribution $\p$ is equal to (versus $\eps$-far from) a
reference distribution $\q$ known to all the players.
\begin{corollary}\label{theo:identity:shared:randomness}
For $1\leq \numbits \leq \log\ab$, and for any fixed $\q\in\distribs{[\ab]}$, there exists an $\numbits$-bit
public-coin $(\ab, \eps,\q)$-identity testing protocol for
$\ns=\bigO{\frac{\ab}{2^{\numbits/2}\eps^2}}$ players. 
Further, any $\numbits$-bit public-coin
$(\ab, \eps,\q)$-identity testing protocol must have
$\bigOmega{\frac{\ab}{2^{\numbits/2}\eps^2}}$ players (in the worst case over $\q$).
\end{corollary}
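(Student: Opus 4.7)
The plan is to reduce identity testing against a known reference $\q$ to uniformity testing, in a way that each player can implement locally using only private randomness, and then invoke \cref{theo:uniformity:shared:randomness:ub,theo:uniformity:shared:randomness:lb} as black boxes. The lower bound direction is immediate: taking $\q=\uniformOn{\ab}$, any $\numbits$-bit public-coin $(\ab,\eps,\q)$-identity testing protocol is in particular an $\numbits$-bit public-coin $(\ab,\eps)$-uniformity testing protocol, so \cref{theo:uniformity:shared:randomness:lb} forces $\ns=\bigOmega{\ab/(2^{\numbits/2}\eps^2)}$ in the worst case over $\q$.

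For the upper bound, I would invoke the standard ``identity-to-uniformity'' reduction of Goldreich~\cite{Goldreich:16}: given the known $\q\in\distribs{[\ab]}$, there is an explicit integer $N\leq 2\ab$ and a randomized map $F\colon[\ab]\to[N]$ (whose description depends only on $\q$) such that (i)~$F_\ast\q = \uniformOn{[N]}$, and (ii)~$\totalvardist{F_\ast\p}{\uniformOn{[N]}} \geq c\cdot\totalvardist{\p}{\q}$ for some absolute constant $c>0$ and every $\p\in\distribs{[\ab]}$. Crucially, $F$ is applied independently to each sample and requires only per-player local randomness (sampling a uniform element from a deterministic block determined by the input), so no shared coins are needed for this preprocessing step.

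Given $F$, the protocol is: each of the $\ns$ players, on input $X_i\sim\p$, privately computes $Y_i\eqdef F(X_i)\sim F_\ast\p$, and then all players run the public-coin uniformity testing protocol of \cref{theo:uniformity:shared:randomness:ub} on their $Y_i$'s with domain $[N]$, distance parameter $\eps'\eqdef c\eps$, and $\numbits$-bit messages. In the \yes case ($\p=\q$) the inputs to the uniformity tester are i.i.d.\ from $\uniformOn{[N]}$; in the \no case ($\totalvardist{\p}{\q}>\eps$) they are i.i.d.\ from a distribution at total variation distance $>\eps'$ from $\uniformOn{[N]}$. By \cref{theo:uniformity:shared:randomness:ub}, correctness is achieved with $\ns=\bigO{N/(2^{\numbits/2}{\eps'}^2)}=\bigO{\ab/(2^{\numbits/2}\eps^2)}$ players, matching the claimed bound.

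I do not expect any serious obstacle: the Goldreich reduction is a black-box tool whose ``locality'' (one independent application per sample, with purely private randomness) is exactly what the SMP model needs, and the mild blow-up $N\leq 2\ab$ together with the constant-factor loss $\eps'=\Theta(\eps)$ is absorbed into the $\bigO{\cdot}$ in the final bound. The only minor point to verify in writing this out is that $\numbits\leq \log \ab\leq \log N$, so the $\numbits$-bit uniformity tester from \cref{theo:uniformity:shared:randomness:ub} applies to domain size $N$ in the same regime $\numbits\leq \log N$ in which it was proven.
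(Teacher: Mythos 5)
Your proposal is correct and matches the paper's approach almost exactly: the paper (in Appendix~\ref{app:identity:from:uniformity}, Proposition~\ref{prop:identity:from:uniformity}) likewise applies Goldreich's identity-to-uniformity reduction locally at each player, using the fact that knowledge of $\q$ lets each player privately simulate a sample from $F_\q(\p)$, and then invokes the public-coin uniformity tester; your lower-bound observation (take $\q = \uniform$) is also the intended one. The only cosmetic discrepancy is the blow-up constant — the paper's instantiation of the reduction maps to domain size $5\ab$ with distance parameter $\tfrac{16}{25}\eps$, not $N \leq 2\ab$ — but this does not affect the asymptotic bound.
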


We describe this
reduction (similar to that in the non-distributed setting)
in~\cref{app:identity:from:uniformity}, further detailing how it
actually leads to the stronger notion of ``instance-optimal'' identity testing in the sense
of Valiant and Valiant~\cite{VV:17}.

\subsection{Upper bound: public-coin protocols}\label{sec:uniformity:ub}
This section is dedicated to the proof
of~\cref{theo:uniformity:shared:randomness:ub}. We actually provide
and analyze two different protocols achieving the stated upper bound:
the first, in~\cref{sec:uniformity:ub:smooth}, is remarkably simple,
and, moreover, is ``smooth'' -- that is, no player's output depends
too much on any particular symbol from $[\ab]$. However, this first
protocol has the inconvenience of requiring a significant amount of
public randomness, $\Theta(\ab\cdot\numbits) = \Omega(\ab)$ bits.

To address this, we provide in~\cref{sec:uniformity:ub:levin} a
different protocol requiring the optimal number of players, too, but
necessitating much less randomness, only
$\Theta_\eps(2^\numbits\log\ab)=O_{\eps,\numbits}(\log\ab)$
bits.\footnote{Note that $2^\numbits\log \ab \leq \ab\numbits$ for
every $1\leq\numbits\leq\ab$.} On the other hand, this second protocol
is slightly more complex and highly ``non-smooth'' (specifically, the
output of each player entirely depends on only $\numbits$ symbols).

\subsubsection{A simple ``smooth'' protocol}\label{sec:uniformity:ub:smooth}

The protocol will rely on a generalization of the following
observation: \emph{if $\p$ is $\eps$-far from uniform, then for a
subset $S\subseteq[\ab]$ of size $\frac{\ab}{2}$ generated uniformly
at random, we have $\p(S) = \frac{1}{2}\pm \Omega(\eps/\sqrt{\ab})$,
with constant probability.} Of course, if $\p$ is uniform, then
$\p(S)=\frac{1}{2}$ with probability one. Further, note that this fact
is qualitatively tight: for the specific case of $\p$ assigning
probability $(1\pm\eps)/\ab$ to each element, the bias obtained will
be $\frac{1}{2}\pm \Theta(\eps/\sqrt{\ab})$ with high probability.

As a warm-up, we observe that the above claim immediately suggests a
protocol for the case $\numbits=1$: The $\ns$ players, using their
shared randomness, agree on a uniformly random subset
$S\subseteq[\ab]$ of size $\ab/2$, and send to the referee the bit
indicating whether their sample fell into this set. Indeed, if $\p$ is
$\eps$-far from uniform, with constant probability all corresponding
bits will be $(\eps/\sqrt{\ab})$-biased, and in this case the referee
can detect it with $\ns=O(\ab/\eps^2)$ players.\footnote{To handle the
small constant probability, it suffices to repeat this independently
constantly many times, on disjoint sets of $O(\ab/\eps^2)$ players.}

The claim in question, although very natural, is already non trivial
to establish due to the dependencies between the different elements
randomly assigned to the set $S$. We refer the reader
to~\cite[Corollary 15]{ACFT:18} for a proof involving
anticoncentration of a suitable random variable,
$Z\eqdef \sum_{i\in[\ab]} (\p_i-1/\ab)X_i$, with $X_1,\dots,X_\ab$
being (correlated) Bernoulli random variables summing to $\ab/2$. At a
high-level, the argument goes by analyzing the second and fourth
moments of $Z$, and applying the Paley--Zygmund inequality.

For our purposes, we need to show a generalization of the
aforementioned claim, considering balanced partitions into $L\eqdef
2^\numbits$ pieces instead of $2$. To do so, we first set up some
notation.  Let $L<\ab$ be an integer; for simplicity and with little
loss of generality, assume that $L$ divides $\ab$.  Further, with
$Y_1,\dots, Y_\ab$ independent and uniform random variables on $[L]$,
let random variables $X_1, \dots, X_\ab$ have the same distribution as
$Y_1, \dots, Y_\ab$ conditioned on the event that for every $r\in[L]$,
$\sum_{i=1}^\ab \indic{Y_i=r} = \frac{\ab}{L}$. Note that each $X_i$,
too, is uniform on $[L]$, but $X_i$s are not independent. For
$\p\in\distribs{[\ab]}$, define random variables $Z_1,\dots,Z_L$ as
follows:
\begin{equation} 
    Z_r \eqdef \sum_{i=1}^\ab \p_i \indic{X_i=r}\,.
\end{equation} 
Equivalently, $(Z_1,\dots, Z_L)$ correspond to the probabilities
$(\p(S_1),\dots, \p(S_L))$ where $S_1,\dots,S_L$ is a uniformly random
partition of $[\ab]$ into $L$ sets of equal size.

\begin{theorem}\label{theo:uniformity:z:concentration:anticoncentration:general}
  For the (random) distribution $\q=(Z_1,\dots, Z_L)$ over $[L]$
  induced by $(Z_1,\dots,Z_L)$ above, the following holds: (i) if
  $\p=\uniform$, then $\normtwo{\q-\uniform_L} = 0$ with probability
  one; and (ii) if $\lp[1](\p,\uniform) > \eps$,
  then \[ \probaOf{ \normtwo{\q-\uniform_L}^2 > \frac{\eps^2}{\ab}
  } \geq c\,.  \] for some absolute constant $c>0$.
\end{theorem}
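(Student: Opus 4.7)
Part (i) is immediate: when $\p = \uniform_\ab$, every block satisfies $Z_r = \p(S_r) = \abs{S_r}/\ab = 1/L$ deterministically, so $\normtwo{\q - \uniform_L} = 0$. For part (ii), set $a_i \eqdef \p_i - 1/\ab$, so that $\sum_i a_i = 0$ and $\normtwo{\p - \uniform_\ab}^2 = \sum_i a_i^2$. Since $\sum_i \indic{X_i = r} = \ab/L$ for each $r$, we have $Z_r - 1/L = \sum_i a_i \indic{X_i = r}$, and summing the squares across $r$ collapses to the single collision statistic
\[
Y \eqdef \normtwo{\q - \uniform_L}^2 = \sum_{i,j=1}^{\ab} a_i a_j \indic{X_i = X_j} = \sum_i a_i^2 + \sum_{i\neq j} a_i a_j \indic{X_i = X_j}\,.
\]
This rewriting is the central identity: it converts an $L$-dimensional $\lp[2]$ quantity into a quadratic form in coincidence indicators of a uniformly random balanced partition, for which moments can be computed combinatorially.

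The plan is then to apply the Paley--Zygmund inequality to $Y$. First I would compute $\shortexpect[Y]$: using $\probaOf{X_i = X_j} = (\ab/L - 1)/(\ab - 1)$ for $i \neq j$ together with $\sum_{i \neq j} a_i a_j = -\sum_i a_i^2$, one obtains
\[
\shortexpect[Y] = \frac{\ab(L-1)}{L(\ab-1)} \normtwo{\p-\uniform_\ab}^2\,,
\]
which is a $(1 - 1/L)$-type fraction of $\normtwo{\p - \uniform_\ab}^2$ (up to an $\ab/(\ab-1)$ factor). By Cauchy--Schwarz, $\dtv(\p, \uniform_\ab) > \eps$ gives $\normtwo{\p - \uniform_\ab}^2 > 4\eps^2/\ab$, so $\shortexpect[Y] \gtrsim \eps^2/\ab$. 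Paley--Zygmund then reduces the theorem to a bound of the form $\shortexpect[Y^2] \leq C \,\shortexpect[Y]^2$ for some absolute constant $C$, which suffices to produce the desired constant probability that $Y$ exceeds $\eps^2/\ab$ (a constant fraction of its mean).

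\textbf{Main obstacle.} The technical heart of the argument is bounding $\shortexpect[Y^2]$. Expanding,
\[
\shortexpect[Y^2] = \sum_{i,j,k,\ell} a_i a_j a_k a_\ell \, \probaOf{X_i = X_j,\; X_k = X_\ell}\,,
\]
and one must split the quadruples according to the coincidence pattern of $\{i,j,k,\ell\}$ (all four distinct; one overlap; two overlapping pairs; etc.). For each pattern, the joint collision probability can be written in closed form from the balanced-partition combinatorics. The leading contribution, from the ``all indices distinct'' case, reconstructs $\shortexpect[Y]^2$; the remaining patterns produce terms controlled by higher $\lp[2]$ moments of $a$, which I would bound using $\normtwo{a}^2 \leq 2$ and $\norminf{a} \leq 1$ together with Cauchy--Schwarz (e.g., $\sum_i a_i^3 \leq \norminf{a} \sum_i a_i^2$). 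To keep the bookkeeping manageable, a useful simplification is to first carry out the analogous computation under independent uniform labels $Y_i \sim \uniformOn{[L]}$ (where everything factorizes) and then argue, via a coupling or a negative-association-style comparison for the balanced model, that the balanced joint collision probabilities are bounded by the independent ones up to a $1 + O(L/\ab)$ factor. Controlling the lower-order cross terms carefully is the main bookkeeping chore; once done, assembling the pieces and invoking Paley--Zygmund yields the claimed constant-probability anticoncentration.
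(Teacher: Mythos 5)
Your proposal follows essentially the same route as the paper: compute $\shortexpect[\normtwo{\q-\uniform_L}^2]$, bound the second moment, and invoke Paley--Zygmund. Your main streamlining is to first collapse the $L$-fold sum into the single collision statistic $Y = \sum_{i,j} a_i a_j \indic{X_i=X_j}$; the paper instead expands $(\sum_r Z_r^2)^2$ into diagonal terms $\sum_r Z_r^4$ (handled by citing a prior result of~\cite{ACFT:18}) and cross terms $\sum_{r<r'} Z_r^2 Z_{r'}^2$, and then carries out the same coincidence-pattern counting over quadruples that you describe. After summing over $r,r'$ the two expansions are identical, so the bookkeeping is the same; your organization via $Y$ is arguably a touch cleaner, and it also trivially extends to the more general $\delta$-version the paper actually proves.

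The one place your plan genuinely diverges is the proposed shortcut of first computing under independent uniform labels and then transferring to the balanced model via a coupling or negative-association-style comparison with a uniform $1+O(L/\ab)$ factor. The paper does not do this: it computes the exact joint coincidence probabilities of the balanced model for each pattern (e.g., $\probaOf{X_{\ab-1}=r,\,X_\ab=r'} = \ab/(L^2(\ab-1))$ for $r\neq r'$, which is in fact slightly \emph{larger} than the independent value $1/L^2$, whereas $\probaOf{X_i=X_j} = (\ab-L)/(L(\ab-1))$ is \emph{smaller} than $1/L$). Since the sign of the deviation from the independent model changes across events, a blanket negative-association comparison does not hold, and the $1+O(L/\ab)$ estimate would need to be checked pattern by pattern --- at which point one may as well compute the balanced probabilities directly, as the paper does. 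Modulo that unverified shortcut, your plan is sound and matches the paper's argument.
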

The proof of this theorem is quite technical and is deferred
to~\cref{app:smooth}. We now explain how it yields a protocol
with the desired guarantees (i.e., matching the bounds
of~\cref{theo:uniformity:shared:randomness:ub}). By~\cref{theo:uniformity:z:concentration:anticoncentration:general},
setting $L=2^{\numbits}$ we get that with constant probability the
induced distribution $\q$ on $[L]$ is either uniform (if $\p$ was), or
at $\lp[2]$ distance at least $\eps'$ from uniform, where
$\eps' \eqdef \sqrt{\eps^2/\ab}$. However, testing uniformity
vs. $(\gamma/\sqrt{L})$-farness from uniformity in $\lp[2]$ distance,
over $[L]$, has sample complexity $O(\sqrt{L}/{\gamma}^2)$ (see
e.g.~\cite[Proposition 3.1]{CDVV:14} or~\cite[Theorem
2.10]{CDGR:17:journal}), and for our choice of
$\gamma\eqdef \sqrt{L}\eps'\in(0,1)$, we have
\begin{equation}\label{eq:l2:testing}
    \frac{\sqrt{L}}{\gamma^2} = \frac{\sqrt{L}}{L{\eps'}^2}
    = \frac{\ab}{\sqrt{L}\eps^2}= \frac{\ab}{2^{\numbits/2}\eps^2},
\end{equation} 
giving the bound we sought. This is the idea underlying the following
result:
\begin{corollary}\label{theo:uniformity:shared:randomness:ub:smooth}
For $1\leq \numbits\leq \log\ab$, there exists an
$\numbits$-bit public-coin $(\ab, \eps)$-uniformity testing protocol
for $\ns=\bigO{\frac{\ab}{2^{\numbits/2}\eps^2}}$ players, which uses
$O(\numbits\ab)$ bits of randomness.
\end{corollary}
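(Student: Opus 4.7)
The plan is to instantiate the strategy sketched in the preceding discussion, relying on \cref{theo:uniformity:z:concentration:anticoncentration:general} as the key structural ingredient. First, I would describe the protocol. Using the shared public randomness, all players agree on a uniformly random balanced partition $S_1,\ldots,S_L$ of $[\ab]$ into $L=2^\numbits$ parts of cardinality $\ab/L$ each. Player $i$ then sends the $\numbits$-bit index of the unique part containing their sample $X_i$. The referee thereby receives $\ns$ i.i.d.\ samples from the induced distribution $\q = (\p(S_1),\ldots,\p(S_L))$ on $[L]$.

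By \cref{theo:uniformity:z:concentration:anticoncentration:general}, (i) if $\p=\uniformOn{\ab}$ then $\q=\uniformOn{L}$ deterministically; and (ii) if $\totalvardist{\p}{\uniformOn{\ab}} > \eps$, then with probability at least some absolute constant $c>0$ (over the partition) one has $\normtwo{\q-\uniformOn{L}}^2 > \eps^2/\ab$. The referee then runs a standard centralized $\lp[2]$ uniformity tester for distributions on $[L]$, which distinguishes uniform from $(\gamma/\sqrt{L})$-far in $\lp[2]$ with $O(\sqrt{L}/\gamma^2)$ samples (\cite{CDVV:14,CDGR:17:journal}). Taking $\gamma=\sqrt{L\eps^2/\ab}$, the required number of players works out by the computation already displayed in~\eqref{eq:l2:testing} to $\ns=\bigO{\ab/(2^{\numbits/2}\eps^2)}$, as desired.

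To boost the constant probability $c$ from \cref{theo:uniformity:z:concentration:anticoncentration:general} to the final $2/3$, I would repeat the protocol $O(1)$ times in parallel, each repetition using an independent random partition and its own disjoint group of $\ns$ players, with the referee accepting iff a majority of the sub-tests accept. A standard concentration argument combined with the $\lp[2]$ tester's own constant success probability gives the claimed overall guarantee, and costs only a constant factor in $\ns$.

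It remains to account for the randomness. Specifying a uniformly random balanced partition of $[\ab]$ into $L$ equal-sized parts amounts to choosing a multinomial arrangement, which costs $\log\binom{\ab}{\ab/L,\ldots,\ab/L} = \bigO{\ab\log L} = \bigO{\numbits\ab}$ bits; the $O(1)$ repetitions increase this only by a constant. The main technical content is packaged inside \cref{theo:uniformity:z:concentration:anticoncentration:general}; given that result, the only real subtlety is matching the reduction from $\lp[1]$-farness over $[\ab]$ to $\lp[2]$-farness over $[L]$ with the sample complexity of off-the-shelf $\lp[2]$ testers, which is precisely what the scaling $\gamma=\sqrt{L\eps^2/\ab}$ and the identity~\eqref{eq:l2:testing} are arranged to do.
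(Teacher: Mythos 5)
Your protocol description, use of Theorem~\ref{theo:uniformity:z:concentration:anticoncentration:general}, reduction to $\lp[2]$ testing on $[L]$, and randomness accounting all match the paper. However, the boosting step is wrong in a way that breaks the argument. You propose running $m=O(1)$ independent repetitions (each with a fresh partition and its own players) and having the referee \emph{accept iff a majority of the sub-tests accept}. But Theorem~\ref{theo:uniformity:z:concentration:anticoncentration:general} only guarantees that a random partition is ``good'' (i.e., preserves $\lp[2]$-farness) with probability at least some absolute constant $c>0$, and nothing forces $c>1/2$; indeed, the Paley--Zygmund bound in the proof of that theorem gives something like $c=1/(16C)$, which is small. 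In the far case, a sub-test can only be expected to reject when its partition is good, so the expected fraction of rejecting sub-tests is roughly $c$, which may be well below $1/2$. A majority vote would then accept, and no amount of repetition fixes this, since the fraction concentrates around $c$, not around $1$.

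The correct combination rule is the one the paper uses: \emph{reject iff any sub-test rejects}. To make this sound on the uniform side, each of the $m$ sub-tests must be run with a small per-test failure probability $\delta \approx 1/(6m)$, so that a union bound keeps the false-rejection probability below a constant; on the far side, $m$ is taken just large enough that at least one partition is good with probability $\geq 5/6$, i.e., $(1-c)^m\leq 1/6$. Both conditions are satisfiable with $m=\Theta(1)$ and $\delta=\Theta(1)$, so the overhead is still only a constant factor in $\ns$. Once you replace ``majority'' by ``reject if any rejects'' and add the per-test union bound, the rest of your argument goes through essentially as the paper's does.
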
 
\begin{proof}
The protocol proceeds as follows: Let $m=\Theta(1)$ be an integer such
that $(1-c)^m \leq 1/6$, where $c$ is the constant
from~\cref{theo:uniformity:z:concentration:anticoncentration:general};
define $\delta \eqdef 1/(6m)$. Let $N
= \Theta(\ab/(2^{\numbits/2}\eps^2))$ be the number of samples
sufficient to test $(\eps/\sqrt{\ab})$-farness in
$\lp[2]$ distance from the uniform distribution over $[L]$, with
failure probability $\delta$ (as guaranteed
by~\eqref{eq:l2:testing}). Finally, let $\ns \eqdef mN 
= \Theta(\ab/(2^{\numbits/2}\eps^2))$. Given $\ns$ players, the
protocol divides them into $m$ disjoint batches of $N$ players, and each
group acts independently as follows:    
\begin{itemize}
  \item Using their shared randomness, the players choose uniformly at
  random a partition $\Pi$ of $[\ab]$ into subsets of size
  $\ab/2^\numbits$.
  \item Next, they send to the referee the $\numbits$ bits indicating which
  part of the partition their observed sample fell in. 
\end{itemize}
The referee, receiving these $N$ messages (which correspond to $N$
independent samples of the distribution $\q\in\distribs{[2^\numbits]}$
induced by $\p$ on $\Pi$) runs the $\lp[2]$ uniformity test, with
failure probability $\delta$ and distance parameter
$\eps/\sqrt{\ab}$. After running these $m$ tests, the referee rejects
if any of the batch is rejected, and accepts otherwise. 

By a union bound, all these $m$ tests will be correct with probability
at least $1-m\delta = 5/6$. If $\p=\uniform_\ab$, then all $m$ batches
generate samples from the uniform distribution on $[L]$, and the
referee returns \accept with probability at least $5/6$. However, if
$\p$ is $\eps$-far from uniform then with probability at least
$1-(1-c)^m \geq 5/6$ at least one of the $m$ groups will choose a
partition such that the corresponding induced distribution on $[L]$ is
at $\lp[2]$ distance at least $\eps/\sqrt{\ab}$ from uniform; by a
union bound, this implies the referee will return \reject with
probability at least $1-2\cdot1/6 = 2/3$. 

The bound on the total amount of randomness required comes from the
fact that $m=\Theta(1)$ independent partitions of $[\ab]$ into
$L\eqdef 2^\numbits$ are chosen and each such partition can be specified
using $O(\log(L^\ab)) = O(\ab\cdot \numbits)$ bits. 
\end{proof}
Note that the protocol underlying~\cref{theo:uniformity:shared:randomness:ub:smooth} is ``smooth,'' in the sense that each
player's output is the indicator of a set of $\ab/2^\numbits$
elements, which for constant values of $\numbits$ is $\Omega(\ab)$.
\subsubsection{A randomness-efficient optimal protocol}\label{sec:uniformity:ub:levin}
We now provide our second optimal public-coin protocol, which albeit less simple than that of the previous section is much more randomness-efficient. We start with the case
$\numbits=1$ (addressed in \cref{prop:uniformity:shared:randomness}
below), before generalizing to an arbitrary $\numbits\geq 1$ -- the generalization is nontrivial and uses a more involved protocol. Before we present our actual scheme, to help the reader build heuristics, we present a simple, albeit non-optimal, scheme. 
\begin{proposition}[Warmup]\label{prop:uniformity:shared:randomness:warmup}
There exists a $1$-bit public-coin $(\ab, \eps)$-uniformity testing protocol for $\ns=\bigO{\ab\log(1/\eps)/\eps^3}$, which uses $\bigO{(\log \ab)/\eps}$ bits of randomness.
\end{proposition}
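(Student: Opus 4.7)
The plan is to analyze a protocol where, using public randomness, the $\ns$ players select $m = \Theta(1/\eps)$ i.i.d.\ uniform random elements $j_1, \dots, j_m \in [\ab]$, partition themselves into $m$ groups of equal size $N \eqdef \ns/m$, and each player in group $t$ simply sends the single bit $\indic{X_i = j_t}$. For each group the referee performs a two-sided Bernoulli test, comparing the empirical count $T_t \eqdef \sum_i \indic{X_i = j_t}$ to the expected count $N/\ab$ and rejecting if the deviation exceeds the threshold $\tau \eqdef C \sqrt{(N/\ab)\log m}$ for a suitable constant $C$; the referee rejects overall if any of the $m$ groups rejects. Setting $N = \Theta(\ab\log(1/\eps)/\eps^2)$ gives total player count $mN = \bigO{\ab\log(1/\eps)/\eps^3}$ and public-randomness budget $m\log\ab = \bigO{(\log \ab)/\eps}$ bits, matching the statement.

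Completeness (under $\uniformOn{\ab}$) should be routine: each $T_t$ is $\binomial{N}{1/\ab}$-distributed, and a Chernoff bound combined with the choice of $\tau$ ensures each individual test accepts with probability $\geq 1-1/(3m)$, giving overall acceptance probability $\geq 2/3$ by a union bound. The crux is soundness, where the main step is the following one-sided averaging lemma: for any $\p$ with $\totalvardist{\p}{\uniformOn{\ab}} > \eps$, at least $\Omega(\eps \ab)$ indices $j$ satisfy $|\p_j - 1/\ab| \geq \Omega(\eps/\ab)$. The argument is to restrict to the light set $L \eqdef \{j : \p_j < 1/\ab\}$, over which $\sum_{j\in L}(1/\ab - \p_j) > \eps$ while each summand is bounded by $1/\ab$; removing indices contributing less than $c\eps/\ab$ (for a small constant $c$) reduces the sum by at most $c\eps$, so at least $(1-c)\eps\ab$ indices must satisfy $\p_j \leq (1-c\eps)/\ab$. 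Exploiting nonnegativity $\p_j \geq 0$ is crucial here: a naive two-sided reverse-Markov argument on $|\p_j - 1/\ab|$, which can be as large as $1$ for heavy elements, would only yield a fraction $\Omega(\eps/\ab)$ of good indices, a factor of $\ab$ too weak.

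Given this $\Omega(\eps)$ density of good indices, the remainder is elementary. With $m = \Theta(1/\eps)$ i.i.d.\ choices, the probability that no $j_t$ lands in the good set is at most $(1-\Omega(\eps))^m \leq 1/6$; conditioned on some $j_t$ being good, the corresponding group has expected shift $|\shortexpect[T_t] - N/\ab| \geq Nc\eps/\ab = \Theta(\log(1/\eps)/\eps)$, which exceeds $\tau = \Theta(\sqrt{\log(1/\eps)}/\eps)$ by a factor $\Theta(\sqrt{\log(1/\eps)})$, so a Bernstein or Chernoff bound yields rejection of that group with probability $\geq 1-1/(3m)$. Combined with a union bound over the $m$ groups, the overall error under any $\eps$-far $\p$ stays below $1/3$. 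I expect the averaging lemma above to be the main subtle point: getting a density of $\Omega(\eps)$ (rather than merely $\Omega(\eps/\ab)$) is exactly what keeps the number of repetitions at $\Theta(1/\eps)$ and hence the total player count at $\bigO{\ab\log(1/\eps)/\eps^3}$, and this in turn hinges on the one-sided light-element argument rather than a generic moment or reverse-Markov bound.
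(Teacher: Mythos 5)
Your protocol and analysis follow the paper's own warm-up proof essentially verbatim: the one-sided ``light elements'' averaging lemma (at least $\Omega(\eps\ab)$ indices $j$ with $\p_j \leq (1-\Omega(\eps))/\ab$), then $\Theta(1/\eps)$ batches of $\Theta(\ab\log(1/\eps)/\eps^2)$ players, each running a bias test on a publicly sampled element and union-bounding over batches. The only issue is an arithmetic slip in the threshold accounting: with $N/\ab = \Theta(\log(1/\eps)/\eps^2)$ and $\log m = \Theta(\log(1/\eps))$, your $\tau = C\sqrt{(N/\ab)\log m}$ is $\Theta(\log(1/\eps)/\eps)$, which is the \emph{same} order as the expected shift $Nc\eps/\ab = \Theta(\log(1/\eps)/\eps)$, not a factor $\Theta(\sqrt{\log(1/\eps)})$ smaller as you claim; the conclusion still holds, but only because one can tune the constants so the shift is, say, at least $2\tau$, not because of any growing gap.
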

\begin{proof}
The starting point of the protocol is the straightforward observation that if $\p$ is $\eps$-far from uniform then at least an $\bigOmega{\eps}$ fraction of the domain must have an $\bigOmega{\eps}/\ab$ deviation from uniform. Indeed, consider a $\p\in\distribs{[\ab]}$ such that $\totalvardist{\p}{\uniform_{[\ab]}}\ge\eps$. By contradiction, suppose that there are only $\ab' < \frac{\eps}{2}\cdot \ab$ elements such that  $\p_i< (1-\frac \eps 2)\cdot \frac 1\ab$. Then, 
\[
  \totalvardist{\p}{\uniform_{[\ab]}}
  = \sum_{i : \p_i < 1/\ab} \left(\frac{1}{\ab} - \p_i\right)
  \leq \ab'\cdot \frac{1}{\ab} + (\ab - \ab')\cdot \frac{\eps}{2\ab}
  < \frac{\eps}{2} + \frac{\eps}{2} = \eps
\]
contradicting the assumption that $\p$ was $\eps$-far from uniform. Therefore, 
\begin{align}\label{e:desired_set}
|\big\{i\in [\ab]: \p_i < (1-\frac \eps 2)\cdot \frac 1 \ab \big\}|\geq \frac \eps 2\cdot \ab.
\end{align}
Next, we recall the well-known fact that a coin with bias $1/\ab$ can
 be distinguished from another with bias $(1-\eps/2)/\ab$ with
 probability\footnote{That is, denoting the two distributions by $P$
 and $Q$, we can find a subset $A$ of sequences in $\{0,1\}^{\ns}$ such that $P^{\ns}(A)\geq
 1-\delta$ and $Q^{\ns}(A)\leq \delta$.} $1-\delta$ using $c\ab \log
 (1/\delta)/\eps^2$ independent coin tosses for some constant $c$. Therefore, any $i\in[\ab]$ with $\p_i < (1-\eps/2)/\ab$ 
 were known to the players, then testing if its probability is $1/\ab$
 or $(1-\eps/2)/\ab$ will require $c\ab\log 1/\delta)/\eps^2$ players simply by using $\indic{X_j=i}$ as communication for player $j$. 

We use this observation to build our protocol. Specifically, we divide
the players into $m$ disjoint batches of size
$\ns'=c \ab\log(1/\delta)/\eps^2$ players; we will specify $m$ and
$\delta$ later. We assign 
a random element $i$ to each batch, generated uniformly from $[\ab]$
using public randomness. Then, the parties in the batch apply the
aforementioned test to distinguish if the probability of the selected
$i$ is $1/\ab$ or $(1-\eps/2)/\ab$. We accept $\uniform$
if all the batches accepted $1/\ab$ as the probability of their
respectively assigned $i$s; else we reject $\uniform$. Note that since
each batch's selected $i$ lies in the desired set in \eqref{e:desired_set}
with probability at least $\eps/2$, with probability greater than
$9/10$ at least one batch will be assigned an $i$ in the desired set
if the number of batches satisfies
\[
m\geq \frac 5\eps.
\]
When the underlying
distribution is uniform, the  protocol will make an error only if one
of the test for one of the batches fails, which can happen with
probability less than $m\delta\leq 1/10$ if 
\[
\delta \leq 1/(10m). 
\]
On the
other hand, if the underlying
distribution is $\eps$-far from uniform, then the test will fail only
if either no $i$ in the desired set was selected or if the protocol
failed for an $i$ in the desired set; the former happens with
probability less than $1/10$ and the latter with probability less than
$1/10m$ when we choose $\delta=1/10m$. Thus, the overall probability
of error in this case is less than $2/10$, whereby we have an
$(\ab, \eps)$-uniformity testing protocol using  $m\ns'=5c\ab\log
(50/\eps)/\eps^3$ players. Moreover, the total number of random bits required is $O(m\log\ab)$, as claimed.
\end{proof} 

\paragraph{Improving on the warmup protocol using Levin's work investment strategy.} The main issue with the proof
of~\cref{prop:uniformity:shared:randomness:warmup} is the use of a
``reverse'' Markov style argument to identify the $i$ to focus
on. This approach is 
inherently wasteful, as can be seen by considering the two extremes
cases of distribution $\p$ such that $\totalvardist{\p}{\uniform}
> \eps$: First, when a constant fraction of the elements have
probability $(1-\bigOmega{\eps})/\ab$ and the rest have probability
more than $1/\ab$, in which case we only need
$m=\bigO{1}$ batches to find such an element $i$ and
$\bigO{\ab/\eps^2}$ players per batch to detect the bias. Second, when
a fraction $\bigO{\eps}$  of the elements have probability $0$ and
the rest have probability more than $1/\ab$, in
which case we need $m=\bigO{1/\eps}$ batches to find such
an element, but now only $\bigO{\ab/\eps}$ players per batch to
detect the bias. In both cases, the total number of players should be
$\bigO{\ab/\eps^2}$, in contrast to the 
$\bigO{\ab\log 1/\eps/\eps^3}$ of the scheme described above. To
circumvent this difficulty, we take recourse to a technique known
as \emph{Levin's work investment strategy}; see, for
instance,~\cite[Appendix A.2]{Goldreich:14} for a review
(cf.~\cite[Section~8.2.4]{Goldreich:17}). Heuristically, this technique
allows us to identify an appropriate ``scale'' and invests matching
``work'' effort to it. Formally, we have following lemma:
\begin{lemma}[{\cite[Fact A.2]{Goldreich:14}}]\label{fact:uniformity:shared:randomness:levin}
Consider a random variable $X$ taking values in  $\cX$, a mapping
$q\colon \cX\to [0,1]$, and $\eps\in(0,1]$. Suppose
that $\shortexpect[ q(X) ] >\eps$, and let
$L\eqdef \clg{\log(2/\eps)}$. Then, there exists $j^\ast\in[L]$ such
that $\probaOf{ q(X) > 2^{-j^\ast} } >
2^{j^\ast}\eps/(L+5-j^\ast)^2$. 
\end{lemma}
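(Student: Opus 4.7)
The plan is to prove the lemma by contradiction, using a standard dyadic decomposition of $q(X)$'s range. Specifically, I would assume that the conclusion fails, namely that for every $j \in [L]$,
\[
    a_j \eqdef \probaOf{q(X) > 2^{-j}} \leq \frac{2^j \eps}{(L+5-j)^2},
\]
and then derive an upper bound on $\shortexpect[q(X)]$ contradicting the hypothesis $\shortexpect[q(X)] > \eps$.

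The first step is to split the expectation according to the dyadic level of $q(X)$. Writing $\shortexpect[q(X)] = \shortexpect[q(X)\indic{q(X) \leq 2^{-L}}] + \sum_{j=1}^{L} \shortexpect[q(X)\indic{2^{-j} < q(X) \leq 2^{-j+1}}]$, the first term is at most $2^{-L} \leq \eps/2$ by our choice $L = \clg{\log(2/\eps)}$, while each term in the sum is bounded by $2^{-j+1}\cdot \probaOf{q(X) > 2^{-j}} = 2^{-j+1} a_j$. Plugging in the contradiction assumption yields
\[
    \shortexpect[q(X)] \leq \frac{\eps}{2} + \sum_{j=1}^{L} 2^{-j+1} \cdot \frac{2^j \eps}{(L+5-j)^2} = \frac{\eps}{2} + 2\eps \sum_{j=1}^{L}\frac{1}{(L+5-j)^2}.
\]

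The second step is to handle the residual series. Substituting $k=L+5-j$ gives $\sum_{j=1}^{L} 1/(L+5-j)^2 = \sum_{k=5}^{L+4} 1/k^2$, which is bounded above by $\sum_{k=5}^{\infty} 1/k^2 \leq \int_{4}^{\infty} dx/x^2 = 1/4$. Inserting this in the previous display yields $\shortexpect[q(X)] \leq \eps/2 + \eps/2 = \eps$, a contradiction. I would note that the specific shift by $5$ in the denominator is engineered precisely so that the tail sum is at most $1/4$, giving the constant $2 \cdot 1/4 = 1/2$ needed to match the $\eps/2$ slack left by the truncation level $2^{-L} \leq \eps/2$.

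There is no real obstacle here; the whole argument is essentially one carefully normalized dyadic decomposition. The only thing to be careful about is verifying the integer/ceiling arithmetic around $L = \clg{\log(2/\eps)}$ so that the truncation term is indeed at most $\eps/2$, and confirming that the weights $(L+5-j)^{-2}$ produce a summable tail, which is where the constant $5$ plays its role.
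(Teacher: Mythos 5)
The paper does not prove this lemma itself; it cites it directly as Fact~A.2 of Goldreich's text and uses it as a black box. So there is no in-paper proof to compare against. That said, your argument is correct and is precisely the standard proof of this fact: you negate the conclusion, bound the expectation via a dyadic decomposition of the range $[0,1]$ into levels $(2^{-j}, 2^{-j+1}]$ for $j\in[L]$ plus a truncation piece below $2^{-L}\le\eps/2$, and show that under the negated hypothesis the telescoping weights $2^{-j+1}\cdot 2^{j}\eps/(L+5-j)^2 = 2\eps/(L+5-j)^2$ collapse to a tail series. Your reindexing $k=L+5-j$ and the bound $\sum_{k\ge5}1/k^2 \le \int_4^\infty dx/x^2 = 1/4$ are exactly right, yielding $\shortexpect[q(X)]\le\eps/2+2\eps\cdot\tfrac14=\eps$, contradicting the strict hypothesis $\shortexpect[q(X)]>\eps$. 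The only place one needs care is exactly where you flag it: the ceiling in $L=\clg{\log(2/\eps)}$ gives $2^{-L}\le\eps/2$, and the shift by $5$ makes the tail sum small enough to absorb the remaining $\eps/2$ of slack. No gaps.
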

The next result follows upon modifying the warmup protocol by using
this lemma to decide on the size and the number of batches.

\begin{proposition}\label{prop:uniformity:shared:randomness}
There exists a $1$-bit public-coin $(\ab, \eps)$-uniformity testing protocol for $\ns=\bigO{\ab/\eps^2}$, which uses $\tildeO{(\log \ab)/\eps}$ bits of randomness.
\end{proposition}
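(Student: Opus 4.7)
The plan is to refine the warmup of~\cref{prop:uniformity:shared:randomness:warmup} by sweeping simultaneously over all logarithmically many scales of non-uniformity, and charging each scale only what it deserves, via \cref{fact:uniformity:shared:randomness:levin}. First I would define the deficit function $q\colon[\ab]\to[0,1]$ by $q(i)\eqdef\max(0,1-\ab\p_i)$ and note that for $X\sim\uniformOn{[\ab]}$ one has $\shortexpect[q(X)]=\sum_i\max(0,\frac{1}{\ab}-\p_i)=\totalvardist{\p}{\uniformOn{[\ab]}}$. Hence whenever $\p$ is $\eps$-far from uniform, \cref{fact:uniformity:shared:randomness:levin} (applied with $L\eqdef\clg{\log(2/\eps)}$) produces a ``good scale'' $j^\ast\in[L]$ such that the set $G_{j^\ast}\eqdef\{\,i:\p_i<(1-2^{-j^\ast})/\ab\,\}$ has relative size at least $\alpha_{j^\ast}\eqdef 2^{j^\ast}\eps/(L+5-j^\ast)^2$ in $[\ab]$.

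Since $j^\ast$ is unknown, I would design a scale-oblivious sweep. For each $j\in[L]$, the players use shared randomness to draw $m_j\eqdef\Theta(1/\alpha_j)$ i.i.d.\ uniform elements $i^{(j)}_1,\ldots,i^{(j)}_{m_j}\in[\ab]$, and partition a disjoint block of $m_jN_j$ players into $m_j$ batches of size $N_j\eqdef\Theta(\ab\cdot 4^j\log(1/\tilde\delta_j))$. Every player in the $k$-th batch sends the single bit $\indic{X=i^{(j)}_k}$, and the referee runs, inside each batch, the Chernoff-style threshold test of the warmup to distinguish $\binomial{N_j}{1/\ab}$ from $\binomial{N_j}{(1-2^{-j})/\ab}$; the referee \reject{}s iff any batch rejects. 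Correctness under $\totalvardist{\p}{\uniform}>\eps$ then follows because at scale~$j^\ast$ one of the $m_{j^\ast}$ drawn elements lies in $G_{j^\ast}$ except with probability $(1-\alpha_{j^\ast})^{m_{j^\ast}}\le e^{-\Omega(1)}$, and conditioned on this event the corresponding batch's Chernoff test rejects with probability $1-\tilde\delta_{j^\ast}$.

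The crux is the allocation of per-test failure probabilities $\tilde\delta_j$: a naive uniform choice would incur an extra $\log(1/\eps)$ factor from the union bound over all $\sum_j m_j$ tests, essentially reproducing the $\ab\log(1/\eps)/\eps^3$ bound of the warmup. Instead, following the Levin-inspired recipe sketched in the introduction, I would set $\tilde\delta_j\eqdef c/(j^2 m_j)$, so that the scale-$j$ type-I contribution is $m_j\tilde\delta_j = c/j^2$ and the overall type-I error is bounded by $\sum_{j\ge1} c/j^2=O(1)$. The total player count then becomes
\[
\sum_{j=1}^L m_j N_j \;=\; O\!\left(\frac{\ab}{\eps}\sum_{j=1}^L 2^j(L+5-j)^2\log(1/\tilde\delta_j)\right),
\]
and reindexing by $k\eqdef L-j$ shows the inner sum is dominated by $k=O(1)$, i.e.\ by $j$ near~$L$, where $(L+5-j)^2$ is a constant and $\log(1/\tilde\delta_j)$ is controlled because $\alpha_L$, and hence $m_L$, is of constant order. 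This yields the claimed $\bigO{\ab/\eps^2}$. The randomness count is routine: the total is $\log\ab\cdot\sum_j m_j = \log\ab\cdot\sum_j 1/\alpha_j=\tildeO{(\log\ab)/\eps}$.

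I expect the main technical obstacle to be precisely the delicate accounting in the third paragraph: tracking the $\log(1/\tilde\delta_j)$ factor scale by scale is what lets one sharpen the $\ab\log(1/\eps)/\eps^3$ warmup bound into the optimal $\bigO{\ab/\eps^2}$ claim, and getting the Chernoff thresholds to interact correctly with the Levin allocation requires care on both sides of the union bound (type-I under $\uniform$ and type-II on $G_{j^\ast}$).
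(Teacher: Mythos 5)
Your proposal follows exactly the paper's strategy -- define the deficit function $q$, apply Levin's work investment (\cref{fact:uniformity:shared:randomness:levin}) to obtain a good scale $j^\ast$, sweep over all scales in parallel with $m_j\cdot N_j$ players at scale $j$, and charge each scale a non-uniform failure probability. However, your specific allocation $\tilde\delta_j\propto 1/(j^2 m_j)$ does not deliver the claimed $O(\ab/\eps^2)$ bound: it leaves an extra $\log\log(1/\eps)$ factor. To see this, note your own observation that the player count is dominated by $j$ near $L$, where $m_L=\Theta(1)$ and $(L+5-L)^2=O(1)$; but at that scale $\tilde\delta_L = c/(L^2 m_L)=\Theta(1/L^2)$, so $\log(1/\tilde\delta_L)=\Theta(\log L)=\Theta(\log\log(1/\eps))$, which is \emph{not} $O(1)$. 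Consequently the $j=L$ block alone costs $m_L N_L = \Theta\!\big(\ab\cdot 4^L\cdot \log L\big)=\Theta\!\big(\ab\log\log(1/\eps)/\eps^2\big)$. The statement in your third paragraph that ``$\log(1/\tilde\delta_j)$ is controlled because $m_L$ is of constant order'' overlooks the $j^2$ (here $L^2$) in the denominator.

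The paper avoids this by placing the polynomial penalty at the \emph{other} end of the index: $\delta_j\propto 1/\big((L+5-j)^2 m_j\big)$, so that $\log(1/\delta_j) = 4\log(L+5-j)+(L-j)+O(1)$, which is $O(1)$ precisely at the dominant scale $j\approx L$ and grows only as $j$ moves away from $L$, where the $2^j$ weight in the sum decays geometrically. The type-I error is still bounded since $\sum_{j=1}^L 1/(L+5-j)^2 = \sum_{k\ge 5}1/k^2=O(1)$. With this modified allocation, reindexing by $k=L-j$ turns the accounting into $\frac{\ab}{\eps}\cdot 2^L \sum_{k\ge 0}(k+5)^2 2^{-k}\big(4\log(k+5)+k+O(1)\big)=O(\ab/\eps^2)$, since the series converges independently of $L$. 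The remainder of your argument (the expectation identity for $q$, the Chernoff separation at bias $(1-2^{-j})/\ab$, the type-II analysis via $m_{j^\ast}\sim 1/\alpha_{j^\ast}$, and the randomness count $O(\log\ab\cdot\log^2(1/\eps)/\eps)$) is correct and matches the paper.
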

\begin{proof}
Consider a $\p$ that is \eps-far from uniform and set
$L\eqdef \clg{\log(2/\eps)}$. We
apply~\cref{fact:uniformity:shared:randomness:levin} to the
function $q\colon[\ab]\to[0,1]$ given by
\[
  q(i) \eqdef \ab\left(\frac{1}{\ab} - \p_i\right) \indic{\p_i < 1/\ab}.
\]
Note that for a uniformly distributed $X$, we have
\[
 \shortexpect[ q(X) ] = \sum_{i=1}^\ab \frac{1}{\ab} q(i) = \totalvardist{\p}{\uniform} > \eps.
\] 
Therefore, by~\cref{fact:uniformity:shared:randomness:levin}, there exists $j^\ast\in[L]$ such that 
\begin{equation}\label{e:good_j}
    \probaDistrOf{ i\sim \uniform }{ \p_i < \frac{1}{\ab} - \frac{2^{-j^\ast}}{\ab} } > 2^{j^\ast}\frac{\eps}{(L+5-j^\ast)^2}\,.
\end{equation}
We now proceed in a similar manner as the warmup protocol, with one
batch invested for each $j\in [L]$. Specifically, consider $L$
batches of players with the $j$-th batch comprising $\ns_j \eqdef
m_j\cdot \bigO{2^{2j}\ab\cdot \log(1/\delta_j) }$ players; both quantities
$\delta_j$ and $m_j$ will be specified later. The $j$th batch assumes
that $j\in[L]$ will satisfy \eqref{e:good_j} and further divides its
assigned set of players into $m_j$ mini-batches. Each mini-batch
selects an $i$ uniformly from $[\ab]$ and applies the previously
mentioned test to distinguish if the probability of $i$ is $1/\ab$
or $(1-2^{-j})/\ab$ with probability $(1-\delta_j)$. This, as before,
requires the assigned $\bigO{2^{2j}\ab\log (1/\delta_j)}$
players. Note that if 
\[
m_j\geq 5(L+5-j)^2/(2^j\eps),
\]
whenever $j$ satisfies \eqref{e:good_j}, with probability more than $9/10$
at least one of the mini-batches assigned to $j$ will select an $i$
for which $\p_i<(1-2^{-j})/\ab$. Our uniformity testing protocol is as
before: 

Accept uniform if every mini-batch of every batch accepts
$1/\ab$ as the probability of their respectively assigned elements;
else declare $\eps$-far from uniform.

If the underlying distribution is indeed uniform, the protocol will
reject it when at least one of the mini-batches erroneously rejects $1/\ab$, an
event which occurs with probability at most
\[
\sum_{j=1}^L m_j \delta_j\leq \frac 1 {10} \sum_{j=1}^L \frac1
{(L+5-j)^2} < \frac 1{10}\sum_{j\geq 5} \frac 1{j^2} < \frac 1{40},
\]
when we select
\[
\delta_j \leq \frac 1{10(L+5-j)^2m_j}.
\]
Note that this choice of $\delta_j$ depending on $j$ is important for
omitting the extra $\log(1/\eps)$ cost that appeared in the warmup 
protocol. 

If the underlying distribution is $\eps$-far from uniform, an error
occurs if, for every $j$, no mini-batch of batch $j$ selects an
element $i$ that satisfies \eqref{e:good_j} or if the mini-batch test
fails. By construction, there exists a $j$ that
satisfies \eqref{e:good_j}, and by our choice of $m_j$, all the
mini-batches assigned to it fails to select an $i$ in the set
of \eqref{e:good_j} with probability less than $1/10$. On the other
hand, the minibatch makes an error with probability less than
$\delta_j<1/40$. Thus, the overall probability of error in this case
is less than $1/8$. To conclude, the overall protocol makes an error
with probability at most $1/8$ in both cases.

Finally, we can bound the total number of players $\ns$ required by
the protocol as
\begin{align*}
\ns &= c\sum_{j=1}^L m_j \cdot 2^{2j}\ab \cdot \log \frac 1{\delta_j}
\\
&\leq \frac{c\ab}{\eps} \sum_{j=1}^L  (L+5-j)^2 2^{j} \log \frac
{10(L+5-j)^4}{2^j \eps} 
\\
&\leq \bigO{\frac{\ab}{\eps}} \sum_{j=1}^L  (L+5-j)^2 2^{j} \log \frac
{2^{j-L-5}} {L+5-j}
\\
&\leq \bigO{\frac{\ab}{\eps}}2^L\sum_{j'=5}^L  (j')^2 2^{-j'} \log (j'2^{j'})
\\
&\leq \bigO{\frac{\ab}{\eps}}2^L\sum_{j'=5}^L  (j')^2
2^{-j'} \log {j'}
+ \bigO{\frac{\ab}{\eps}}2^L\sum_{j'=5}^L  (j')^3 2^{-j'}
\\
&\leq \bigO{\frac {\ab}{\eps^2}},
\end{align*}
where the final bound uses $\sum_{j\geq 5} 2^{-j} j^\alpha \log j = \bigO{1}$ for every
$\alpha$. To conclude, note that the total number of random bits required is $O(\log \ab)\cdot \sum_{j=1}^L m_j = O(\log \ab\cdot\log^2(1/\eps)/\eps)$.
\end{proof}
Next, we move to the more general case when $\numbits\geq 1$ bits of
communication per player are allowed and
establish~\cref{theo:uniformity:shared:randomness:ub}. While we build
on the heuristics developed thus far, the form of the protocol
deviates. Instead of assigning one symbol to each mini-batch, we now
assign a subset of size $s=2^\numbits -1$ to each mini-batch; one $\numbits$-bit
sequence is reserved to indicate that none of the symbol in the subset
occured. The referee uses the symbols occuring in the
subset to distinguish uniform and $\eps$-far from
uniform, which can be done when conditional distributions (i.e., given that
the symbols lie in the subset) are separated in total variation
distance. We use Levin's work investment strategy again to decide how
many players must be assigned to each subset. But now there is a new
constraint: If a subset has small probability, then we need to assign
a large number of mini-batches to get symbols from it. However, we can
circumvent this difficulty by noting that if the subset has
probability smaller than $(1-\eps)s/\ab$, we can anyway distinguish the
underlying distribution from uniform using $\bigO{\ab/(s\eps^2)}$
players. Thus, we can condition on the complementary event by adding extra
$\bigO{\ab/(s\alpha^2)}$ players per batch and take
$\bigO{k/s}$ mini-batches to get at least one mini-batch assigned
to a good subset. Note that under the uniform distribution the conditional distribution
given a subset of size $s$ is uniform on $[s]$. Then, we can
distinguish the conditional distributions using roughly\footnote{The
good subset need not have the conditional distributions separated by
exactly $\eps$, but this is where we use Levin's work investment
strategy to get an effective separation of $\eps$.}
$\bigO{\sqrt{s}/\eps^2}$ players.  The overall number of players is
dominated by the players assigned for the conditional test and is
given by $\bigO{\ab/(2^{\numbits/2}\eps^2)}$. We provide the formal proof next. 

\begin{theorem}\label{theo:uniformity:shared:randomness:ub:levin}
Let $\numbits\in\{1,\ldots, \log\ab\}$. Then, there exists an $\numbits$-bit public-coin $(\ab, \eps)$-uniformity testing protocol for $\ns=\bigO{\frac{\ab}{2^{\numbits/2}\eps^2}}$ players, which uses $O_{\eps}(2^\numbits\log\ab)$ bits of randomness.
\end{theorem}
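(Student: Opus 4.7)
Set $s \eqdef 2^\numbits - 1$. The protocol extends the warmup of~\cref{prop:uniformity:shared:randomness} by using the $\numbits$-bit channel to describe a randomly chosen $s$-element subset of the alphabet. Specifically, a group of players will agree via public randomness on a random subset $S\subseteq[\ab]$ of size $s$; each player will use $\log(s+1)=\numbits$ bits to indicate whether their sample lies in $S$ and, if so, which element of $S$ it equals. The referee then performs two conditional tests using the messages of this group: a bias test checking whether $\p(S)\approx s/\ab$, and an $\lp[2]$-uniformity test on the conditional distribution $\p^{S}$ on $S$. The crucial structural observation, which I would establish first, is that if $\p$ is $\eps$-far from uniform, then
\[
\shortexpect_{S}\!\left[\sum_{i\in S}\left|\p_i-\frac{1}{\ab}\right|\right] \;=\; \frac{s}{\ab}\cdot\sum_{i=1}^{\ab}\left|\p_i-\frac{1}{\ab}\right| \;\geq\; \frac{2s\eps}{\ab},
\]
by symmetry, so with probability at least $\Omega(\eps)$ over $S$, at least one of (i) $|\p(S)-s/\ab|\gtrsim\eps s/\ab$ or (ii) $\totalvardist{\p^S}{\uniformOn{S}}\gtrsim\eps$ holds.

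A naive averaging argument over this $\Omega(\eps)$ event would cost an extra $1/\eps$ factor. To avoid it, I would invoke~\cref{fact:uniformity:shared:randomness:levin} just as in the warmup, applied to a quantity $q(S)$ that measures the contribution of $S$ (a convex combination of the two deviations above, weighted so that the two tests they trigger have comparable cost). This produces $L\eqdef\lceil\log(2/\eps)\rceil$ ``scales'' $j\in[L]$, and identifies some unknown $j^\ast$ at which a fraction $\alpha_{j^\ast}\gtrsim 2^{j^\ast}\eps/(L+5-j^\ast)^2$ of subsets are ``good at scale $j^\ast$'' (meaning one of the deviations holds with parameter $\eps_j\asymp 2^{-j/2}\eps$).

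For each candidate scale $j$, I would allocate a batch of $m_j\eqdef \Theta((L+5-j)^2/(2^{j}\eps))$ mini-batches, each sampling a fresh $S$ via public randomness. Within a mini-batch, the players send their $\numbits$-bit messages to the referee, who runs: (a) the $\chi^2$-bias test for $\p(S)$ against $s/\ab$ at distance parameter $\eps_j s/\ab$, requiring $\bigO{\ab/(s\eps_j^{2})}$ players; and (b) the $\lp[2]$-uniformity test of~\cite{CDVV:14} on $\p^S$, which on the conditional distribution needs $\bigO{\sqrt{s}/\eps_j^2}$ samples, inflated by a rejection-sampling factor of $\ab/s$ from the fact that each player's sample hits $S$ only with probability $\asymp s/\ab$. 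Thus mini-batch cost is $\bigO{\ab/(\sqrt{s}\eps_j^{2})}$, and batch $j$ uses $\ns_j\lesssim m_j\cdot \ab/(\sqrt{s}\eps_j^{2})\log(1/\delta_j)$ players. Set $\delta_j\asymp 1/((L+5-j)^2 m_j)$; the union bound over all mini-batches then has constant total failure mass, while the extra $\log(1/\delta_j)$ factor gets absorbed by the Levin accounting exactly as in the warmup. Summing yields $\ns=\sum_j\ns_j=\bigO{\ab/(2^{\numbits/2}\eps^2)}$, since the dominant terms collapse via $\sum_{j'\geq 5}(j')^{\alpha}2^{-j'}\log j'=\bigO{1}$. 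The referee accepts iff every mini-batch of every batch accepts, and an error analysis entirely analogous to the warmup completes correctness.

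The main obstacle is to cleanly couple the two heterogeneous tests (a) and (b) under a single Levin potential function $q(S)$: the rejection-sampling cost of (b) depends on $\p(S)$, which may itself be small precisely when $S$ is ``good'', so one must either invoke (a) to reject such $S$ outright (as sketched above) or condition on $\p(S)\asymp s/\ab$ inside the potential. Getting the scale-dependent parameters $\eps_j$, $m_j$, and $\delta_j$ simultaneously consistent with both the structural claim and the target bound takes some care. For randomness, each mini-batch uses a single random $s$-subset, costing $\bigO{\log\binom{\ab}{s}}=\bigO{s\log\ab}=\bigO{2^\numbits\log\ab}$ bits; by reusing the same subset within each batch (which does not hurt correctness, since a single realization of $S$ suffices at the correct scale), the total randomness is $\bigO{L\cdot 2^\numbits\log\ab}=O_\eps(2^\numbits\log\ab)$, as claimed.
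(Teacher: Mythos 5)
Your approach coincides with the paper's at every essential step: the structural claim $\shortexpect_S\sum_{i\in S}\left|\p_i-\tfrac{1}{\ab}\right| = \tfrac{s}{\ab}\normone{\p-\uniform}$, the invocation of Levin's work investment strategy over $L=\clg{\log(2/\eps)}$ scales, the two-test decomposition (bias test on $\p(S)$ followed by rejection-sampling plus $\lp[2]$-uniformity test on $\p^S$), and the scale-dependent parameters $\eps_j,m_j,\delta_j$ with $\delta_j\propto 1/((L+5-j)^2 m_j)$ are all exactly the paper's choices. Your resolution of the ``coupling'' concern (use test (a) to reject outright when $\p(S)$ strays from $s/\ab$, so that the Levin potential is effectively only the conditional TV distance) is also what the paper does, via the potential $\sum_{i\in S}\indic{\ab\p_i/s\leq 1/s}\bigl(\tfrac{1}{s}-\tfrac{\ab\p_i}{s}\bigr)$ combined with a separate check on $\p(S)$.

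However, your final step on the randomness budget contains a genuine error. You claim that the mini-batches within a batch may all \emph{reuse the same} random subset $S$, ``since a single realization of $S$ suffices at the correct scale,'' and derive a randomness bound of $O(L\cdot 2^\numbits\log\ab)$. This is false: at scale $j^\ast$, Levin's lemma guarantees only that a random $S$ is good with probability $\gtrsim 2^{j^\ast}\eps/(L+5-j^\ast)^2$, which for small $j^\ast$ can be as small as $\Theta(\eps/\log^2(1/\eps))$. The entire purpose of the $m_{j}\approx(L+5-j)^2/(2^j\eps)$ mini-batches is to draw that many \emph{independent} subsets so that at least one lands on a good $S$ with constant probability; if all mini-batches in a batch share a single $S$, then that amplification disappears and the protocol is only correct with probability $O(\eps)$ on far-from-uniform inputs. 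Fortunately, the error is also unnecessary: keeping fresh subsets per mini-batch, the total count is $\sum_j m_j = O(\log^2(1/\eps)/\eps)$ subsets, each described in $O(2^\numbits\log\ab)$ bits, and the total randomness is $O(2^\numbits\log\ab\cdot\log^2(1/\eps)/\eps)=O_\eps(2^\numbits\log\ab)$ as required, which is precisely the bound the paper obtains.
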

\begin{proofof}{\cref{theo:uniformity:shared:randomness:ub:levin}}
Set $L\eqdef \clg{\log(2/\eps)}$ and define $q$ as in the proof
of~\cref{prop:uniformity:shared:randomness}. For a subset
$S\subseteq[\ab]$, denote by $\p^S$ the conditional distribution
\[
\p^S_i = \frac{\p_i\indic{i\in S}}{\p(S)},
\]
where $\p(S) = \sum_{i\in S}\p_i$.  Observe that if $\p=\uniform$,
then for every subset $S\subseteq[\ab]$ we have $\p^S_i = 1/|S|$ for
every $i \in S$. On the other hand,  when $\p$ is $\eps$-far from
uniform, we have the following result:
\begin{claim}\label{claim:uniformity:l:bits:expected:restricted:tv}
    Suppose $\totalvardist{\p}{\uniform}>\eps$. For any $1\leq
    s\leq \ab$ and $S\subseteq[\ab]$ of size $s$ chosen uniformly at
    random, we have
    \[
        \shortexpect_S \sum_{i\in S} \indic{\p_i \leq \frac{1}{\ab} }\left( \frac{1}{\ab} - \p_i \right) > \eps\cdot\frac{s}{\ab}\,.
    \]
\end{claim}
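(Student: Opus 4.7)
The plan is to prove this by a direct linearity-of-expectation computation, converting the expectation over the random subset $S$ into a sum weighted by inclusion probabilities, and then recognizing the resulting sum as the total variation distance from uniform.

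First, I would swap the order of summation and expectation to write
\[
\shortexpect_S \sum_{i\in S} \indic{\p_i \leq 1/\ab}\left(\frac{1}{\ab} - \p_i\right)
= \sum_{i=1}^{\ab} \probaOf{i \in S}\cdot \indic{\p_i \leq 1/\ab}\left(\frac{1}{\ab} - \p_i\right).
\]
Since $S$ is a uniformly random subset of $[\ab]$ of size $s$, by symmetry $\probaOf{i\in S} = s/\ab$ for every $i\in[\ab]$, so the right-hand side equals
\[
\frac{s}{\ab}\sum_{i=1}^{\ab} \indic{\p_i \leq 1/\ab}\left(\frac{1}{\ab} - \p_i\right).
\]

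Next, I would recall the standard identity $\totalvardist{\p}{\uniform} = \sum_{i:\p_i \leq 1/\ab}(1/\ab - \p_i)$, which comes from splitting $[\ab]$ according to whether $\p_i \leq 1/\ab$ or $\p_i > 1/\ab$ and observing that the positive and negative parts of $\p_i - 1/\ab$ have equal absolute mass (both equal to $\totalvardist{\p}{\uniform}$). Using the hypothesis $\totalvardist{\p}{\uniform}>\eps$, the inner sum is strictly greater than $\eps$, and multiplying by $s/\ab$ yields the desired bound.

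There is no genuine obstacle here; the claim reduces essentially to the observation that a uniformly chosen size-$s$ subset hits each element with probability $s/\ab$. The only point that warrants care is the strict inequality, which is preserved because the multiplier $s/\ab$ is positive.
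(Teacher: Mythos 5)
Your proof is correct and takes essentially the same route as the paper: both swap the order of summation/expectation to obtain $\probaOf{i\in S}=s/\ab$ (the paper derives this by the explicit count $\binom{\ab-1}{s-1}/\binom{\ab}{s}$, you by symmetry) and then identify the remaining sum $\sum_{i:\p_i\leq 1/\ab}(1/\ab-\p_i)$ with $\totalvardist{\p}{\uniform}$.
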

\begin{proof}
On expanding the expectation, we obtain
\begin{align*}
  \shortexpect_S \sum_{i\in S} &\indic{\p_i \leq \frac{1}{\ab} }\left( \frac{1}{\ab} - \p_i \right)
\\
  &= \frac{1}{\binom{\ab}{s}} \sum_{S\subseteq [\ab]: \abs{S} = s } \sum_{i\in S} \indic{\p_i \leq \frac{1}{\ab} }\left( \frac{1}{\ab} - \p_i \right)  
    = \frac{1}{\binom{\ab}{s}} \sum_{i=1}^\ab \sum_{\substack{S\subseteq [\ab]: \abs{S} = s \\ S\ni i}} \indic{\p_i \leq \frac{1}{\ab} }\left( \frac{1}{\ab} - \p_i \right)  \\
    &= \frac{1}{\binom{\ab}{s}} \sum_{i=1}^\ab \indic{\p_i \leq \frac{1}{\ab} }\left( \frac{1}{\ab} - \p_i \right) \sum_{\substack{S'\subseteq [\ab]\setminus i \\ \abs{S'} = s-1}} 1  
    = \frac{1}{\binom{\ab}{s}} \sum_{i=1}^\ab \indic{\p_i \leq \frac{1}{\ab} }\left( \frac{1}{\ab} - \p_i \right) \cdot \binom{\ab-1}{s-1} \\
    &= \frac{s}{\ab} \sum_{i=1}^\ab \indic{\p_i \leq \frac{1}{\ab} }\left( \frac{1}{\ab} - \p_i \right) \\
    &= \frac{s}{\ab} \cdot \totalvardist{\p}{\uniform} > \eps\cdot\frac{s}{\ab}\,.
\end{align*}
\end{proof}
For brevity, set $s\eqdef 2^\numbits-1$.
Using~\cref{claim:uniformity:l:bits:expected:restricted:tv} together
with~\cref{fact:uniformity:shared:randomness:levin} we get that there exists $j^\ast\in[L]$ such that 
\begin{equation}
    \probaDistrOf{ S }{ \sum_{i\in S} \indic{\frac{\ab\p_i}{s} \leq \frac{1}{s} }\left( \frac{1}{s} - \p_i\cdot\frac{k}{s} \right) > 2^{-j^\ast} } > 2^{j^\ast}\cdot\frac{\eps}{(L+5-j^\ast)^2}. 
\label{e:Levin_set}
\end{equation}
Note that the event on the left-side of the inequality above
essentially bounds the total variation distance between $\uniform_S$
and $\p^S$ when $\p(S) \approx s/\ab$. Therefore, in case players have
access to such a subset $S$, they can accomplish uniformity testing by
applying a standard uniform test for a domain of size $s$. Thus, as in
the $\numbits=1$ protocol, we can use a public randomness to select a
subset $S$ randomly and assign it to an appropriate number of
players; this constitutes one mini-batch, and we need one mini-batch
per $j\in[L]$. However, this will only work if the selected $S$ has
$\p(S)\approx s/\ab$. To circumvent this difficulty, we use a separate
test for checking closeness of $\p(S)$ to $s/\ab$.  Specifically, we
once again use the fact that a coin with bias $s/\ab$ can
 be distinguished from another with bias outside the interval
 $[(1-\alpha)s/\ab, (1+\alpha)s/\ab$ with probability of error less
 than $\delta$ using $\bigO{\ab \log (1/\delta)/(s\alpha^2)}$
 independent coin tosses.

Once we have verified that the set $S$ has probability close to
$s/\ab$, we can apply a standard uniformity test. Indeed, we set
$\alpha=2^{-j^\ast}/8$, and once the test above has verified
that $\p(S) \in [1-2^{-j^\ast}/8,
1+2^{-j^\ast}/8]\cdot \frac{s}{\ab}$, the total variation distance of
$\p^S$ to uniformity can be bounded as follows: Let
vector $\tilde{\p}^S$ be given by $\tilde{\p}^S_i = \p_i \ab/s$. Then,
by the triangle inequality we get
\[
  \totalvardist{\p^S}{\uniform_S} = \frac{1}{2}\normone{\p^S-\uniform_S} \geq \frac{1}{2}\left( \normone{\tilde{\p}^S - \uniform_S } - \normone{\tilde{\p}^S - \p^S } \right)
  > \frac{1}{2}\left( 2^{-j^\ast} - \normone{\tilde{\p}^S - \p^S } \right).
\]
Further, 
\[
\normone{\tilde{\p}^S - \p^S } = \sum_{i\in S} \p_i \abs{\frac{\ab}{s}
- \frac{1}{\p(S)}} \leq \frac{2^{-j}\ab}{4s}\cdot \p(S) \leq \frac{2^{-j}}{3},
\]
which gives
\begin{equation}\label{eq:uniformity:l:bits:conditional:tv}
    \totalvardist{\p_S}{\uniform_S} \geq \frac{2^{-j}}3.
\end{equation}
Therefore, upon the first test verifying that $\p(S)$ is sufficiently
close to $s/\ab$, we can proceed to testing $\p^S$ versus
$\uniform_S$. To that end, we need sufficiently many samples from the $\p^S$, which we generate using rejection sampling.

We have now collected all the components needed for our scheme. 
As in the $\numbits=1$ case, set parameters $\eps_j=2^{-j}/8$, $m_j=(L+5-j)^2/(2^j \eps)$, and
$\delta_j= 1/(10(L+5-j)^2m_j)$. Consider $L$ batches of
players, with the $j$th batch comprising $m_j$ mini-batches 
of 
\[
n_j = c_1\left(\frac{\ab}{s\eps_j^2} \log \frac 1 {\delta_j}\right)+ c_2\left(\frac{\ab}{s}\log \frac 1
{\delta_j}\right)\cdot c_3\left(\frac{\sqrt{s}}{\eps_j^2} \log \frac 1 {\delta_j}\right)
\]
players each; the constants $c_1, c_2,c_3$ will be set to get
appropriate probability of errors. Each mini-batch of the $j$th batch generates a random
subset $S$ of $[\ab]$. Each player in the mini-batch communicates as
follows: It sends the all-zero sequence of length $\numbits$ to indicate
if its observed element is not in $S$ and, otherwise, uses the remaining sequences
of length $\numbits$ to indicate which of the $s=2^\numbits -1$ elements it
has observed. The referee uses the communication from the first
$(c_1\ab\log 1/\delta_j/(s\eps_j^2))$ players of the
mini-batch to check if the $|\p(S)-
s/\ab|< \eps_js/\ab$ or not. If it is not, the mini-batch fails. Else,
the referee considers the communication from players that did not send
the all-zero sequence (i.e., those players that saw elements in $S$)
and tests if the conditional distribution $\p^S$ is uniform on $S$ or
not. If it is not, the mini-batch fails; the referee declares
uniformity if none of the mini-batches declared failure.

The analysis of this protocol is completed in a similar manner to that
of the 
$\numbits=1$ protocol. 
If the underlying distribution is uniform, the
output is erroneous if at least one of the mini-batches declared
failure. This can happen in two ways: First, if the test based on communication
from the first set of $c_1\ab\log (1/\delta_j)/s\eps_j^2$ players
erroneously declared fail, an event that can happen with probability
$\delta_j/3$ for an appropriately chosen constant $c_1$. Second, if the first test passes, but the uniformity test for $\p^S$ based on the
remaining remaining set of players fails, which can happen either when
there are less than $c_2\sqrt{s}/\eps_j^2\log (1/\delta_j)$ players that see samples from $S$, which  given
that the first set has passed will fail with probability less than
 $\delta_j/3$ for an appropriate $c_2$, or when the second test fails which for an appropriate $c_3$ will happen with
 probability less than $\delta_j/3$ by~\cref{eq:uniformity:l:bits:conditional:tv}. Thus, the overall probability
 of error is less than $\sum_{j=1}^L m_j \delta_j< 1/40$.

If the underlying distribution is $\eps$-far from uniform, the test
will erroneously select the uniform if the referee makes an error for
the mini-batches corresponding to $j^*$ guaranteed
by \eqref{e:Levin_set}. But this can only happen if either none of
these mini-batches select an $S$ satisfying the condition on the
left-side of \eqref{e:Levin_set}, which happens with probability less
than $1/10$ or
a mini-batch that selected an appropriate $S$ failed the second test,
which can happen with probability $\delta_j/3 \leq 1/10$. Thus, the
overall probability of error is less than $2/10$. 

We complete the
proof by evaluating the number of players used by the protocol. As in
the proof for the $\numbits=1$ case, we have that the total number of
players $n$ satisfies
\begin{align*}
n &\leq \sum_{j=1}^L m_j\bigO{\frac{\ab}{s\eps_j^2}\log \frac 1
{\delta_j} +  \frac{\ab}{\sqrt{s}\eps_j^2}\left(\log \frac 1
{\delta_j}\right)^2}
\\
&\leq \sum_{j=1}^L m_j\bigO{\frac{\ab}{\sqrt{s}\eps_j^2}\left(\log \frac 1
{\delta_j}\right)^2}
\\
&\leq \bigO{\frac \ab {\sqrt{s}\eps}} 2^L\sum_{j=5}^L 2^{-j}(j)^5(\log j)^2
\\
&\leq \bigO{\frac{\ab}{2^{\frac{\numbits}{2}}\eps^2}},
\end{align*}
where we followed the same steps as the bound for $\numbits=1$ case. To conclude, note that since each subset $S$ of size $s$ requires $\log\binom{\ab}{s} = O(2^\numbits\log\frac{\ab}{2^\numbits})$ bits to specify, the total number of random bits required is $O(2^\numbits\log\frac{\ab}{2^\numbits})\cdot \sum_{j=1}^L m_j = O(2^\numbits\log\frac{\ab}{2^\numbits}\cdot\log^2(1/\eps)/\eps)$.
\end{proofof}
 \subsection{Lower bound for public-coin protocols}\label{sec:uniformity:lb}
We now establish a lower bound on the number of players $\ns$ required
for any $\ell$-bit public-coin $(\ab,\eps)$-uniformity testing
protocol. We begin with the simpler setting of $\ell=1$ and
establish~\cref{theo:uniformity:shared:randomness:lb} for this special
case, before moving to the more general case. The same construction is
used for both the restricted and the general case, but the simpler
proof we present for the special case does not yield the more general
result.
\begin{proposition}\label{prop:uniformity:shared:randomness:lb}
  Any $1$-bit public-coin $(\ab, \eps)$-uniformity testing protocol
  must have $\ns=\bigOmega{\ab/\eps^2}$ players.
\end{proposition}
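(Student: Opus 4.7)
I would apply Le Cam's two-point method with Paninski's perturbation family. For even $\ab$ and each $z\in\{-1,+1\}^{\ab/2}$, define the distribution $\p_z$ by $\p_z(2i-1) = (1+2\eps z_i)/\ab$ and $\p_z(2i) = (1-2\eps z_i)/\ab$; each such $\p_z$ satisfies $\totalvardist{\p_z}{\uniformOn{\ab}} = \eps$. It suffices to show that when $\ns = o(\ab/\eps^2)$, no $1$-bit public-coin protocol can distinguish $\uniformOn{\ab}$ from a uniformly random $\p_z$ from this family with constant probability. By the data-processing inequality, this reduces to bounding below $1/3$ the total variation distance between the induced message distributions under $\uniformOn{\ab}$ and under the uniform mixture of the $\p_z$'s.

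The next step is to condition on the public randomness $U=u$: this makes both induced message distributions products over the $\ns$ players (since private coins and samples are independent across players), and a uniform bound in $u$ suffices. Fix such a $u$, let $P$ denote the product message distribution under $\uniformOn{\ab}$, and $Q_z$ the one under $\p_z$. Applying, in order, convexity of $\totalvardist{\cdot}{\cdot}$ in its second argument, Pinsker's inequality, Jensen's inequality (for $\sqrt{\cdot}$), and tensorization of KL divergence, I get
\[
    \totalvardist{P}{\shortexpect_z[Q_z]}^2
    \;\leq\; \shortexpect_z\!\left[\totalvardist{P}{Q_z}\right]^2
    \;\leq\; \tfrac{1}{2}\sum_{j=1}^{\ns} \shortexpect_z[D(Q_{z,j} \| P_j)].
\]

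The core technical step is then a per-player bound on $\shortexpect_z[D(Q_{z,j}\|P_j)]$. Writing $a_x \eqdef W_j(1\mid x)\in[0,1]$ for the probability that player $j$ sends $1$ on input $x$, and $\mu_j \eqdef (1/\ab)\sum_x a_x = P_j(1)$, a direct calculation using $\shortexpect_z[z_i z_{i'}] = \indic{i=i'}$ yields
\[
    \shortexpect_z\!\left[(Q_{z,j}(1) - \mu_j)^2\right] \;=\; \frac{4\eps^2}{\ab^2}\sum_{i=1}^{\ab/2}(a_{2i-1} - a_{2i})^2.
\]
The crucial $\numbits=1$-specific observation is that $a_x^2 \leq a_x$ for $a_x\in[0,1]$, so $\sum_i(a_{2i-1}-a_{2i})^2 \leq \sum_x a_x^2 \leq \sum_x a_x = \ab\mu_j$. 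Combining this with the Bernoulli bound $D(q\|p)\leq (q(1)-p(1))^2/(p(1)(1-p(1)))$ and assuming without loss of generality $\mu_j\leq 1/2$ (by symmetry $0\leftrightarrow 1$), I get $\shortexpect_z[D(Q_{z,j}\|P_j)] \lesssim \eps^2/\ab$ per player. Summing over the $\ns$ players gives $\totalvardist{P}{\shortexpect_z[Q_z]}^2 \lesssim \ns\eps^2/\ab$, and requiring this to be bounded away from a constant forces $\ns=\Omega(\ab/\eps^2)$.

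The main obstacle is not any single calculation but pinning down the right $\numbits=1$-specific inequality $a_x^2\leq a_x$; it is precisely this step that degrades (from $\ab\mu_j$ to roughly $2^{\numbits}\mu_j$) and becomes the loose link for general $\numbits$, forcing the much more delicate $\chi^2$-expansion argument with sub-Gaussian control via a transportation method, as sketched in the overview for the general lower bound.
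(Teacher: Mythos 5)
Your proposal is correct and takes essentially the same route as the paper: the same Paninski perturbation family, the same reduction to deterministic protocols by fixing the public randomness, the same Pinsker-plus-tensorization-plus-$\chi^2$ chain, the same $\shortexpect_\theta[\theta_i\theta_{i'}]=\indic{i=i'}$ second-moment calculation, and the same Le Cam two-point conclusion. The only cosmetic difference is in the final per-player step exploiting $\numbits=1$: you invoke $a_x^2\leq a_x$ to get $\sum_i(a_{2i-1}-a_{2i})^2\leq \ab\mu_j$, whereas the paper writes $(\alpha-\beta)^2\leq\lvert\alpha-\beta\rvert(\alpha+\beta)\leq(\alpha+\beta)$ — both rest on the $[0,1]$-valuedness of the response probabilities and yield the same $O(\eps^2/\ab)$ per-player bound.
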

\begin{proof}
Without loss of generality, we assume that $k$ is even. We consider
the standard ``Paninski construction'':\footnote{This construction was
  given in~\cite{Paninski:08} to prove the lower bound for the sample
  complexity of uniformity testing in the standard centralized
  setting.}{} For every $\theta\in\{-1,1\}^{\ab/2}$, let
\[
    \p_\theta(2i-1) = \frac{1+2\eps \theta_i}{k}, \qquad
    \p_\theta(2i-1) = \frac{1-2\eps \theta_i}{k}, \quad \forall\,
    i\in[\ab/2].
\]
Note that each distribution $\p_\theta$ is at a total variation
distance $\eps$ from the uniform distribution $\uniform$ on
$[\ab]$. Thus, any $(\ab, \eps)$-uniformity testing protocol should be
able to distinguish between any distribution $\p_\theta$ and
$\uniform$. We will establish an upper bound on the average total
variation distance between $\p_\theta$ and $\uniform$, whereby there
must be at least one $\p_\theta$ satisfying the bound. The desired
lower bound for the number of players will then follow from the
standard Le Cam's two-point method argument.

We derive the aforementioned upper bound on average total variation
distance for private-coin protocols first. Specifically, consider a
private-coin protocol for uniformity testing where, as before, the
$1$-bit communication of player $j$ is described by the channel
$W_j\colon [\ab]\to \{0,1\}$ such that $W_j(1|x)\in[0,1]$ is the
probability that player $j$ sends $1$ to the referee upon observing
$x$. For any $1\leq j\leq \ns$, it is immediate to see that, if
$\uniform$ is the underlying distribution, the probability that player
$j$ sends $1$ to the referee is
\[
    \rho_{j}^{\uniform} \eqdef \frac{2}{\ab}\sum_{i=1}^{\ab/2} \left(
    \frac{W_j(1|2i-1)+W_j(1|2i)}{2} \right),
\]
while under $\p_\theta$ it is
\begin{align*}
    \rho_{j}^{\theta} &\eqdef \frac{2}{\ab}\sum_{i=1}^{\ab/2} \left(
    \frac{W_j(1|2i-1)+W_j(1|2i) +
      2\eps\theta_i\left(W_j(1|2i-1)-W_j(1|2i)\right)}{2} \right)
    \\ &= \rho_{j}^{\uniform} +
    \frac{\eps}{\ab}\sum_{i=1}^{\ab/2}\theta_i\left(
    W_j(1|2i-1)-W_j(1|2i) \right)\,.
\end{align*}
Moreover, since each player gets an independent sample from the same
distribution
$\p\in\{\uniform\}\cup\{\p_\theta\}_{\theta\in\{-1,1\}^{\ab/2}}$, the
observation of the referee $r\in\{0,1\}^{\ns}$ is generated from a
product distribution. Specifically, the bits communicated by the
players are independent with the $j$th bit distributed as
$\bernoulli{\p_j^\uniform}$ or $\bernoulli{\p_j^\theta}$,
respectively, when the underlying distribution of the sample are
$\uniform$ or $\p_\theta$. Denoting by $\mathbf{R}^\uniform$ and
$\mathbf{R}^\theta$ the distributions of the transmitted bits under
$\uniform$ and $\p_\theta$, repectively, we have
\begin{align*}
      \totalvardist{\mathbf{R}^{\uniform}}{\mathbf{R}^{\theta}}^2
      &\leq
      \frac{1}{2}D\left(\mathbf{R}^{\theta}\|\mathbf{R}^{\uniform}
      \right) \\ &=
      \frac{1}{2}\sum_{j=1}^{\ns}D\left(\bernoulli{\p_j^\theta}\|\bernoulli{\p_j^\uniform}\right)
      \\ &\leq\frac{1}{2}\sum_{j=1}^{\ns}\chi^2(\bernoulli{\p_j^\theta},\bernoulli{\p_j^\uniform})
      \\ &=\frac{1}{2}\sum_{j=1}^{\ns}\frac{ (\rho_{j}^{\uniform} -
        \rho_{j}^{\theta}
        )^2}{\rho_{j}^{\uniform}(1-\rho_{j}^{\uniform})},
\end{align*}
where the first inequality is Pinsker's inequality, the second
inequality uses $\ln x\leq (x-1)$. Further, abbreviating for
convenience $\alpha_{i,j} \eqdef W_j(1|2i-1)$ and $\beta_{i,j} \eqdef
W_j(1|2i)$ for $i\in[\ab/2]$ and $j\in[\ns]$, to bound the right-side
we note that
\begin{align*}
 \frac{ (\rho_{j}^{\uniform} - \rho_{j}^{\theta}
   )^2}{\rho_{j}^{\uniform}(1-\rho_{j}^{\uniform})} &=
 \frac{4\eps^2}{\ab^2}
 \frac{1}{\rho_{j}^{\uniform}(1-\rho_{j}^{\uniform})} \left(
 \sum_{i=1}^{\ab/2} \theta_i\left( \alpha_{i,j}-\beta_{i,j} \right)
 \right)^2 \\ &= \frac{4\eps^2}{\ab^2}
 \frac{1}{\rho_{j}^{\uniform}(1-\rho_{j}^{\uniform})} \sum_{i,
   i'=1}^{\ab/2} \theta_i\theta_i'\left( \alpha_{i,j}-\beta_{i,j}
 \right)\left( \alpha_{i',j}-\beta_{i',j} \right).
\end{align*}
On taking expectation over $\theta$, we obtain
\begin{align*}
\bE{\theta}{\frac{ (\rho_{j}^{\uniform} - \rho_{j}^{\theta}
    )^2}{\rho_{j}^{\uniform}(1-\rho_{j}^{\uniform})} } &=
\frac{4\eps^2}{\ab^2}
\frac{1}{\rho_{j}^{\uniform}(1-\rho_{j}^{\uniform})} \sum_{i,
  i'=1}^{\ab/2} \bE{\theta}{\theta_i\theta_i'}\left(
\alpha_{i,j}-\beta_{i,j} \right)\left( \alpha_{i',j}-\beta_{i',j}
\right) \\ &=\frac{4\eps^2}{\ab^2}
\frac{1}{\rho_{j}^{\uniform}(1-\rho_{j}^{\uniform})}
\sum_{i=1}^{\ab/2} \left( \alpha_{i,j}-\beta_{i,j} \right)^2 \\ &=
\frac{4\eps^2}{\ab^2}\frac{\sum_{i=1}^{\ab/2} \left(
  \alpha_{i,j}-\beta_{i,j} \right)^2 }{\frac{2}{\ab}\sum_{i=1}^{\ab/2}
  \frac{ \alpha_{i,j}+\beta_{i,j} }{2} \left
  (1-\frac{2}{\ab}\sum_{i=1}^{\ab/2} \frac{ \alpha_{i,j}+\beta_{i,j}
  }{2} \right)}.
\end{align*}
To bound the expression on the right-side further, we consider the
case when $\frac{2}{\ab}\sum_{i=1}^{\ab/2} \frac{
  \alpha_{i,j}+\beta_{i,j} }{2} \leq 1/2$; the other case can be
handled similarly by symmetry. We have
\begin{align*}
\frac{\sum_{i=1}^{\ab/2} \left( \alpha_{i,j}-\beta_{i,j} \right)^2
}{\frac{2}{\ab}\sum_{i=1}^{\ab/2} \frac{ \alpha_{i,j}+\beta_{i,j} }{2}
  \left (1-\frac{2}{\ab}\sum_{i=1}^{\ab/2} \frac{
    \alpha_{i,j}+\beta_{i,j} }{2} \right)} &\leq
2\frac{\sum_{i=1}^{\ab/2} \left( \alpha_{i,j}-\beta_{i,j} \right)^2
}{\frac{2}{\ab}\sum_{i=1}^{\ab/2} \frac{ \alpha_{i,j}+\beta_{i,j}
  }{2}}\\
&\leq 2\ab \frac{\sum_{i=1}^{\ab/2} \abs{ \alpha_{i,j}-\beta_{i,j} }
  \left( \alpha_{i,j}+\beta_{i,j} \right) }{\sum_{i=1}^{\ab/2} \left(
  \alpha_{i,j}+\beta_{i,j} \right) } \\ &\leq 2\ab \max_{1\leq i\leq
  \ab/2}\abs{ \alpha_{i,j}-\beta_{i,j} } \leq 2\ab,
\end{align*}
where the previous inequality holds since $\alpha_{i,j}-\beta_{i,j}
\in [-1,1]$ for all $i,j$. Combining the foregoing bounds yields
\[
\bE{\theta}{
  \totalvardist{\mathbf{R}^{\uniform}}{\mathbf{R}^{\theta}}^2 } \leq
\frac{4\eps^2}{\ab}\cdot \ns.
\]
In particular, there exists a fixed $\theta$ for which
$\totalvardist{\mathbf{R}^{\uniform}}{\mathbf{R}^{\theta}}^2 \leq
4\eps^2n/\ab$. By the two-point method argument, the uniformity
testing protocol can only distinguish $\uniform$ and $\p_\theta$ if $
\totalvardist{\mathbf{R}^{\uniform}}{\mathbf{R}^{\theta}} =
\bigOmega{1}$, which yields $\ns=\bigOmega{\ab/\eps^2}$ as claimed.

Finally, to extend the result to public-coin protocols, note that the
observation of the referee now includes $U$ in addition to the
communication. Denote by $\mathbf{R}_U^\uniform$ and
$\mathbf{R}_U^\theta$ the distribution of the communicated bits under
$\uniform$ and $\p_\theta$, respectively. Then the total variational
distance between the distributions of the observation of the adversary
under the two distributions is given by
$\bE{U}{\totalvardist{\mathbf{R}_U^{\uniform}}{\mathbf{R}_U^{\theta}}^2
}$. Therefore, it suffices to find a uniform upper bound for the
expected value of the total variation with respect to $\theta$ for
different fixed values of the public randomness $U$. This uniform
bound can be shown to be $4\eps^2\ns/\ab$ by repeating the proof above
for every fixed $U=u$.
\end{proof}

Moving now to the case of a general $\numbits$, we can follow the
argument above to obtain an $O(\eps^2\ns 2^\numbits/\ab)$ upper bound
for the expected total variation distance. However, this only yields
an $\Omega(\ab/(2^\numbits\eps^2))$ lower bound for $\ns$, which is
off by a factor of $2^{\numbits/2}$ from the desired bound
of~\cref{theo:uniformity:shared:randomness:ub}.  The slackness in the
bound stems from the gap between the average total variation distance
and the total variation distance between the average $\p_\theta$ and
$\uniform$. Indeed, since the uniformity testing protocol can
distinguish $p_\theta$ and $\uniform$ for every $\theta$, it can also
distinguish $\bE{\theta}{p_\theta}$ and $\uniform$. Note that the
KL-divergence-based bound for total variation distance used above is
not amenable to handling the distance between a mixture of product
distribution and a fixed product distribution. Instead, we take
recourse to an argument of Pollard~\cite{Pollard:2003} which
established, in essence, the following result.
\begin{lemma}
For any two product distributions $P^\ns = P_1 \times \dots \times
P_\ns$ and $Q^\ns = Q_1 \times \dots \times Q_\ns$ on the alphabet
$\cX$,
\[
\chi^2(Q^\ns, P^\ns) = \prod_{i=1}^\ns (1+ \chi^2(Q_i, P_i)) - 1.
\]
\end{lemma}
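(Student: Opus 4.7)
The plan is to prove the tensorization identity by a direct expansion, exploiting the product structure. Writing $1+\chi^2(Q^\ns, P^\ns)$ as a single sum over $\cX^\ns$, the integrand factorizes in the coordinates, and the sum of products becomes a product of sums, which is precisely $\prod_{i=1}^\ns(1+\chi^2(Q_i,P_i))$.

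Concretely, I will first recall the (standard) identity $1+\chi^2(Q,P)=\sum_{x}\frac{Q(x)^2}{P(x)}$, which follows from expanding the square in the numerator of $\chi^2(Q,P)=\sum_x\frac{(Q(x)-P(x))^2}{P(x)}$ and using $\sum_x Q(x)=\sum_x P(x)=1$. Applied to the product distributions,
\[
1+\chi^2(Q^\ns,P^\ns)\;=\;\sum_{(x_1,\dots,x_\ns)\in\cX^\ns}\frac{Q^\ns(x_1,\dots,x_\ns)^2}{P^\ns(x_1,\dots,x_\ns)}\;=\;\sum_{(x_1,\dots,x_\ns)}\prod_{i=1}^\ns\frac{Q_i(x_i)^2}{P_i(x_i)}.
\]
Then Fubini/distributivity over finite sums yields
\[
\sum_{(x_1,\dots,x_\ns)}\prod_{i=1}^\ns\frac{Q_i(x_i)^2}{P_i(x_i)}\;=\;\prod_{i=1}^\ns\sum_{x_i\in\cX}\frac{Q_i(x_i)^2}{P_i(x_i)}\;=\;\prod_{i=1}^\ns\bigl(1+\chi^2(Q_i,P_i)\bigr).
\]
Subtracting $1$ from both sides gives the claimed identity.

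There is essentially no obstacle here beyond bookkeeping: the proof reduces to the elementary algebraic fact that $\sum_x Q(x)^2/P(x)=1+\chi^2(Q,P)$ together with the fact that a sum of a product of univariate functions factorizes across coordinates. The only subtlety worth flagging is that one should adopt the ``standard'' form $\chi^2(Q,P)=\sum_x(Q(x)-P(x))^2/P(x)$ (rather than any variant with $P(x)(1-P(x))$ in the denominator), since it is precisely this form for which the tensorization identity holds; this is the version the lemma intends and the one used in the subsequent application. No measure-theoretic care is required because the alphabet $\cX$ is discrete and finite, so all sums are absolutely convergent and interchange of sums is immediate.
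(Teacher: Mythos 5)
Your proof is correct. The paper itself does not prove this lemma; it simply cites Pollard~\cite{Pollard:2003} for it and only supplies a proof for the mixture extension (\cref{lem:mixture_chisquare}), which proceeds by writing the likelihood ratio as $\prod_i(1+\Delta_i)$ with $\Delta_i \eqdef (Q_i-P_i)/P_i$ and expanding that product multilinearly, using $\shortexpect_{P_i}[\Delta_i]=0$. Your argument is the more direct (and entirely standard) route: rewrite $1+\chi^2(Q,P)$ as $\sum_x Q(x)^2/P(x)$ and tensorize by Fubini. Specializing the paper's multilinear-expansion proof of \cref{lem:mixture_chisquare} to a single product distribution would recover the same identity, so the two approaches are essentially equivalent, with yours being slightly more elementary and the paper's generalizing more readily to the mixture setting that is actually needed downstream.

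One point you raise is worth emphasizing: the preliminaries of the paper define $\chi^2(\p,\q)=\sum_x \frac{(\p_x-\q_x)^2}{\q_x(1-\q_x)}$, and with \emph{that} normalization the tensorization identity is false. The lemma (and its use in the proof of \cref{theo:uniformity:shared:randomness:lb}, where $\Delta_i^z=(Q_{i,z}-P_i)/P_i$ is taken without the $(1-P_i)$ factor) implicitly uses the standard $\chi^2(Q,P)=\sum_x (Q(x)-P(x))^2/P(x)$. You are right to flag this, and right that the standard form is the one intended; good catch.
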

For our application, we need to extend this to the case when the
product distribution $P^\ns$ is replaced by a mixture of product
distributions. To that end, we use the following result which is a
slight but crucial extension of this result, also described
in~\cite{Pollard:2003} (a similar observation was used
in~\cite{Paninski:08}); we include a proof for completeness.
\begin{lemma}\label{lem:mixture_chisquare}
Consider a random variable $Z$ such that for each $Z=z$ the
distribution $Q_z^\ns$ is defined as $Q_{1,z} \times \dots \times
Q_{n,z}$. Further, let $P^\ns = P_1 \times \dots \times P_\ns$ be a
fixed product distribution. Then,
\[
\chi^2(\bE{Z}{Q_Z^\ns}, P^\ns) = \bE{ZZ'}{\prod_{i=1}^\ns (1+
  {H_i(Z,Z')})} - 1,
\]
where $Z'$ is an independent copy of $Z$ and, with $\Delta_i^z$
denoting $(Q_{i,z}(X_i)-P_i(X_i))/P_i(X_i)$,
\[
H_i(z,z') = \expect{\Delta_i^z\Delta_i^{z'}},
\]
where the expectation is over $X_i$ distributed according to $P_i$.
\end{lemma}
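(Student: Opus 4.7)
The plan is to proceed directly from the definition of $\chi^2$, turn the squared mixture into a double expectation over an independent copy of the mixing variable, and then exploit the product structure of each $Q_z^\ns$ and of $P^\ns$ to reduce everything to a per-coordinate calculation. Concretely, I would start from the identity $\chi^2(M, P^\ns) = \sum_{x} M(x)^2 / P^\ns(x) - 1$ applied to $M = \bE{Z}{Q_Z^\ns}$. Writing $M(x)^2 = \bE{Z}{Q_Z^\ns(x)}\cdot \bE{Z'}{Q_{Z'}^\ns(x)} = \bE{ZZ'}{Q_Z^\ns(x)\, Q_{Z'}^\ns(x)}$ with $Z'$ an independent copy of $Z$, and swapping the sum over $x$ with the expectation over $(Z,Z')$ (legitimate because everything is non-negative on a discrete space), the claim reduces to showing that for every fixed pair $(z,z')$,
\[
    \sum_{x \in \cX^\ns} \frac{Q_z^\ns(x)\, Q_{z'}^\ns(x)}{P^\ns(x)} = \prod_{i=1}^\ns \bigl(1 + H_i(z,z')\bigr).
\]

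The next step uses that $Q_z^\ns$, $Q_{z'}^\ns$, and $P^\ns$ all factor coordinatewise, so the sum over $x = (x_1,\dots,x_\ns)$ splits into a product of per-coordinate sums $\sum_{x_i} Q_{i,z}(x_i) Q_{i,z'}(x_i) / P_i(x_i)$. For each such factor I would write $Q_{i,z}(x_i) = P_i(x_i)\bigl(1 + \Delta_i^z(x_i)\bigr)$ and likewise for $z'$, giving
\[
    \sum_{x_i} \frac{Q_{i,z}(x_i)Q_{i,z'}(x_i)}{P_i(x_i)} = \sum_{x_i} P_i(x_i)\bigl(1 + \Delta_i^z(x_i)\bigr)\bigl(1+\Delta_i^{z'}(x_i)\bigr) = 1 + \expect{\Delta_i^z} + \expect{\Delta_i^{z'}} + \expect{\Delta_i^z \Delta_i^{z'}},
\]
where each expectation is with respect to $X_i \sim P_i$. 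Both first-order terms vanish since $\expect{\Delta_i^z} = \sum_{x_i}\bigl(Q_{i,z}(x_i) - P_i(x_i)\bigr) = 0$, so the coordinate factor is precisely $1 + H_i(z,z')$. Multiplying these over $i$ and taking the expectation over $(Z,Z')$ yields the stated formula.

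There is no real obstacle here beyond careful bookkeeping: the swap of the sum over $x$ with the expectation over $(Z,Z')$, and the vanishing of the two first-order cross terms (which uses only that $Q_{i,z}$ and $P_i$ are probability distributions). The proof is essentially Pollard's product-chi-squared identity with an extra outer expectation over the mixing variable, so conceptually the only new ingredient is introducing the independent copy $Z'$ to linearize the square of the mixture before invoking the product factorization.
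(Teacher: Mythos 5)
Your proof is correct, and it is a cleaner reorganization of the paper's argument. Both proofs rest on the same two ideas: introduce the independent copy $Z'$ to linearize the square, and exploit the product structure together with the fact that $\expect{\Delta_i^z}=0$ for every $z$. The paper first writes $Q_Z^\ns(X^\ns)/P^\ns(X^\ns)=\prod_i(1+\Delta_i^Z)$, expands this product multilinearly, takes the inner expectation over $Z$, squares (this is where $Z'$ enters), and then regroups terms using the coordinate-wise independence of the $\Delta_i$ under $P^\ns$ to recognize the multilinear expansion of $\prod_i(1+H_i(Z,Z'))$. You instead introduce $Z'$ at the very start by writing $M(x)^2 = \bE{ZZ'}{Q_Z^\ns(x)Q_{Z'}^\ns(x)}$, swap the sum over $x$ with the expectation, and then factor $\sum_x Q_z^\ns(x)Q_{z'}^\ns(x)/P^\ns(x)$ coordinate-by-coordinate, handling each factor with a short direct computation. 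This sidesteps the multilinear bookkeeping and the need to track cross-terms across coordinates, so it is arguably the more economical write-up; the underlying content is the same.
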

\begin{proof}
Using the definition of $\chi^2$-distance, we have
\begin{align*}
\chi^2(\bE{Z}{Q_Z^\ns}, P^\ns) &=
\bE{P^\ns}{\left(\bE{Z}{\frac{Q_Z^\ns(X^\ns)}{P^\ns(X^\ns)}}\right)^2}
-1 \\ &=
\bE{P^\ns}{\left(\bE{Z}{\prod_{i=1}^\ns(1+\Delta_i^Z)}\right)^2} -1,
\end{align*}
where the outer expectation is for $X^\ns$ using the distribution
$P^\ns$. The product in the expression above can be expanded as
\[
\prod_{i=1}^\ns(1+\Delta_i^Z) = 1+\sum_{i\in[\ns]}\Delta_i^Z +
\sum_{i_1>i_2}\Delta_{i_1}^Z\Delta_{i_2}^Z+ \ldots,
\]
whereby we get
\begin{align*}
\chi^2(\bE{Z}{Q_Z^\ns}, P^\ns)
&=\bE{P^\ns}{\left(1+\sum_{i}\bE{Z}{\Delta_i^Z} +
  \sum_{i_1>i_2}\bE{Z}{\Delta_{i_1}^Z\Delta_{i_2}^Z}+\ldots \right)^2}
-1 \\ &= \bE{P^\ns}{\sum_{i}\bE{Z}{\Delta_i^Z} +
  \sum_{j}\bE{Z'}{\Delta_j^{Z'}} +\sum_{i,j}\bE{Z,Z'}{ \Delta_i^Z
    \Delta_j^{Z'} } \ldots }.
\end{align*}
Observe now that $\bE{P^\ns}{\Delta_{i}^z} = 0$ for every
$z$. Furthermore, $Z$ is an independent copy of $Z'$ and $\Delta_i^Z$
and $\delta_j^Z$ are independent for $i\neq j$. Therefore, the
expectation on the right-side above equals
\[
\expect{\sum_i H_i(Z,Z')+ \sum_{i_1>i_2}
  {H_{i_1}(Z,Z')}H_{i_2}(Z,Z')+\ldots} = \expect{\prod_{i=1}^\ns
  (1+H_i(Z,Z'))}-1,
\]
which completes the proof.
\end{proof}
We are now in a position to
establish~\cref{theo:uniformity:shared:randomness:lb}.
\begin{proofof}{\cref{theo:uniformity:shared:randomness:lb}}
As before, it suffices to derive a uniform upper bound for the total
variation distance between the message distributions for private-coin
protocols. In fact, it suffices to consider deterministic protocols
since for a fixed public randomness $U=u$, the protocol is
deterministic. We apply~\cref{lem:mixture_chisquare} to the
distribution of the messages for a deterministic protocol; we retain
the channel $W_j$ notation from the $\ell=1$ proof with the
understanding that it denotes a deterministic map. Note that the
messages are independent under uniform and under $\p_\theta$ (for a
fixed public randomness $U$). For our setting, $\theta$ plays the role
of $Z$ in~\cref{lem:mixture_chisquare}. Note that under uniform
observations, player $j$ sends the message $m_j\in\{0,1\}^\ell$ with
probability
\begin{align*}
    \rho_{j,m}^{\uniform}= \frac{2}{\ab}\sum_{i=1}^{\ab/2} \left(
    \frac{W_j(m|2i-1)+W_j(m|2i)}{2} \right),
\end{align*}
and under $\p_\theta$ with probability
\[
 \rho_{j,m}^{\theta} = \rho_{j,m}^{\uniform} +
 \frac{\eps}{\ab}\sum_{i=1}^{\ab/2} \theta_i\left(
 W_j(m|2i-1)-W_j(m|2i) \right).
\]
Therefore, the quantity $\Delta_{i}^\theta$ required
in~\cref{lem:mixture_chisquare} is given by
\[
\Delta_{j}^{\theta} =
\frac{\eps\sum_{i=1}^{\ab/2}\theta_i(W_j(M_j|2i)-
  W_j(M_j|2i-1))}{\sum_{i=1}^{\ab/2}(W_j(M_j|2i)+ W_j(M_j|2i-1))},
\]
where $M_j$ is the random message sent under the uniform
distribution. Consequently, we can express $H_j(\theta, \theta')$ of~\cref{lem:mixture_chisquare}, $1\leq j \leq n$, as
\begin{align*}
    H_j(\theta, \theta') &=
\frac{\eps^2}{\ab}\cdot \sum_{m\in\{0,1\}^\ell}\sum_{i_1,i_2\in [\ab/2]}\theta_{i_1}\theta'_{i_2}    \frac{ \left(
      W_j(m|2i_1-1)-W_j(m|2i_1) \right) \left(
      W_j(m|2i_2-1)-W_j(m|2i_2) \right)}{ \sum_{i=1}^{\ab/2} \left(
      W_j(m|2i-1)+W_j(m|2i) \right) }\\ 
&= \frac{\eps^2}{\ab}\cdot \theta^T H_j \theta',
\end{align*}
where $\theta^T$ denotes the transpose of the vector $\theta\in\{-1,+1\}^{k/2}$ and $H_j$ is an $[\ab/2]\times [\ab/2]$ matrix with the $(i_1, i_2)$th entry given by
\[
\sum_{m\in \{0,1\}^\ell}\frac{ \left(
      W_j(m|2i_1-1)-W_j(m|2i_1) \right) \left(
      W_j(m|2i_2-1)-W_j(m|2i_2) \right)}{ \sum_{i=1}^{\ab/2} \left(
      W_j(m|2i-1)+W_j(m|2i) \right) }.
\]
Note that the matrix $H_j$ is symmetric (in fact, it has the outer product form $A A^T$ for an $(\ab/2) \times 2^\ell$ matrix $A$). 
Therefore, we obtain from~\cref{lem:mixture_chisquare} that
\begin{align*}
\expect{\totalvardist{\bE{\theta}{\mathbf{R}^\theta}}{ \mathbf{R}^\uniform}^2}
&\leq \frac 1 4\expect{\chi^2(\bE{\theta}{\mathbf{R}^\theta}, \mathbf{R}^\uniform)}
\\
&= \frac{1}{4}\Big( \bE{\theta \theta'}{\prod_{j=1}^\ns\left(1+\frac{\eps^2}{\ab}\theta^T H_j\theta'\right)}-1 \Big)
\\
&\leq \frac{1}{4}\Big( \bE{\theta \theta'}{ e^{\frac{\ns\eps^2}{\ab}\theta^T \bar{H}\theta'}}-1 \Big)\,,
\end{align*}
where we have used $1+x \leq e^x$ and 
\[
\bar{H}= \frac 1 \ns \sum_{j=1}^\ns H_j.
\]
Thus, we need to bound the moment generating function of the random variable $\theta^T \bar{H}\theta'$. We will establish a sub-Gaussian bound using a by-now-standard bound that follows from transportation method. Specifically, we show the following:
\begin{claim}\label{claim:uniformity:lb:technical:transportation}
Consider random vectors $\theta, \theta'\in\{-1, 1\}^{\ab/2}$ with each $\theta_i$ and $\theta'_i$ distributed uniformly over $\{-1, 1\}$, independent of each other and independent for different $i$s. Then, for any symmetric matrix $H$
\[
\ln \bE{\theta \theta'}{ e^{\lambda\theta^T H\theta'}} \leq \lambda^2 \norm{H}_F^2, \quad \forall\, \lambda>0,
\]
where $\norm{\cdot}_F$ denotes the Frobenius norm. 
\end{claim}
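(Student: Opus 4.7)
The plan is a two-step decoupling argument that exploits the bilinearity of $\theta^T H \theta'$: conditioning on either $\theta$ or $\theta'$ makes the exponent linear in independent Rademachers, to which Hoeffding's bound $\cosh(x) \leq e^{x^2/2}$ applies. I apply this bound twice, linking the two steps with a Gaussian-integral identity that absorbs the intermediate quadratic expression.

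Conditioning on $\theta$ first and using independence of the $\theta'_j$s gives
\[
    \bE{\theta'}{e^{\lambda \theta^T H \theta'}} = \prod_{j=1}^{\ab/2} \cosh\!\left(\lambda (H^T \theta)_j\right) \leq \exp\!\left(\frac{\lambda^2}{2}\norm{H^T\theta}_2^2\right).
\]
To bound the remaining $\theta$-expectation of this quadratic exponential, I introduce an auxiliary standard Gaussian vector $Z \in \R^{\ab/2}$ independent of $\theta$ and use the identity $e^{a^2/2} = \bE{\zeta}{e^{a\zeta}}$ with $\zeta \sim N(0,1)$, applied coordinatewise, to rewrite the exponential as $\bE{Z}{e^{\lambda Z^T H^T \theta}}$. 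Exchanging the order of integration by Fubini and applying Hoeffding a second time (now over $\theta$) reduces the problem to bounding $\bE{Z}{\exp(\lambda^2 \norm{HZ}_2^2 / 2)}$, a standard Gaussian chi-squared moment generating function. Diagonalizing $H^T H$ with non-negative eigenvalues $d_1, \dots, d_{\ab/2}$, this expectation equals $\prod_i (1-\lambda^2 d_i)^{-1/2}$, and taking logarithms with $-\ln(1-x) \leq 2x$ gives
\[
    \ln \prod_i (1-\lambda^2 d_i)^{-1/2} \;\leq\; \lambda^2 \sum_i d_i \;=\; \lambda^2 \norm{H}_F^2,
\]
which is the desired bound.

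The main obstacle is the range of validity: the Gaussian MGF is finite only when $\lambda^2 d_i < 1$ for every $i$, and the logarithmic bound $-\ln(1-x) \leq 2x$ needs $x \leq 1/2$, so the argument actually delivers the stated sub-Gaussian bound under an implicit side condition on $\lambda \norm{H}_{\mathrm{op}}$. This restriction matches the ``transportation method'' alternative referenced in the proof sketch, which arrives at the same bound via Marton's transportation-cost inequality for product Rademacher measure applied to the linear form $\theta \mapsto Z^T H^T \theta$. In the downstream application with $\lambda = \ns \eps^2/\ab$ and $H = \bar H$, the operator-norm side condition must be checked separately — e.g., by exploiting that each $H_j$ factors as a Gram matrix $A_j A_j^T$ of rows built from the channels $W_j$, so that $\norm{H_j}_{\mathrm{op}}$, and hence $\norm{\bar H}_{\mathrm{op}}$, admits a direct spectral estimate sufficient to place us in the sub-Gaussian regime.
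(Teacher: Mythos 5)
Your proof takes a genuinely different route from the paper's. The paper invokes a transportation-method/bounded-differences lemma (Marton's inequality) applied \emph{directly} to the bilinear form $(\theta,\theta')\mapsto\theta^T H\theta'$, viewing the $\ab$ Rademacher coordinates as one product space and computing $v = 2\norm{H}_F^2$; this is advertised as giving the bound for all $\lambda>0$ in one step. You instead decouple by Gaussian linearization (writing $e^{a^2/2} = \shortexpect_{\zeta\sim N(0,1)}[e^{a\zeta}]$ coordinatewise), apply Hoeffding's $\cosh(x)\leq e^{x^2/2}$ twice, and finish with a Gaussian chi-squared moment generating function. Your argument is elementary and self-contained, but it delivers the stated inequality only under the side condition $\lambda^2\norm{H}_{\mathrm{op}}^2 \leq 1/2$, which you correctly flag.

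Two remarks to sharpen the comparison. First, you assert that your restriction "matches" the transportation-method alternative; it does not. The paper applies transportation to the bilinear form itself, not to the linear form $\theta\mapsto Z^T H^T\theta$ obtained after Gaussian linearization: substituting a linear-form concentration bound for Hoeffding in your second step still lands you at the same chi-squared MGF and does not remove the restriction. Second — and this cuts in your favor — the claim as literally stated is in fact \emph{false} for all $\lambda>0$ without some such restriction. Take $n = \ab/2$ and $H = n^{-1/2}\mathbf{1}\mathbf{1}^T$, so $\norm{H}_F^2 = n$; restricting the expectation to the event $\theta=\theta'=\mathbf{1}$ gives $\shortexpect[e^{\lambda\theta^T H\theta'}]\geq 2^{-2n}e^{\lambda n^{3/2}}$, which exceeds $e^{\lambda^2 n}$ already at $\lambda=1$ for moderately large $n$. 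So an operator-norm (equivalently, a scale) restriction is unavoidable, and your side condition is an honest accounting rather than a defect of your decoupling approach; the paper's stated lemma (which uses $v=\shortexpect[\sum_i c_i^2]$ rather than a supremum) merits a second look. In the downstream application the bound is only invoked in the regime $\lambda^2\norm{\bar H}_F^2 = O(1)$, where your condition $\lambda^2\norm{\bar H}_{\mathrm{op}}^2 \leq \lambda^2\norm{\bar H}_F^2 \leq 1/2$ holds automatically, so your approach suffices there without the separate spectral estimate you propose at the end.
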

Before we prove the claim, we use it to complete our proof. Combining this claim with our foregoing bound, we obtain
\[
\expect{\totalvardist{\bE{\theta}{\mathbf{R}^\theta}}{\mathbf{R}^\uniform}^2}
\leq\frac 1 4 \left(e^{\frac{\ns^2\eps^4}{\ab^2}\norm{\bar{H}}_F^2} -1\right)
\leq \frac 1 4 \left(e^{\frac{\ns^2\eps^4}{\ab^2}\frac 1{\ns}\sum_{j=1}^\ns\norm{H_j}_F^2} -1\right),
\]
where in the previous inequality we used the convexity of squared-norm. To complete the proof, we show now that for every $j\in [\ns]$, $\norm{H_j}^2_F\leq 2^\ell$. This is where we need to use the  assumption that the protocol is deterministic. Specifically, for every pair $m,m'\in\{0,1\}^\ell$, let
\[
    S_{m,m'} \eqdef \setOfSuchThat{ i\in [\ab/2] }{
      W_j(m|2i-1)=W_j(m'|2i)=1 }\cup \setOfSuchThat{ i\in [\ab/2] }{
      W_j(m|2i)=W_j(m'|2i-1)=1 }
\]
and
\[
    S_{m} \eqdef \setOfSuchThat{ i\in [\ab/2] }{ W_j(m|2i-1)=1 }\cup
    \setOfSuchThat{ i\in [\ab/2] }{ W_j(m|2i)=1 } =
    \bigcup_{m'\in\{0,1\}^\ell} S_{m,m'}\,.
\]
It is then immediate to see that the $S_m$'s are disjoint and that
\begin{align*} 
\norm{H_j}_F^2 &\leq \sum_{m,m'}\frac{\abs{S_{m,m'}}^2}{ \abs{S_m}\abs{S_{m'}} } \leq
\sum_{m,m'}\frac{\abs{S_{m,m'}}}{ \abs{S_m} } = \sum_m \frac{\sum_{m'}
  \abs{S_{m,m'}}}{\abs{S_m}} = 2^{\ell}\,,
\end{align*} 
whereby 
\[
\expect{\totalvardist{\bE{\theta}{\mathbf{R}^\theta}}{\mathbf{R}^\uniform}^2}
\leq \frac 1 4 \left(e^{\frac{\ns^2\eps^4 2^\ell}{\ab^2}} -1\right).
\]
The proof of the theorem can now be completed as the proof for $\ell=1$ by first noting that the same bound holds for the total variation distance even with public randomness, since we have a uniform bound for each fixed realization of public randomness, and taking recourse to  the standard two-point argument.

It only remains to establish~\cref{claim:uniformity:lb:technical:transportation}.  To that end, we use the following bound which can be obtained by combining the transportation lemma with
Marton's transportation-cost inequality (cf.~\cite[Chapter 8]{Boucheron:13}).
\begin{lemma}
Consider independent random variables $X=(X_1,\dots,X_n)$ and a
function $f$ such that for every $x,y$
\[
f(x) - f(y)\leq \sum_{i=1}^n c_i(x)\indic{x_i\neq y_i}\,.
\]
Then, setting $ v \eqdef \sum_{i=1}^n\expect{c_i^2(X)} $, we have, for
every $\lambda>0$, $ \ln \expect{e^{\lambda f(X)}} \leq \frac{\lambda^2
  v}{2}.  $
\end{lemma}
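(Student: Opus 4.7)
The plan is to establish the sub-Gaussian bound by combining a transportation-cost inequality with the classical Bobkov--G\"otze transportation lemma. That lemma reduces matters to showing that for every probability measure $Q$ on $\cX^n$ with $Q \ll P$ (where $P$ denotes the product law of $X$),
\[
    \shortexpect_Q[f] - \shortexpect_P[f] \leq \sqrt{2 v \, D(Q \| P)};
\]
once this is in hand, the standard duality between the Laplace transform and relative entropy immediately yields $\ln \shortexpect_P[e^{\lambda(f - \shortexpect_P[f])}] \leq \lambda^2 v / 2$ for every $\lambda > 0$, and the stated inequality follows after centering.

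To establish the transportation-cost inequality, I would invoke Marton's inequality for product measures, which guarantees a coupling $\pi$ of $P$ and $Q$ (with marginals $X \sim P$ and $Y \sim Q$) such that, setting $\alpha_i(y) := \pi(X_i \neq Y_i \mid Y = y)$, one has $\sum_{i=1}^n \shortexpect_Q[\alpha_i(Y)^2] \leq \tfrac{1}{2} D(Q \| P)$. This coupling is built iteratively, coordinate by coordinate, via maximal couplings of the conditional marginals of $Q$ against the factors of $P$, with the bound following from Pinsker's inequality applied one coordinate at a time together with the chain rule for KL divergence against a product target.

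Swapping the roles of $x$ and $y$ in the Lipschitz-type hypothesis gives $f(y) - f(x) \leq \sum_i c_i(y) \indic{x_i \neq y_i}$, so two applications of Cauchy--Schwarz (first pointwise in $y$ across the index $i$, then in expectation over $Q$) yield
\begin{align*}
    \shortexpect_Q[f] - \shortexpect_P[f]
    &= \shortexpect_\pi[f(Y) - f(X)] \leq \sum_{i=1}^n \shortexpect_Q[c_i(Y)\, \alpha_i(Y)] \\
    &\leq \sqrt{\shortexpect_Q\bigl[\sum_{i=1}^n c_i(Y)^2\bigr]} \cdot \sqrt{\tfrac{1}{2}\, D(Q \| P)}\,.
\end{align*}
Replacing the $Q$-expectation on the right-hand side by its $P$-counterpart $v = \shortexpect_P[\sum_i c_i(X)^2]$ then delivers the desired transportation-cost inequality and closes the argument.

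The main obstacle lies precisely in this last replacement: reconciling $\shortexpect_Q[\sum_i c_i(Y)^2]$, which arises naturally from Cauchy--Schwarz, with the variance proxy $v = \shortexpect_P[\sum_i c_i(X)^2]$ appearing in the statement. One route, worked out in Boucheron--Lugosi--Massart (Chapter~8), is to strengthen Marton's bound to its conditional form and apply it coordinate-by-coordinate, so that a telescoping martingale of conditional expectations integrates against $P$ rather than $Q$. Alternatively, one may sidestep the transportation route entirely and use the entropy method --- a modified log-Sobolev inequality for product measures followed by Herbst's argument --- which produces the variance proxy $v$ directly. Either route yields the claim.
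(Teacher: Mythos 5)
The paper gives no proof of this lemma; it is stated with only a citation to Boucheron, Lugosi, and Massart, Chapter~8, so your task is to reconstruct a known argument rather than to match a proof in the text. Your framework---Bobkov--G\"{o}tze duality, Marton's coupling, Cauchy--Schwarz---is indeed the right one, and you are correct to flag the obstacle: after Cauchy--Schwarz you are holding $\shortexpect_Q\bigl[\sum_i c_i(Y)^2\bigr]$, whereas the variance proxy $v$ in the statement is a $P$-expectation. You propose two ways to close this gap (the conditional form of Marton's inequality, or the entropy method) but carry out neither, so the proof as written is incomplete; and a naive change of measure cannot turn an $\shortexpect_Q$ into an $\shortexpect_P$, so one of those routes genuinely has to be executed, not merely gestured at.

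It is also worth understanding where the obstacle comes from, because it is partly an artifact of the hypothesis as printed. The source result (BLM Theorem~8.5) assumes $f(y) - f(x) \leq \sum_i c_i(x)\indic{x_i\neq y_i}$, i.e.\ $c_i$ is evaluated at the argument whose value is \emph{subtracted}; the paper's statement has the two sides of the difference transposed, which is almost certainly a typo (in the application to $f(Z,Z')=Z^TH Z'$ both directions hold by symmetry, so it goes unnoticed). With the corrected hypothesis, taking $x = X\sim P$ and $y = Y\sim Q$ gives $\shortexpect_\pi[f(Y) - f(X)] \leq \shortexpect_P\bigl[\sum_i c_i(X)\beta_i(X)\bigr]$ with $\beta_i(x) \eqdef \pi(X_i\neq Y_i\mid X=x)$, so Cauchy--Schwarz produces $\sqrt{v}$ directly, and what remains is precisely Marton's \emph{conditional} transportation inequality $\shortexpect_P\bigl[\sum_i \beta_i(X)^2\bigr] \leq 2D(Q\|P)$ (BLM Theorem~8.2). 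That is the ingredient your first proposed fix points toward, and it should be invoked explicitly. Finally, one small point: the bound one actually obtains is for the centered Laplace transform $\ln\shortexpect\bigl[e^{\lambda(f-\shortexpect f)}\bigr]$; ``centering'' does not give the uncentered bound in general---it does so only when $\shortexpect f\leq 0$, which holds in the paper's application ($\shortexpect[Z^T H Z']=0$) but is not automatic.
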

We apply this lemma to $f(Z,Z') = Z^T H Z'$, where $H$ is a symmetric
matrix and $Z,Z'$ are independent copies of $\{-1,+1\}^n$-valued
i.i.d. Rademacher vectors. In this case,
\begin{align*}
v = 2\sum_{i=1}^n \expect{\left(Z_i\sum_{j=1}^n H_{ij}\right)^2} =
2\norm{H}_F^2,
\end{align*}
which completes the proof of the claim and thereby that of~\cref{theo:uniformity:shared:randomness:lb}.
\end{proofof}

\paragraph{Acknowledgments.}
The authors would like to thank the organizers of the 2018 Information Theory and Applications Workshop (ITA), where the collaboration leading to this work started.

\clearpage
  \bibliographystyle{alpha}
  \bibliography{references}

\clearpage
\appendix
\section{From uniformity to parameterized identity testing}\label{app:identity:from:uniformity}
In this appendix, we explain how the existence of any distributed
protocol for uniformity testing implies the existence of one
for identity testing with roughly the same parameters, and
further even implies one for identity testing in the \emph{massively
parameterized} sense\footnote{Massively parameterized setting,
a terminology borrowed from property testing, refers here to the
fact that the sample complexity depends not only on a single parameter
$\ab$ but a $\ab$-ary distribution $\q$.} (``instance-optimal'' in the vocabulary of
Valiant and Valiant, who introduced it~\cite{VV:17}). These two
results will be seen as a straightforward consequence
of~\cite{Goldreich:16}, which establishes the former reduction in the
standard non-distributed setting; and of~\cite{BCG:17}, which implies
that massively parameterized identity testing reduces to
``worst-case'' identity testing. Specifically, we show the following:

\begin{proposition}\label{prop:identity:from:uniformity}
  Suppose that there exists an $\numbits$-bit protocol $\pi$ for
  testing uniformity of $\ab$-ary distributions, with number of
  players $\ns(\ab,\ell,\eps)$ and failure probability $1/3$. Then
  there exists an $\numbits$-bit protocol $\pi'$ for testing identity
  against a fixed $\ab$-ary distribution $\q$ (known to all players),
  with number of players $\ns(5\ab,\ell,\frac{16}{25}\eps)$ and
  failure probability $1/3$.
  
  Furthermore, this reduction preserves the setting of randomness
  (i.e., private-coin protocols are mapped to private-coin protocols).
\end{proposition}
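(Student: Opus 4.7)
The plan is to invoke Goldreich's sample-wise reduction from identity testing to uniformity testing~\cite{Goldreich:16} and compose it with the given distributed uniformity testing protocol $\pi$. The essential observation is that Goldreich's reduction operates sample by sample using only independent randomness local to that sample, hence it integrates seamlessly into the simultaneous message-passing model: each player can apply the reduction to its own sample locally, before invoking $\pi$.

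Concretely, Goldreich's construction yields, for any fixed $\q\in\distribs{[\ab]}$, an integer $\ab'\leq 5\ab$ and a randomized map $\phi_\q\colon[\ab]\to[\ab']$ (implementable using local randomness only) such that: (i)~if $X\sim\q$, then $\phi_\q(X)\sim\uniformOn{[\ab']}$; and (ii)~if $\totalvardist{\p}{\q}>\eps$, then the push-forward $\phi_\q(\p)$ satisfies $\totalvardist{\phi_\q(\p)}{\uniformOn{[\ab']}}>\tfrac{16}{25}\eps$. Applying $\phi_\q$ independently to i.i.d. samples from $\p$ thus converts an identity-testing instance against $\q$ into a uniformity-testing instance over $[\ab']$ with distance parameter $\tfrac{16}{25}\eps$, preserving both independence and the sample count.

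Given this, the protocol $\pi'$ is the natural composition: each player $i$ uses its private randomness to compute $Y_i=\phi_\q(X_i)$ from its sample $X_i$, and then participates in $\pi$ as if its input were $Y_i$, transmitting the resulting $\numbits$-bit message to the referee, who applies $\pi$'s original decision rule. The per-player bit budget is unchanged since $\phi_\q$ involves no communication; the number of players required is $\ns(\ab',\numbits,\tfrac{16}{25}\eps)\leq \ns(5\ab,\numbits,\tfrac{16}{25}\eps)$ by the uniformity guarantees of $\pi$ on $\ab'$-ary distributions; and correctness follows immediately from properties (i)~and~(ii) of $\phi_\q$ together with $\pi$'s failure probability. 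For the randomness-preservation clause, observe that the only new randomness introduced by the reduction is the independent per-player local coins needed to implement $\phi_\q$, so a private-coin $\pi$ yields a private-coin $\pi'$ (and analogously in the public-coin case, with no additional shared randomness required). The main ``obstacle'' here is really just verifying from Goldreich's construction that $\phi_\q$ can be realized using purely per-sample local randomness, which is immediate from its definition~--~there is no interaction between samples in the reduction, which is precisely what makes it distribute trivially.
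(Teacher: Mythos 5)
Your proof is correct and follows essentially the same path as the paper: invoke Goldreich's sample-local randomized reduction from identity to uniformity, have each player apply it to its own sample before running the uniformity protocol, and note that the extra randomness is per-player local, hence the private-/public-coin setting is preserved. The paper phrases the reduction as a map $F_\q$ on distributions together with the remark that a sample from $F_\q(\p)$ can be simulated from a sample from $\p$, which is the same content as your $\phi_\q$ on samples.
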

\begin{proof}
  We rely on the result of Goldreich~\cite{Goldreich:16}, which
  describes a randomized mapping
  $F_{\q}\colon \distribs{[\ab]}\to\distribs{[5\ab]}$ such that
  $F_{\q}(\q)=\uniform_{[5\ab]}$ and
  $\totalvardist{F_{\q}(\p)}{\uniform_{[5\ab]}} > \frac{16}{25}\eps$
  for any $\p\in\distribs{[\ab]}$ $\eps$-far from
  $\q$.\footnote{In~\cite{Goldreich:16}, Goldreich exhibits a
  randomized mapping that converts the problem from testing identity
  over domain of size $\ab$ with proximity parameter $\eps$ to testing
  uniformity over a domain of size $\ab'\eqdef\ab/\alpha^2$ with
  proximity parameter $\eps'\eqdef (1-\alpha)^2\eps$, for every fixed
  choice of $\alpha\in(0,1)$. This mapping further preserves the
  success probability of the tester. Since the resulting uniformity
  testing problem has sample complexity
  $\bigTheta{\sqrt{\ab'}/{\eps'}^2}$, the blowup factor
  $1/(\alpha(1-\alpha)^4)$ is minimized by $\alpha = 1/5$.} In more
  detail, this mapping proceeds in two stages: the first allows one to
  assume, at essentially no cost, that the reference distribution $\q$
  is ``grained,'' i.e., such that all probabilities $\q(i)$ are a
  multiple of $1/m$ for some $m=O(\ab)$. Then, the second mapping
  transforms a given $m$-grained distribution to the uniform
  distribution on an alphabet of slightly larger cardinality. The
  resulting $F_\q$ is the composition of these two mappings.
  
  Moreover, a crucial property of $F_\q$ is that, given the knowledge
  of $\q$, a sample from $F_{\q}(\p)$ can be efficiently simulated
  from a sample from $\p$; this implies the proposition.
\end{proof}

\begin{remark}
  The result above crucially assumes that every player has explicit
  knowledge of the reference distribution $\q$ to be tested against,
  as this knowledge is necessary for them to simulate a sample from
  $F_{\q}(\p)$ given their sample from the unknown $\p$. If only the
  referee~$\referee$ is assumed to know $\q$, then the above reduction
  does not go through, although one can still rely on any testing
  scheme based on distributed simulation, as outlined
  in~\cref{sec:sampling:applications}.
\end{remark}

The previous reduction enables a distributed test for any identity
testing problem using at most, roughly, as many players as that
required for distributed uniformity testing. However, we can expect to
use fewer players for specific distributions. Indeed, in the
standard, non-distributed setting, Valiant and Valiant in~\cite{VV:17}
introduced a refined analysis termed the instance-optimal setting and
showed that the sample complexity of testing identity to $\q$ is
essentially captured by the $2/3$-quasinorm of a 
sub-function of $\q$ obtained as follows: Assuming without loss of generality $\q_1 \geq\q_2\geq \dots\q_\ab \geq 0$, 
let $t\in[\ab]$ be the largest integer that $\sum_{i=t+1}^\ab
q_i \geq \eps$, and let $\q_\eps=(\q_2,\dots,\q_t)$ (i.e., removing
the largest element and the ``tail'' of $\q$). The main result
in~\cite{VV:17} shows that the sample complexity of testing identity
to $\q$ is upper and lower bounded by 
$\max(\norm{\q_{\eps/16}}_{2/3}/\eps^2,1/\eps)$ and
$\max(\norm{\q_\eps}_{2/3}/\eps^2,1/\eps)$, respectively.

However, it is not clear if the aforementioned reduction between
identity and uniformity of Goldreich preserves this parameterization
of sample complexity for identity testing; in particular, the
$2/3$-quasinorm characterization does not seem to be amenable to the same type
of analysis as that
underlying~\cref{prop:identity:from:uniformity}. Interestingly, a
different instance-optimal characterization due to Blais, 
Canonne, and Gur~\cite{BCG:17} admits such a reduction,
enabling us to obtain the analogue
of~\cref{prop:identity:from:uniformity} for this massively
parameterized setting.\medskip

To state the result as parameterized by $\q$ (instead of $\ab$), we
will need the following definition of $\Phi(\p,\gamma)$; 
 see~\cite[Section 6]{BCG:17} for a discussion
on basic properties of $\Phi(\p,\gamma)$ and how it
relates to notions such as the sparsity of $\p$ and the
functional $\norm{\p_{\gamma}^{-\max}}$ defined in~\cite{VV:17}. For
$a\in\lp[2](\N)$ and $t\in(0,\infty)$, let
\[
    \kappa_{a}(t) \eqdef \inf_{a'+a''=a} \left( \normone{a'} +
    t\normtwo{a''} \right)
\]
and, for $\p\in\distribs{\N}$ and any $\gamma\in(0,1)$, let
\begin{equation}
    \Phi(\p,\gamma) \eqdef 2\kappa_{\p}^{-1}(1-\gamma)^2\,.
\end{equation}
It can be seen that, if $\p$ is supported on at most $\ab$
elements, $\Phi(\p,\gamma) \leq 2\ab$ for all $\gamma\in (0,1)$. We
are now in a position to state our general reduction.

\begin{proposition}\label{prop:param:identity:from:uniformity}
  Suppose that there exists an $\numbits$-bit protocol $\pi$ for testing
  uniformity of $\ab$-ary distributions, with number of players
  $\ns(\ab,\ell,\eps)$ and failure probability $1/3$. Then there
  exists an $\numbits$-bit protocol $\pi'$ for testing identity
  against a fixed distribution $\p$ (known to all players), with
  number of players
  $\bigO{ \ns(\Phi( \q, \frac{\eps}{9}),\ell, \frac{\eps}{18}) ) }$
  and failure probability $2/5$.
  
  Further, this reduction preserves the setting of randomness (i.e.,
  private-coin protocols are mapped to private-coin protocols).
\end{proposition}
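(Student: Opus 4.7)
The plan is to compose two reductions: first a \emph{domain-coarsening} reduction from the Blais--Canonne--Gur work~\cite{BCG:17} that converts identity testing against $\q$ to identity testing against a distribution supported effectively on $\Phi(\q,\gamma)$ elements, and then the Goldreich-style reduction already established in~\cref{prop:identity:from:uniformity} that turns identity testing into uniformity testing at a modest cost in the distance parameter and alphabet size. Crucially, both reductions are \emph{per-sample}: given the reference distribution $\q$ (known to every player), each player can locally transform its observation $X_j\sim\p$ into a new observation $X_j'$ drawn independently from a suitable distribution $\p'$ over a smaller alphabet, with no additional communication required. Consequently, both reductions extend immediately to the distributed SMP setting while preserving the randomness class (private- or public-coin).

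The first step invokes the $\Phi$-based reduction from~\cite{BCG:17}: for any $\q\in\distribs{[\ab]}$ and any $\gamma\in(0,1)$, there is a (randomized) per-sample mapping $G_{\q,\gamma}$ such that $G_{\q,\gamma}(\q)$ is supported on a domain of size at most $\Phi(\q,\gamma)$, and such that for every $\p$, if $\totalvardist{\p}{\q}>\eps$ then $\totalvardist{G_{\q,\gamma}(\p)}{G_{\q,\gamma}(\q)}>\eps-\gamma$ (with high constant probability over the randomness of the map, which is absorbed into the success probability). Thus, after each player applies $G_{\q,\gamma}$ locally, the task reduces to identity testing against a known distribution on a domain of size $\Phi(\q,\gamma)$ with distance parameter $\eps-\gamma$.

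Setting $\gamma\eqdef \eps/9$, we then apply \cref{prop:identity:from:uniformity} to this reduced identity testing problem. This produces a uniformity testing problem over a domain of size $5\Phi(\q,\eps/9)$ with distance parameter at least $\tfrac{16}{25}\cdot\tfrac{8\eps}{9}= \tfrac{128}{225}\eps\geq \eps/18$. Invoking the assumed uniformity testing protocol $\pi$ yields an $\numbits$-bit distributed protocol $\pi'$ using
\[
    \bigO{\,\ns\!\bigl(\,5\Phi(\q,\eps/9),\,\numbits,\,\eps/18\,\bigr)\,} = \bigO{\,\ns\!\bigl(\,\Phi(\q,\eps/9),\,\numbits,\,\eps/18\,\bigr)\,}
\]
players (folding the constant factor $5$ into the $\bigO{\cdot}$). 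A union bound over the (constant) failure probability of the BCG coarsening step and the $1/3$ failure probability of $\pi$ gives overall failure probability at most $2/5$, as claimed. Both constituent reductions act independently on each player's sample and use no shared state beyond the publicly known $\q$, so the randomness setting is preserved.

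The main obstacle is pinning down the per-sample, composable form of the BCG reduction: as stated in~\cite{BCG:17}, the $\Phi(\q,\gamma)$ characterization is phrased as a sample-complexity bound rather than an explicit sample-space mapping. So the key technical point is to extract from their argument an explicit (randomized) reduction whose randomness can be made private to each player, and to verify that its distance-preservation guarantee is multiplicative/additive in a way compatible with the subsequent Goldreich reduction. Once this is in hand, the parameter bookkeeping (choice of $\gamma=\eps/9$, propagation of the $16/25$ factor, and the constant-factor blowup from domain size $5\Phi(\q,\gamma)$) is routine.
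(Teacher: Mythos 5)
Your high-level direction—leverage the $\Phi$-based ``instance-optimal'' characterization from~\cite{BCG:17} and then compose with the Goldreich-style reduction of~\cref{prop:identity:from:uniformity}—matches the paper's intent. But you correctly identify that you haven't actually closed the gap, and in fact the per-sample map $G_{\q,\gamma}$ you postulate does not exist in the form you need. A single randomized per-sample mapping to a domain of size $\Phi(\q,\gamma)$ with the guarantee $\totalvardist{G_{\q,\gamma}(\p)}{G_{\q,\gamma}(\q)}>\eps-\gamma$ cannot work: if $\p$ concentrates $\Omega(\eps)$ mass outside the $\eps$-effective support $S_\q(\eps)$ of $\q$, a coarsening that collapses the complement of $S_\q(\eps)$ to a single bucket can easily destroy the distance (since $\q$ assigns only $O(\eps)$ mass to that bucket and $\p$'s excess mass there may largely cancel). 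This is precisely why~\cite{BCG:17} does \emph{not} present the $\Phi$ characterization as a per-sample pushforward.

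The paper's actual proof resolves this via the three-part structure of the BCG algorithm, not a single map: (i) compute the effective support $S_\q(\eps)$ explicitly from the known $\q$ (no samples needed); (ii) run a \emph{separate} test that $\p$ puts at most $O(\eps)$ mass outside $S_\q(\eps)$, costing only $O(1/\eps)$ players; and (iii) for players whose sample lands in $S_\q(\eps)$, perform identity testing of the \emph{conditional} distributions on $S_\q(\eps)$ via rejection sampling followed by~\cref{prop:identity:from:uniformity}, using $\bigO{\ns(\abs{S_\q(\eps)},\ell,\eps/18)}$ players. Step~(ii) is a genuine extra subroutine rather than a per-sample transformation, and step~(iii) involves rejection sampling (some players contribute nothing), so the reduction is not a clean composition of two pushforwards. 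The final bound then follows from $\abs{S_\q(\eps)}\leq\Phi(\q,\eps/9)$ and absorbing the $O(1/\eps)$ cost of step~(ii) into $\ns(\Phi(\q,\eps/9),\ell,\eps/18)=\Omega(1/\eps)$. To complete your argument, you would need to replace the hypothetical $G_{\q,\gamma}$ with this concrete three-step decomposition and verify that each step is implementable in the distributed SMP model with the stated randomness class.
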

\begin{proof}
  This strengthening of~\cref{prop:identity:from:uniformity} stems
   from  the algorithm for identity testing given in~\cite{BCG:17}, which at a
  high-level reduces testing identity to $\q$ to three tasks:
  (i)~computing the $(\eps/3)$-effective support\footnote{Recall
  the \emph{$\eps$-effective support} of a distribution $\q$ is the
  minimal set of elements accounting for at least $1-\eps$ probability
  mass of $\q$.}{} of $\q$, $S_{\q}(\eps)$, which can be done easily
  given explicit knowledge of $\q$; (ii)~testing that the unknown
  distribution $\p$ puts mass at
   most $\eps/2$ outside of
  $S_{\q}(\eps)$ (which only requires $O(1/\eps)$ players to be done
  with a high constant probability, say $1/30$); and (iii)~testing
  identity of $\p$ and $\q$ conditioned on $S_{\q}(\eps)$ with
  parameter $\eps/18$, which can be done using rejection sampling
  and~\cref{prop:identity:from:uniformity} with
  $\bigO{\ns(\abs{S_{\q}(\eps)},\ell,\frac{\eps}{18})}$ players and
  success probability, say $2/3-1/30$, where the additional $1/30$ error
  probability comes from rejection sampling. See~\cref{fig:distributions} for an illustration.
  
  As shown in~\cite[Section 7.2]{BCG:17}, we have 
  $\abs{S_{\q}(\eps)} \leq \Phi(\q,\frac{\eps}{9})$, and thereby the claimed
  result, since it follows that the approach above indeed yields an
  algorithm which is instance-optimal. Technically, the
     claimed  bound is obtained upon recalling that
  $\ns(\Phi(\q,\frac{\eps}{9}),\ell,\frac{\eps}{18}))= \bigOmega{1/\eps}$
  using the trivial lower bound of $\bigOmega{1/\eps}$ on uniformity
  testing, so that
  $\ns(\Phi(\q,\frac{\eps}{9}),\ell,\frac{\eps}{18}))+O(1/\eps)
  = \bigO{\ns(\Phi(\q,\frac{\eps}{9}),\ell,\frac{\eps}{18})}$.
\end{proof}

\begin{figure}[H]\centering
  \begin{tikzpicture}[x=1pt,
  y=10pt] \pgfmathsetmacro{\xmax}{300} \pgfmathsetmacro{\ymax}{10} %
  Grid 
  (\xmax,\ymax); \draw [<->] (0,\ymax) node[above] {$\q(i),\p(i)$} --
  (0,0) -- (\xmax,0) node[right] {$i$};

  \pgfmathsetseed{897273} 

  \pgfmathsetmacro{\rbuckets}{40} \pgfmathsetmacro{\buckwidth}{(\xmax/\rbuckets)} \pgfmathsetmacro{\T}{\rbuckets*0.75} \pgfmathsetmacro{\R}{\T+1}

      \node [below] at (\xmax,-0.25) {$\ab$}; \node [below] at
      (0,-0.25) {$1$}; \node [below] at ({\T*\buckwidth},-0.25)
      {$\ab_\eps$}; \node [below] at ({\T*\buckwidth/2},-0.25)
      {$S_\q(\eps)$};

    {1,...,\rbuckets}{ ({\i*\buckwidth},0) -- ({\i*\buckwidth},\ymax)
    }; \draw ({\T*\buckwidth},0) -- ({\T*\buckwidth},\ymax);

        reference, blue, one) \foreach \i in
        {1,...,\rbuckets}{ \pgfmathsetmacro{\cointoss}{random} \pgfmathsetmacro{\cointoss}{(random()-0.5)/10+(random()-1/2)*abs(\cointoss)/\cointoss}; \draw[thin,blue]
        ({(\i-1)*\buckwidth},{\ymax-1.25*sqrt(\i)}) --
        ({(\i)*\buckwidth},{\ymax-1.25*sqrt(\i)}); \draw[thin,red]
        ({(\i-1)*\buckwidth},{\ymax-1.25*sqrt(\i)+\cointoss}) --
        ({(\i)*\buckwidth},{\ymax-1.25*sqrt(\i)+\cointoss});
        }; \foreach \i in {\R,...,\rbuckets}{ \fill[thick,blue,
        opacity=0.15] ({(\i-1)*\buckwidth},0) --
        ({(\i-1)*\buckwidth},{\ymax-1.25*sqrt(\i)}) --
        ({(\i)*\buckwidth},{\ymax-1.25*sqrt(\i)}) --
        ({(\i)*\buckwidth},0) -- cycle;
          
        }; \node [] at ({(\T*\buckwidth+\xmax)/2},{(\ymax -
        1.25*sqrt((\T+\xmax/\buckwidth)/2))/2})
        {$\eps$}; \end{tikzpicture}\caption{\label{fig:distributions}The
        reference distribution $\q$ (in blue; assumed non-increasing
        without loss of generality) and the unknown distribution $\p$
        (in red). By the  reduction above, testing equality of $\p$ to
        $\q$ is tantamount to
        (i)~determining $S_\q(\eps)$, which depends only on $\q$;
        (ii)~testing identity for the conditional distributions of $\p$ and
        $\q$ given $S_\q(\eps)$, and (iii)~testing that $\p$
        assigns at most $O(\eps)$ probability to the complement of $S_\q(\eps)$.}
\end{figure}
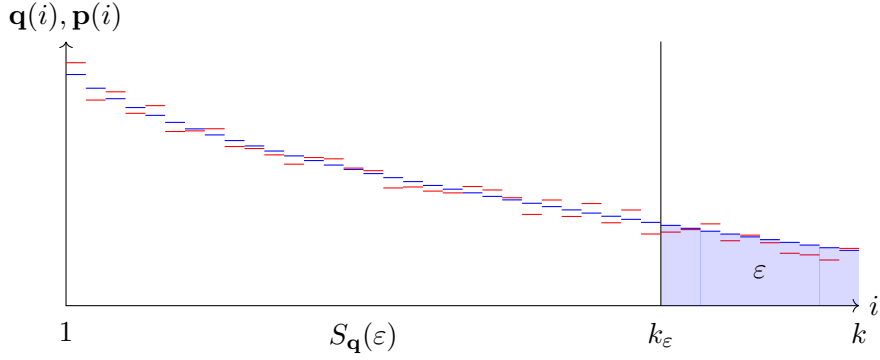
 
\section{Distributed learning lower bound (for public-randomness adaptive protocols)}\label{app:learning:lb}
\begin{theorem}\label{theo:learning:lb}
  For $1\leq \numbits\leq \log\ab$, any $\numbits$-bit public-coin (possibly adaptive) $(\ab, \eps, 1/3)$-learning protocol must have $\ns = \bigOmega{\frac{\ab^2}{2^\numbits\eps^2}}$ players.
\end{theorem}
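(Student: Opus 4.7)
The plan is to apply Fano's inequality to a Paninski-style packing, combined with a per-player strong data-processing inequality (SDPI) bounding the mutual information carried by a single $\ell$-bit message, adapting the approach of~\cite{HOW:18}. First, for a small absolute constant $c$, I set up the packing $\{\p_\theta\}_{\theta\in\{-1,+1\}^{\ab/2}}$ with $\p_\theta(2i-1) = (1+c\eps\theta_i)/\ab$ and $\p_\theta(2i) = (1-c\eps\theta_i)/\ab$, so that $\totalvardist{\p_\theta}{\p_{\theta'}} = (c\eps/\ab)\, d_H(\theta,\theta')$. By reading off $\hat\theta_i = \operatorname{sign}(\hat\p(2i-1)-\hat\p(2i))$ from any $(\ab,\eps,1/3)$-learner's output and using a coordinate-wise triangle inequality, one obtains an estimator $\hat\theta(M,U)$ recovering $\theta$ up to Hamming distance $\ab/8$ with probability at least $2/3$, uniformly in $\theta$.

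Next, placing the uniform prior on $\theta$ and invoking Fano's inequality yields $I(\theta; M, U) \gtrsim \ab$. Since $U$ is independent of $\theta$, $I(\theta; M, U) = I(\theta; M \mid U)$, and the mutual-information chain rule along the (possibly adaptive) ordering of the players gives
\[
\sum_{j=1}^\ns I(\theta; M_j \mid M_{<j}, U) \;\gtrsim\; \ab\,.
\]
The problem thus reduces to bounding each per-player conditional mutual information.

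The technical heart, and main obstacle, is the per-player SDPI: for any $\ell$-bit channel $W$ (which may depend on the past transcript and on $U$) used by a player, I would show
\[
I(\theta; W(X)) \;\lesssim\; \frac{2^\ell \eps^2}{\ab}\,, \qquad X\sim\p_\theta\,,
\]
under the uniform prior on $\theta$. The factor $2^\ell/\ab$ over the $O(\eps^2)$ that a full $\ab$-ary sample would give is the SDPI contraction inherent to compressing an $\ab$-ary sample to $\ell$ bits. To prove this, note that $\shortexpect_\theta[\p_\theta] = \uniform_\ab$, so the marginal of $W(X)$ is $W\uniform_\ab$; one can then bound the mutual information by $\bE{\theta}{\chi^2(W\p_\theta,\, W\uniform_\ab)}$ and expand the $\chi^2$ as a quadratic form in $\theta$. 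The resulting quadratic form is controlled by the Frobenius norm of an outer-product matrix built from $W$, which, by the same accounting as in the proof of~\cref{theo:uniformity:shared:randomness:lb}, is bounded by $2^\ell$ for deterministic channels (and extends to randomized channels by convexity). Since this bound depends only on the output alphabet of $W$, it holds pointwise in the past transcript and tensorizes cleanly across the $\ns$ players; combining with Fano gives $\ns \cdot 2^\ell\eps^2/\ab \gtrsim \ab$, i.e., $\ns = \bigOmega{\ab^2/(2^\ell \eps^2)}$.
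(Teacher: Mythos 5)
Your proposal takes a genuinely different route from the paper's. The paper imports the machinery of Han, \"Ozg\"ur, and Weissman~\cite{HOW:18:v1}: it passes to a Poissonized observation model (so that the $2^\numbits$ message-indicator variables of each player become conditionally independent Poissons, with side-information bits $b^{j-1}$ encoding the past transcript), bounds $\mutualinfo{U}{\tilde{Y}^\ns}$ via their Lemma~3, and then combines an Assouad-type reduction for the $\lp[1]$ risk with a distance-based Fano inequality. You instead keep the original messages, read off a coordinate-wise decoder $\hat\theta$ from the learner's output, apply Fano under the uniform prior on $\theta$, and drive the chain rule with a per-player $\chi^2$/Frobenius-norm bound mirroring the uniformity-testing lower bound (\cref{theo:uniformity:shared:randomness:lb}). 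The attraction of your route is self-containment and a clean parallel with the testing argument; the point of the paper's Poissonization, however, is precisely to handle adaptivity, and that is where your argument has a genuine gap.

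The per-player SDPI $I(\theta;W(X))\lesssim 2^\numbits\eps^2/\ab$ you derive is an \emph{average} under the uniform prior on $\theta$: for a deterministic $\numbits$-bit channel one has $\chi^2(W\p_\theta,W\uniform_\ab)=(c^2\eps^2/\ab)\,\theta^T G\theta$ for a PSD matrix $G$ with $\operatorname{tr}(G)\leq 2^\numbits$, and the $2^\numbits/\ab$ gain comes from $\shortexpect_\theta[\theta^T G\theta]=\operatorname{tr}(G)$, which is valid only because $\theta$ has i.i.d.\ Rademacher coordinates. In the chain rule $I(\theta;M)=\sum_j I(\theta;M_j\mid M_{<j})$, however, the $j$-th term is an expectation over the \emph{posterior} of $\theta$ given $M_{<j}$, and $\theta^T G\theta$ can be as large as $\Theta(\ab)$ in the worst case, collapsing the bound to $\chi^2\lesssim\eps^2$ and hence only $\ns\gtrsim\ab/\eps^2$. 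The observation that the Frobenius-norm constant is transcript-independent controls the coefficients of the quadratic form, but not its posterior expectation. Your argument does go through for SMP protocols, where conditional independence of $M_1,\dots,M_\ns$ given $\theta$ yields the subadditivity $I(\theta;M)\leq\sum_j I(\theta;M_j)$ with every term under the fresh uniform prior; but for the adaptive guarantee the theorem actually claims, you need an additional decoupling device -- in the paper this is exactly the Poissonization plus Lemma~3 of~\cite{HOW:18:v1}, whose bound remains an expectation over independent draws $U,U'$ of the \emph{prior} even though the channel is allowed to depend on the past.
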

\begin{proof}
We will show that the $\lp[1]$ minimax rate is
\[
    \inf_{(W,\delta)} \inf_{\hat{\p}} \sup_{\p\in\distribs{[\ab]}} \shortexpect_{\p} \normone{ \hat{\p} - \p } \geq C\cdot \mleft( \frac{\ab}{ \sqrt{ \ns ( 2^\numbits \wedge \ab )} }\wedge 1 \mright)
    = C\cdot \mleft(  \sqrt{ \frac{\ab}{\ns} }\vee \frac{\ab}{ \sqrt{\ns 2^\numbits} }  \wedge 1 \mright)
\]
for some absolute constant $C>0$, which implies the result. Note that
since the collocated model can simulate the distributed one, the term
$\Omega( \sqrt{\ab/\ns} )$ in the lower bound, which dominates when $2^\numbits \geq \ab$,  is an immediate consequence of the standard
lower bound in the collocated case (and holds without restriction on
the range of $\ns$). Thus, it suffices to focus on the remaining case
$2^\numbits < \ab$. 

In the argument below, we 
fix the realization shared randomness and restrict to deterministic
protocols. Our proof of lower bound uses standard argument to relate
minimax risk to probability of error in multiple hypothesis testing
problem with uniform prior on the hypotheses. Since for every
public-coin protocol we must have a deterministic protocol with same
probability of error for the multiple hypothesis testing problem,
there is no loss in restricting to deterministic protocols.

First, we establish the rate $\bigOmega{\ab/\sqrt{ \ns 2^\numbits}}$, assuming $\ns \geq \ab^2/2^\numbits$. (Recall that for $\ns \leq \ab^2/2^\numbits$, the lower bound in the RHS above is $1$.) We follow the proof of Han, \"Ozg\"ur, and Weissman~\cite[Proposition 1]{HOW:18:v1}, with the necessary modifications to adapt it to $\lp[1]$ loss (instead of squared $\lp[2]$) and remove the constraint that $\ns\geq \ab^2/2^\numbits$. (As in their proof, assume without loss of generality that $\ab$ is even.) 

To handle the dependences between the $2^\numbits$ outputted by any given player, we consider the Poissonized observation model, where we instead of $\ns$ players sending a message $Y_j$ in $\{0,1\}^\numbits$  we have 
$\ns$ players sending each a message $\tilde{Y}_j$ in $\N^{\numbits}$, where each bit of the message is a (conditionally) independent Poisson random variable: $\tilde{Y}^\ns=(\tilde{Y}_1,\dots,\tilde{Y}_\ns)\in(\N^{\numbits})^\ns$, with 
\[
    \forall j\in[\ns],\forall m\in[2^\numbits],\quad \tilde{Y}_{j,i}\mid b^{j-1} \sim \poisson{\probaCond{ Y_j=m }{ X,b^{j-1}  } }
\]
where, for $j\in[\ns]$, $b^j=(b_1,\dots,b_j)\in\{0,1\}^j$ is the (``side information'') tuple of bits with $b_j \eqdef \indic{\sum_{m=1}^{2^\numbits} \tilde{Y}_{j,m} = 1}$; and for each $j\in[\ns]$ $(\tilde{Y}_{j,1},\dots,\tilde{Y}_{j,2^\numbits})$ are independent conditioned on $b^{j-1}$. In other terms, we replace the $[2^\numbits]$-valued message of player $j$ by $2^\numbits$ different Poisson random variables, each with the right expectation (and, for technical reasons, with side information about the messages sent some other players). 
As established in Lemma 1 of~\cite{HOW:18:v1}, for distribution estimation a lower bound on the Poissonized model implies the same lower bound (up to constant factors) for our original setting.

In order to prove the lower bound, we define the family of hard instances (which will be random small perturbation of the uniform distribution $\uniform_\ab$). Letting $U$ be uniformly distributed in the hypercube $\{-1,1\}^{t}$ (where $t\eqdef \frac{\ab}{2}$), we choose $\gamma\in[0,1]$ (suitably set later in the proof) and let $\p_U\in\distribs{[\ab]}$ be defined by its probability mass function
\[
    \p_U = \frac{1}{\ab}\left( 1+\gamma U_1,\dots,1+\gamma U_{t}, 1-\gamma U_1,\dots, , 1-\gamma U_t\right)\,.
\]
This defines a class $\class\subseteq\distribs{[\ab]}$ of $2^t$ distributions. Since clearly the $\lp[1]$ minimax risk over \emph{all} $\ab$-ary distributions is no less than that over $\class$, it suffices to lower bound the later.
 We will rely on the following lemma to first bound the mutual information between the tuple of Poissonized messages $\tilde{Y}^\ns$ and the unknown parameter $U$ to estimate:
\begin{lemma}[{\cite[Lemma 3]{HOW:18:v1}}]\label{how18:lemma3}
  The following upper bound holds:
  \[
      \mutualinfo{U}{\tilde{Y}^\ns} \leq 2 \sum_{j=1}^\ns \sum_{m=1}^{2^\numbits} \shortexpect_{U,U'}\left[ \frac{(\proba_{\p_U}[ Y_j=m \mid X_j,b^{j-1}  ] - \proba_{\p_{U'}}[ Y_j=m \mid X_j,b^{j-1}  ] )^2}{\shortexpect_U\proba_{\p_U}[ Y_j=m \mid X_j,b^{j-1}  ] } \right]
  \]
  where $U'$ is an independent copy of $U$.
\end{lemma}
To handle the right-hand-side of the above bound, observe that any randomized strategy $W\colon [\ab]\to\{0,1\}$ can be identified with a vector $w\in[0,1]^{\ab}$. For every such $w$, we have
\begin{align}
  \shortexpect_{U,U'} \frac{(\shortexpect_{\p_U} W(Y\mid X) - \shortexpect_{\p_U'} W(Y\mid X))^2}{\shortexpect_U\shortexpect_{\p_U} W(Y\mid X)}
  &= \ab \frac{w^T \shortexpect_{U,U'}[(\p_U-\p_{U'})(\p_U-\p_{U'})^T] w}{w^T \mathbf{1}} \notag\\
  &\leq \ab\frac{4\gamma^2}{\ab^2}\cdot\frac{w^T w}{w^T \mathbf{1}} \leq \frac{4\gamma^2}{\ab} \label{eq:lb:how:adapted}
\end{align}
the last step since $\norminf{w}\leq 1$. We will use this later on, after relating this mutual information $\mutualinfo{U}{\tilde{Y}^\ns}$ to the quantity we are trying to analyze, the $\lp[1]$ minimax risk over our class $\class$~--~which we do next. It is not hard to show, via a standard ``Assouad's Lemma''-type argument that this $\lp[1]$ minimax risk can be lower bounded as
\begin{equation}\label{eq:lb:how:adapted:minimax:lb}
    \inf_{\hat{\p}} \sup_{\p\in\class} \shortexpect_{\p} \normone{ \hat{\p} - \p } \geq c\cdot \gamma \inf_{\hat{U}} \probaOf{ \dist{\hat{U}}{U} \geq t/5 }
\end{equation}
where $\dist{\cdot}{\cdot}$ is the unnormalized Hamming distance and $U$ is a uniform random vector in $\{-1,1\}^t$ and $c>0$ is an absolute constant. (This is another part where we depart from the argument of~Han, \"Ozg\"ur, and Weissman, concerned with the squared $\lp[2]$ loss.) Invoking~\cref{how18:lemma3}, along with~\eqref{eq:lb:how:adapted} and the same distance-based Fano's inequality as in~\cite[Lemma 2]{HOW:18:v1}, we can conclude that
\[
  \inf_{\hat{U}} \probaOf{ \dist{\hat{U}}{U} \geq t/5 } \geq 1- \frac{ \mutualinfo{U}{Y^\ns}+\ln 2 }{ t/8 }
  \geq 1- \frac{ 2\cdot \ns 2^\numbits \cdot \frac{4\gamma^2}{\ab} +\ln 2 }{ t/8 }
  = 1 - 16\frac{ 8\gamma^2 \ns 2^\numbits + \ab\ln 2 }{\ab^2}\,.
\]
The RHS will be at least say $1/2$, for large enough $\ab$, by setting $\gamma^2 \eqdef c'\cdot \frac{\ab^2}{\ns 2^\numbits}$ for a constant $c'>0$ sufficiently small (but independent of $\ab,\ns,\numbits$). For this choice of $\gamma$,~\eqref{eq:lb:how:adapted:minimax:lb} becomes
\[
    \inf_{\hat{\p}} \sup_{\p\in\class} \shortexpect_{\p} \normone{ \hat{\p} - \p } \geq \frac{c}{2}\gamma = C\cdot\frac{\ab}{\sqrt{\ns 2^\numbits}}
\]
(where $C \eqdef \frac{c\cdot \sqrt{c'}}{2} >0$), concluding the proof. (Note that the constraint $\ns \geq \ab/2^\numbits$ was used in the setting of $\gamma^2$, to ensure that $\gamma\in[0,1]$.)

Finally, we are left with the case $\ns \leq \ab^2/2^\numbits$, where we must show that the rate is $\bigOmega{1}$. We can prove it by reducing it to the previous case: namely, divide the domain $[\ab]$ into $\ab'\eqdef\sqrt{2^\numbits \ns}<\ab$ disjoint intervals of equal size (assuming for simplicity, and with little loss of generality, that $\ab'$ divides $\ab$). Apply now the previous construction to the induced domain over $\ab'$ elements, setting the distribution $\p_U$ to be uniform on each of the $\ab'$ intervals. This leads to the setting of $\gamma^2 = \frac{{\ab'}^2}{\ns 2^\numbits}\in[0,1]$, and a lower bound on the risk of $\Omega(\gamma)=\Omega(1)$.
\end{proof}

\section{Proof of~\cref{theo:uniformity:z:concentration:anticoncentration:general}}\label{app:smooth}
In this appendix, we
prove~\cref{theo:uniformity:z:concentration:anticoncentration:general},
stating that taking a random balanced partition of the domain in
$L\geq 2$ parts preserves the $\lp[2]$ distance between distributions
with constant probability. Note that, as mentioned
in~\cref{sec:uniformity:ub:smooth}, the special case of $L=2$ was
proven in~\cite{ACFT:18}. In fact, the proof for general $L$ is similar to the proof in \cite{ACFT:18}, but requires some additional work. We provide a self-contained proof here for easy reference.

We begin by recall the Paley--Zigmund inequality, a key tool we shall rely upon.
\begin{theorem}[Paley--Zygmund]\label{theo:paley:zygmund}
    Suppose $U$ is a non-negative random variable with finite variance. Then, for every $\theta\in[0,1]$, 
    \begin{equation*}
        \probaOf{ U > \theta\expect{U} } \geq (1-\theta)^2\frac{\expect{U}^2}{\expect{U^2}}\,.
    \end{equation*}
\end{theorem}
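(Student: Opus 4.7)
The plan is to prove this classical inequality by the standard ``first moment meets Cauchy--Schwarz'' argument: split $\expect{U}$ according to whether or not $U$ exceeds the threshold $\theta\expect{U}$, control the low-$U$ piece trivially, and bound the high-$U$ piece using Cauchy--Schwarz.

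Concretely, I would first write
\[
\expect{U} = \expect{U\,\indic{U \leq \theta\expect{U}}} + \expect{U\,\indic{U > \theta\expect{U}}}.
\]
Since $U \geq 0$ almost surely, the first summand is at most $\theta\expect{U}$, and rearranging gives
\[
(1-\theta)\expect{U} \leq \expect{U\,\indic{U > \theta\expect{U}}}.
\]

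Next, I would bound the surviving expectation via the Cauchy--Schwarz inequality, using $\indic{\cdot}^2 = \indic{\cdot}$:
\[
\expect{U\,\indic{U > \theta\expect{U}}} \leq \sqrt{\expect{U^2}}\cdot\sqrt{\probaOf{U > \theta\expect{U}}}.
\]
Chaining the two inequalities, squaring both sides, and dividing by $\expect{U^2}$ (which we may assume is positive, as otherwise $U\equiv 0$ and the claim is trivial) yields
\[
(1-\theta)^2\frac{\expect{U}^2}{\expect{U^2}} \leq \probaOf{U > \theta\expect{U}},
\]
as desired.

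There is no real obstacle here: the whole argument is a two-line truncation followed by Cauchy--Schwarz, a staple of the ``second moment method.'' The only subtlety is that non-negativity of $U$ is used precisely at the step where the ``small-$U$'' contribution is bounded by $\theta\expect{U}$; without this assumption the truncation estimate would fail.
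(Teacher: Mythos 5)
Your proof is correct; it is the standard textbook argument for Paley--Zygmund (truncate, bound the low-$U$ contribution by $\theta\expect{U}$ using non-negativity, apply Cauchy--Schwarz to the high-$U$ contribution, square and rearrange). Note, however, that the paper simply \emph{states} this classical inequality (\cref{theo:paley:zygmund}) as a known tool and supplies no proof of its own, so there is nothing in the paper to compare against; your argument fills that gap correctly and without issue.
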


We will prove a more general version of~\cref{theo:uniformity:z:concentration:anticoncentration:general}, showing that the $\lp[2]$ distance to any fixed distribution $\q\in\distribs{[\ab]}$ is preserved with a constant probability.\footnote{For this application, one should read the theorem statement with $\delta \eqdef \p-\q$.}
Let random variables $X_1,\dots, X_\ab$ be as in~\cref{theo:uniformity:z:concentration:anticoncentration:general}; in particular, each $X_i$ is distributed uniformly on $[L]$ and for every $r\in[L]$, $\sum_{i=1}^\ab \indic{X_i=r} = \frac{\ab}{L}$.

\begin{theorem}\label{theo:uniformity:anticoncentration:general}
Suppose $2\leq L< \ab$ is an integer dividing $\ab$, and fix $\delta\in\R^\ab$ such that $\sum_{i\in[\ab]} \delta_i = 0$. For random variables $X_1, ..., X_\ab$ above, let $Z=(Z_1,\dots,Z_L)\in \R^L$ with
    \begin{equation*}
        Z_r \eqdef \sum_{i=1}^\ab \delta_i \indic{X_i=r},\qquad r\in[L]\,.
    \end{equation*}
Then,   there exists a constant $c>0$ such that
  \begin{equation*}
      \probaOf{ \normtwo{Z} > \frac 12 \cdot\normtwo{\delta}} \geq c.
  \end{equation*}

\end{theorem}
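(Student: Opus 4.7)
The plan is to apply the Paley--Zygmund inequality (\cref{theo:paley:zygmund}) to the nonnegative random variable $U \eqdef \normtwo{Z}^2$. Writing $U = \sum_r Z_r^2 = \sum_{i,j\in[\ab]} \delta_i\delta_j \indic{X_i=X_j}$ and using the elementary identity $\probaOf{X_i=X_j} = (\ab/L-1)/(\ab-1)$ for $i\ne j$, together with the hypothesis $\sum_i\delta_i=0$ (so that $\sum_{i\ne j}\delta_i\delta_j = -\normtwo{\delta}^2$), one obtains
\[
    \expect{U} = \left(1 - \frac{\ab/L-1}{\ab-1}\right)\normtwo{\delta}^2 = \frac{\ab(L-1)}{L(\ab-1)}\normtwo{\delta}^2 \ge \tfrac{1}{2}\normtwo{\delta}^2.
\]
With this lower bound in hand, the choice $\theta \eqdef \normtwo{\delta}^2/(4\expect{U}) \in [0,1/2]$ gives $\theta\expect{U} = \tfrac{1}{4}\normtwo{\delta}^2$ and $(1-\theta)^2 \ge 1/4$, so by Paley--Zygmund
\[
    \probaOf{\normtwo{Z} > \tfrac{1}{2}\normtwo{\delta}} = \probaOf{U > \theta\expect{U}} \ge \tfrac{1}{4}\cdot\frac{\expect{U}^2}{\expect{U^2}}.
\]
Thus it suffices to show $\expect{U^2} \le C\normtwo{\delta}^4$ for some absolute constant $C$, independent of $\ab$ and $L$.

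For the second moment, I would expand
\[
    U^2 = \sum_{i,j,k,l} \delta_i\delta_j\delta_k\delta_l\,\indic{X_i=X_j}\indic{X_k=X_l}
\]
and partition the quadruples $(i,j,k,l)$ by the size of the underlying set $\{i,j,k,l\}\in\{1,2,3,4\}$, together with the specific equality pattern. By exchangeability of $(X_1,\dots,X_\ab)$, the probability $\probaOf{X_i=X_j,X_k=X_l}$ depends only on this pattern and can be computed explicitly in terms of $\ab/L$: size-$2$ patterns have probability $O(1/L)$, while size-$3$ and size-$4$ patterns have probability $O(1/L^2)$. The combinatorial sums of $\delta_i\delta_j\delta_k\delta_l$ restricted to each pattern can, in turn, be reduced via the constraint $\sum_i\delta_i=0$ to expressions involving only $\normtwo{\delta}^2$ and $\sum_i\delta_i^4 \le \normtwo{\delta}^4$; for example, $\sum_{a,b,c \text{ distinct}}\delta_a^2\delta_b\delta_c = 2\sum_i\delta_i^4 - \normtwo{\delta}^4$, and the sum over all-distinct quadruples equals $3\normtwo{\delta}^4 - 6\sum_i\delta_i^4$. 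Combining these probability and combinatorial estimates across all patterns yields $\expect{U^2} = O(\normtwo{\delta}^4)$ uniformly in $\ab$ and $L$, which closes the Paley--Zygmund argument.

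The main obstacle is the bookkeeping in this second-moment calculation: within sizes $3$ and $4$ there are several sub-patterns to enumerate, and one must carefully track both the (rational) probability of each joint same-bin event and the signed combinatorial sum of $\delta$-products, the leading $\normtwo{\delta}^4$-terms of which must cancel to leave a bound of the right order. The key structural fact that enables these reductions is the consistent exploitation of the zero-sum constraint $\sum_i\delta_i=0$, which telescopes higher-order products down to quantities controlled by $\normtwo{\delta}^4$.
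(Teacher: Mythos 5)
Your proposal is correct and pursues a genuinely different, and in some ways cleaner, decomposition of the fourth moment than the paper does. The paper expands $\expect{\normtwo{Z}^4}=\sum_r\expect{Z_r^4}+2\sum_{r<r'}\expect{Z_r^2Z_{r'}^2}$, handles the diagonal term by citing an external bound $\expect{Z_r^4}\leq 19\normtwo{\delta}^4/L$ from~\cite{ACFT:18}, and then bounds each cross term $\expect{Z_r^2Z_{r'}^2}$ by expanding in indices $a,b,c,d$ and using the key vanishing observation that $\{a,b\}\cap\{c,d\}\neq\emptyset$ forces the integrand to zero. Your starting identity $\normtwo{Z}^2=\sum_{i,j}\delta_i\delta_j\indic{X_i=X_j}$, by eliminating the bin index $r$ entirely, lets you write $\expect{\normtwo{Z}^4}=\sum_{i,j,k,l}\delta_i\delta_j\delta_k\delta_l\,\probaOf{X_i=X_j,\,X_k=X_l}$ and organize the whole calculation by the equality pattern of $(i,j,k,l)$ in $[\ab]^4$, so the diagonal and off-diagonal cases the paper treats separately are handled uniformly. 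Your combinatorial identities $\sum_{a\ne b\ne c}\delta_a^2\delta_b\delta_c=2\sum_i\delta_i^4-\normtwo{\delta}^4$ and $\sum_{\text{4 distinct}}\delta_i\delta_j\delta_k\delta_l=3\normtwo{\delta}^4-6\sum_i\delta_i^4$ are both correct (the latter being $3p_2^2-6p_4$ with $p_1=0$), and the expectation lower bound $\expect{U}=\frac{\ab(L-1)}{L(\ab-1)}\normtwo{\delta}^2\geq\frac12\normtwo{\delta}^2$ matches the paper's exact computation. Two small caveats that don't affect soundness: your blanket assertion that size-$2$ patterns have probability $O(1/L)$ and size-$3$ patterns have probability $O(1/L^2)$ is too strong -- the pattern $i=j$, $k=l$, $i\neq k$ has probability exactly $1$, and the pattern $i=j$ with $k,l,i$ pairwise distinct has probability $\Theta(1/L)$ -- but in all such cases the corresponding $\delta$-sum is already $O(\normtwo{\delta}^4)$ without any $1/L$ help, so the $\expect{U^2}=O(\normtwo{\delta}^4)$ bound survives; and your Paley--Zygmund application with the data-dependent $\theta=\normtwo{\delta}^2/(4\expect{U})$ works but is slightly more fiddly than simply taking $\theta=1/2$ and using $\frac12\expect{U}\geq\frac14\normtwo{\delta}^2$ as the paper does. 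To turn this into a full proof you would need to actually enumerate the remaining sub-patterns (especially the chained size-$3$ cases such as $j=k$) and their exact probabilities, but the skeleton and all the steps you do exhibit are sound.
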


\begin{proofof}{\cref{theo:uniformity:anticoncentration:general}}

As in~\cite[Theorem 14]{ACFT:18}, the gist of the proof is to consider a suitable non-negative random variable (namely, $\normtwo{Z}^2$) and bound its
expectation and second moment in order to apply the Paley--Zygmund
inequality to argue about anticoncentration around the mean. The
difficulty, however, lies in the fact that bounding the moments of
$\normtwo{Z}$ involves handling the products of correlated $L$-valued
random variables $X_i$'s, which is technical even for the case $L=2$
considered in~\cite{ACFT:18}. For ease of presentation, we have divided the proof into smaller results. 

\begin{lemma}[Each part has the right expectation]\label{lemma:expectation:z:general:vanilla}
  For every $r\in[L]$, 
  \[
      \expect{Z_r} = 0\,.
  \]
\end{lemma}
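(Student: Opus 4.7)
The plan is to apply linearity of expectation and exploit the marginal uniformity of each $X_i$. Recall from the setup preceding the theorem that although $X_1,\dots,X_\ab$ are correlated (they are conditioned on the event that each value in $[L]$ appears exactly $\ab/L$ times), each individual $X_i$ is marginally uniform on $[L]$; this follows by symmetry from the fact that the unconditioned $Y_i$'s are uniform and the conditioning event is symmetric in all indices.

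Given this, for any fixed $r\in[L]$ and $i\in[\ab]$ we have $\probaOf{X_i=r}=1/L$. By linearity of expectation,
\[
\expect{Z_r} \;=\; \sum_{i=1}^{\ab}\delta_i\,\probaOf{X_i=r} \;=\; \frac{1}{L}\sum_{i=1}^{\ab}\delta_i \;=\; 0,
\]
where the last equality uses the hypothesis $\sum_{i\in[\ab]}\delta_i=0$ (which in the intended application where $\delta=\p-\q$ corresponds to the fact that both $\p$ and $\q$ are probability distributions on $[\ab]$). There is no real obstacle here; this is the first-moment computation that will serve as the baseline against which the anticoncentration claim via Paley--Zygmund is measured in the subsequent lemmas.
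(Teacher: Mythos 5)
Your proof is correct and matches the paper's argument: both use linearity of expectation together with the marginal uniformity of each $X_i$ (i.e.\ $\expect{\indic{X_i=r}}=1/L$) and the hypothesis $\sum_i\delta_i=0$. The brief symmetry justification you give for why each $X_i$ remains marginally uniform after conditioning is a helpful gloss the paper leaves implicit.
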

\begin{proof}
By linearity of expectation,
\[
    \expect{Z_r}  = \sum_{i=1}^\ab \delta_i  \expect{ \indic{X_i=r} } = \frac{1}{L} \sum_{i=1}^\ab \delta_i = 0
.
\] 
\end{proof}

\begin{lemma}[The {$\lp[2]^2$} distance to uniform of the flattening has the right expectation]\label{lemma:variance:z:general:vanilla}
For every $r\in[L]$,
  \[
      \var Z_r = \expect{Z_r^2} = \frac{1}{L}\normtwo{\delta}^2\left( 1 -\frac{1}{L} + \frac{L-1}{L(\ab-1)} \right) \geq \frac{1}{2L}\normtwo{\delta}^2\,.
  \]
  In particular, the expected squared $\lp[2]$ norm of $Z$ is
  \begin{equation*}
      \expect{\normtwo{Z}^2 } = \expect{\sum_{r=1}^L Z_r^2 } \geq \frac{1}{2}\normtwo{\delta}^2\,.
  \end{equation*}
\end{lemma}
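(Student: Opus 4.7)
The plan is to expand $\expect{Z_r^2}$ directly using linearity of expectation, splitting into diagonal and off-diagonal contributions, and then using the constraint $\sum_i \delta_i = 0$ to control the off-diagonal terms. Since the previous lemma already gives $\expect{Z_r}=0$, it suffices to compute the second moment; this handles the identity $\var Z_r = \expect{Z_r^2}$.

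First I would write
\[
    \expect{Z_r^2} = \sum_{i=1}^\ab \delta_i^2 \proba[X_i=r] + \sum_{i\neq j} \delta_i \delta_j \proba[X_i=r, X_j=r].
\]
The marginal is $\proba[X_i=r]=1/L$ by symmetry. The joint is a standard two-ball-in-a-bin count for a uniformly random balanced partition: conditioned on the sizes being $\ab/L$ per part, $\proba[X_i=r, X_j=r] = \frac{(\ab/L)(\ab/L-1)}{\ab(\ab-1)} = \frac{\ab-L}{L^2(\ab-1)}$. Next I would use the key fact $\sum_i \delta_i = 0$, from which $\sum_{i\neq j}\delta_i\delta_j = (\sum_i \delta_i)^2 - \sum_i \delta_i^2 = -\normtwo{\delta}^2$, to collapse the double sum.

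Substituting and simplifying, I would obtain
\[
    \expect{Z_r^2} = \normtwo{\delta}^2\left(\frac{1}{L} - \frac{\ab-L}{L^2(\ab-1)}\right),
\]
and then rewrite $\frac{\ab-L}{L^2(\ab-1)} = \frac{1}{L^2} - \frac{L-1}{L^2(\ab-1)}$ by splitting $\ab-L = (\ab-1) - (L-1)$. This gives exactly the claimed expression $\frac{1}{L}\normtwo{\delta}^2\bigl(1 - \frac{1}{L} + \frac{L-1}{L(\ab-1)}\bigr)$. For the lower bound $\geq \frac{1}{2L}\normtwo{\delta}^2$, it suffices to verify that $1 - \frac{1}{L} + \frac{L-1}{L(\ab-1)} \geq \frac{1}{2}$, which reduces to $\frac{L-2}{2L} + \frac{L-1}{L(\ab-1)} \geq 0$; this is manifestly non-negative since $L\geq 2$. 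Finally, summing over $r\in[L]$ gives $\expect{\normtwo{Z}^2} \geq L \cdot \frac{1}{2L}\normtwo{\delta}^2 = \frac{1}{2}\normtwo{\delta}^2$.

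I do not expect any real obstacle: the proof is essentially a careful second-moment calculation under the hypergeometric-type constraint that every part has exactly $\ab/L$ elements. The only small technicality is recognizing that the balanced constraint changes the joint probability $\proba[X_i=r,X_j=r]$ from the naive $1/L^2$ of the i.i.d. case to $\frac{\ab-L}{L^2(\ab-1)}$ (a correction of order $1/\ab$ which, together with $\sum_i\delta_i = 0$, produces the small positive correction term $\frac{L-1}{L(\ab-1)}$ appearing in the statement).
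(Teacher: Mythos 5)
Your proof is correct and follows essentially the same path as the paper: expand $\expect{Z_r^2}$ into diagonal and cross terms, use $\proba[X_i=r]=1/L$ and $\proba[X_i=r,X_j=r]=\frac{\ab-L}{L^2(\ab-1)}$ for the balanced partition, and collapse the cross term via $\sum_{i\neq j}\delta_i\delta_j=-\normtwo{\delta}^2$. The only cosmetic difference is that you compute the joint probability by a direct hypergeometric count whereas the paper derives it via conditioning and a symmetry argument, but the quantity and the resulting algebra are identical.
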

\begin{proof}
For a fixed $r\in[L]$, using the definition of $Z$, the fact that $\sum_{i=1}^\ab  \indic{X_i=r}  = \frac{\ab}{L}$, and~\cref{lemma:expectation:z:general:vanilla}, we get that
  \begin{align*}
      \var[ Z_r ] 
      &= \expect{ Z_r^2 } 
      = \expect{ \left(\sum_{i=1}^\ab \delta_i \indic{X_i=r} \right)^2 }
      = \sum_{1\leq i,j\leq \ab} \delta_i\delta_j \expect{\indic{X_i=r}\indic{X_j=r}} \\
      &= \sum_{i=1}^\ab \delta_i^2 \expect{ \indic{X_i=r} } + 2\sum_{1\leq i<j \leq \ab} \delta_i\delta_j \expect{ \indic{X_i=r}\indic{X_j=r} }\,.
  \end{align*}
  Since the $X_i$'s~--~while not independent~--~are identically distributed, it is enough by symmetry to compute $\expect{ \indic{X_\ab=r} }$ and $\expect{ \indic{X_{\ab-1}=r}\indic{X_\ab=r} }$. The former is $1/L$; for the latter, note that
  \begin{align}
      \expect{ \indic{X_{\ab-1}=r}\indic{X_\ab=r} }
      &= \expect{ \expectCond{ \indic{X_{\ab-1}=r}\indic{X_\ab=r} }{ \indic{X_\ab=r} } } 
      = \frac{1}{L}\probaCond{ X_{\ab-1}=r}{X_\ab = r} 
\nonumber
\\
      &= \frac{1}{L}\probaCond{ X_{\ab-1} = r}{\sum_{i=1}^{\ab-1} \indic{X_i=r} = \frac{\ab}{L}-1}
      = \frac{1}{L^2}\cdot \frac{\ab-L}{\ab-1},
\label{e:symmetry_for_conditional_prob}
  \end{align}
where the final identity uses symmetry once again, along with the observation that 
\[
\sum_{i=1}^{\ab-1} \expectCond{ \indic{X_i=r} }{\sum_{j=1}^{\ab-1} \indic{X_j=r} = \frac{\ab}{L}-1} = \frac{\ab}{L}-1.
\]
 Putting it together, we get the result as follows:
  \begin{align*}
      \var[ Z_r ] 
      &= \frac{1}{L}\sum_{i=1}^\ab \delta_i^2 + \frac{1}{L^2}\cdot \frac{\ab-L}{\ab-1}\cdot 2\sum_{1\leq i<j \leq \ab} \delta_i\delta_j 
      = \frac{1}{L}\normtwo{\delta}^2 - \frac{1}{L^2} \mleft( 1 - \frac{L-1}{\ab-1}\mright)\normtwo{\delta}^2 \\
      &= \frac{1}{L}\normtwo{\delta}^2\left( 1 -\frac{1}{L} + \frac{L-1}{L(\ab-1)} \right). 
  \end{align*}
\end{proof}

\begin{lemma}[The {$\lp[2]^2$} distance to uniform of the flattening has the required second moment]\label{lemma:fourth:moment:z:general:vanilla}
  There exists an absolute constant $C>0$ such that
  \begin{equation*}
      \expect{\normtwo{Z}^4} \leq C\normtwo{\delta}^4\,.
  \end{equation*}
\end{lemma}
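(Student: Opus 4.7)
The plan is to collapse the sum over the $L$ parts via the identity $\sum_{r=1}^L \indic{X_i=r}\indic{X_j=r} = \indic{X_i=X_j}$, which dramatically simplifies the bookkeeping. Setting $\chi_{ij} \eqdef \indic{X_i=X_j}$ for $i\neq j$, a direct rearrangement gives
\[
    \normtwo{Z}^2 = \sum_{i,j} \delta_i\delta_j\indic{X_i=X_j} = \normtwo{\delta}^2 + 2W, \qquad W \eqdef \sum_{i<j}\delta_i\delta_j\chi_{ij}.
\]
Squaring and taking expectations,
\[
    \expect{\normtwo{Z}^4} = \normtwo{\delta}^4 + 4\normtwo{\delta}^2\expect{W} + 4\expect{W^2}.
\]
The cross term is already controlled, since $\expect{W} = p \cdot \sum_{i<j}\delta_i\delta_j = -\tfrac{1}{2}p\normtwo{\delta}^2$ by $\sum_i\delta_i=0$, where $p\eqdef \expect{\chi_{12}}=\frac{\ab-L}{L(\ab-1)}\leq 1/L$. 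The whole problem therefore reduces to bounding $\expect{W^2}$ by $O(\normtwo{\delta}^4)$.

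To that end, expand
\[
    \expect{W^2} = \sum_{i<j,\,k<\ell} \delta_i\delta_j\delta_k\delta_\ell \expect{\chi_{ij}\chi_{k\ell}}
\]
and split by the coincidence pattern $t\eqdef \abs{\{i,j\}\cap\{k,\ell\}}\in\{0,1,2\}$. By the exchangeability of the balanced partition $(X_1,\ldots,X_\ab)$, the expectation $\expect{\chi_{ij}\chi_{k\ell}}$ depends only on $t$, and a short multivariate-hypergeometric computation bounds each of these probabilities by a constant times $1/L$ (for $t=2$ it equals $p$, and for $t\in\{0,1\}$ it is of order $1/L^2$). Writing $m_k\eqdef \sum_i\delta_i^k$, repeated applications of $m_1=0$ collapse the three index sums into closed form: for $t=2$, it becomes $\tfrac{1}{2}(m_2^2-m_4)$; for $t=1$, after fixing the shared index $a$ and using $\sum_{b\neq a}\delta_b=-\delta_a$ in the inner sum, it becomes $2m_4-m_2^2$; and for $t=0$, after converting $\sum_{i<j,\,k<\ell,\text{all distinct}}$ into $6\sum_{i<j<k<\ell}$ via the three pairings of four labeled elements (each ordered two ways), one evaluates $\sum_{i<j<k<\ell}\delta_i\delta_j\delta_k\delta_\ell$ by expanding the identity $0=m_1^4$ over the five partition types of $\{1,2,3,4\}$, yielding the contribution $r_0\cdot\tfrac{1}{4}(3m_2^2-6m_4)$. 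Using $m_4\leq m_2^2=\normtwo{\delta}^4$ (valid since $a_i\eqdef \delta_i^2\geq 0$ gives $\sum a_i^2\leq (\sum a_i)^2$), every case is at most a constant multiple of $\normtwo{\delta}^4/L$, so $\expect{W^2}=O(\normtwo{\delta}^4)$ and the lemma follows with an absolute constant $C$.

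The main obstacle is the all-distinct case $t=0$, which is where the bookkeeping is most delicate: one needs both the combinatorial factor $6$ relating the ordered-pair-of-sorted-pairs sum to the fully sorted $4$-tuple sum, and the Bell-polynomial-style expansion
\[
    0 \;=\; m_1^4 \;=\; m_4 \;+\; 4(-m_4) \;+\; 3(m_2^2-m_4) \;+\; 6(2m_4-m_2^2) \;+\; \sum_{(i,j,k,\ell)\text{ distinct ord.}}\delta_i\delta_j\delta_k\delta_\ell
\]
to deduce that the last sum equals $3m_2^2-6m_4$. Once these two identities are in hand, no finer tool (not even a Cauchy--Schwarz bound on $\sum_i\delta_i^3$) is needed, and the final bound follows by just combining the three pattern contributions.
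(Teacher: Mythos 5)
Your proof is correct, and it takes a genuinely cleaner route than the paper's. The key move is the collapse identity $\sum_{r=1}^L \indic{X_i=r}\indic{X_j=r}=\indic{X_i=X_j}$, which rewrites $\normtwo{Z}^2$ purely in terms of the collision indicators $\chi_{ij}$ and $\normtwo{\delta}^2$, eliminating the index $r$ from the problem entirely. The paper instead keeps the per-part structure, expanding $\normtwo{Z}^4=\sum_r Z_r^4+2\sum_{r<r'}Z_r^2 Z_{r'}^2$ and handling the two pieces separately: the diagonal $\sum_r\expect{Z_r^4}$ is dispatched by citing a fourth-moment bound from~\cite{ACFT:18} (so the paper's proof is not self-contained at that point), while the cross terms are controlled in a dedicated claim by a multilinear expansion over the coincidence patterns of the indices $a,b,c,d\in[\ab]$. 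Your argument fuses both pieces into a single bound on $\expect{W^2}$ via the three patterns $t=\abs{\{i,j\}\cap\{k,\ell\}}\in\{0,1,2\}$, and exchangeability ensures each $\expect{\chi_{ij}\chi_{k\ell}}$ depends only on $t$. The arithmetic checks out: the Bell-partition expansion of $0=m_1^4$ indeed gives $\sum_{\text{distinct ordered}}\delta_i\delta_j\delta_k\delta_\ell=3m_2^2-6m_4$, and all three patterns contribute $O(\normtwo{\delta}^4/L)$ once one bounds the collision probabilities by $p\leq 1/L$ (which, as you use, suffices here — one doesn't even need the sharper $1/L^2$ for $t\in\{0,1\}$). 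Your approach is essentially equivalent in content to the paper's Claim~\ref{claim:fourth:moment:z:general:vanilla:second:term}, but the collapse trick unifies the bookkeeping and avoids the external citation, making the proof shorter and fully self-contained.
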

\begin{proofof}{\cref{lemma:fourth:moment:z:general:vanilla}}
Expanding the square, we have
\begin{equation}\label{eq:square:expanding}
    \expect{\normtwo{Z}^4} = \expect{\mleft(\sum_{r=1}^L Z_r^2\mright)^2}
    = \sum_{r=1}^L \expect{Z_r^4} + 2\sum_{r<r'} \expect{Z_r^2Z_{r'}^2}
\end{equation}
We will bound both terms separately. For the first term, we note that
using \cite[Equation(21)]{ACFT:18} with $\indic{X_i=r}$ in the role of
$X_i$ there, each term $\expect{Z_r^4}$ is bounded above by
$19\normtwo{\delta}^4/L$ whereby 
\begin{align}
\sum_{r=1}^L \expect{Z_r^4} \leq 19 \normtwo{\delta}^4.
\label{e:first_term}
\end{align}
However, we need additional work
to handle the second term comprising roughly $L^2$ summands. In
particular, to complete the proof we show that
each summand in the second term is less than a constant factor times
$\normtwo{\delta}^4/L^2$.

\begin{claim}\label{claim:fourth:moment:z:general:vanilla:second:term}
  There exists an absolute constant $C'>0$ such that
  \[
      \sum_{r<r'} \expect{Z_r^2Z_{r'}^2} \leq C'\normtwo{\delta}^4\,.
  \]
\end{claim}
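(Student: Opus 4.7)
The plan is to expand the product $Z_r^2 Z_{r'}^2$ as a fourfold sum over index tuples, compute the expectation of each summand using the symmetry and exchangeability of $(X_1,\dots,X_\ab)$, and then group the contributions by the coincidence pattern of the indices. Concretely, I would start from
\begin{equation*}
    Z_r^2 Z_{r'}^2 = \sum_{i,j,k,l \in [\ab]} \delta_i\delta_j\delta_k\delta_l \indic{X_i=r}\indic{X_j=r}\indic{X_k=r'}\indic{X_l=r'}
\end{equation*}
and observe that, since $r\neq r'$, the joint indicator vanishes unless $\{i,j\}\cap\{k,l\}=\emptyset$. By exchangeability, the expectation depends only on the coincidence type of $(i,j,k,l)$, leaving four cases to handle: (A) $i=j$ and $k=l$ with $i\neq k$; (B)/(C) exactly one of $i=j$, $k=l$ holds (with the remaining indices all distinct from it); and (D) $i,j,k,l$ all distinct.

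For each case I would compute the probability of the indicator and the corresponding coefficient sum. In case (A), as in the proof of~\cref{lemma:variance:z:general:vanilla}, the probability equals $\frac{\ab/L}{L(\ab-1)}$, while the coefficient sum $\sum_{i\neq k}\delta_i^2\delta_k^2\leq\normtwo{\delta}^4$, so the total contribution from (A) is at most $\frac{\normtwo{\delta}^4}{L^2}\cdot\frac{\ab}{\ab-1}$. In cases (B)/(C), the probability is at most $\frac{C_1}{L^3}$, and since $\sum_i\delta_i=0$ gives $\sum_{k\neq l,\,k,l\neq i}\delta_k\delta_l = 2\delta_i^2-\normtwo{\delta}^2$, the coefficient sum $\sum_i\delta_i^2(2\delta_i^2-\normtwo{\delta}^2) = 2\sum_i\delta_i^4-\normtwo{\delta}^4$ has absolute value at most $\normtwo{\delta}^4$. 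In case (D), the probability is at most $\frac{C_2}{L^4}$, and the all-distinct coefficient sum can be evaluated by expanding $0=(\sum_i\delta_i)^4$ in terms of coincidence types, yielding the identity $\sum_{i,j,k,l\text{ distinct}}\delta_i\delta_j\delta_k\delta_l = 3\normtwo{\delta}^4 - 6\sum_i\delta_i^4$, again bounded by $9\normtwo{\delta}^4$ in absolute value.

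Combining, each ordered pair $(r,r')$ with $r\neq r'$ contributes at most $C_3\normtwo{\delta}^4/L^2$, where the dominant term comes from case (A). Summing over the $\binom{L}{2}$ unordered pairs collapses the $L^2$ denominator, giving
\begin{equation*}
    \sum_{r<r'}\expect{Z_r^2Z_{r'}^2} \leq \binom{L}{2}\cdot\frac{C_3\normtwo{\delta}^4}{L^2} \leq C'\normtwo{\delta}^4
\end{equation*}
for an absolute constant $C'>0$, as claimed. The main obstacle is verifying the identity for case (D): a priori the unrestricted sum $\sum_{i,j,k,l\text{ distinct}}\delta_i\delta_j\delta_k\delta_l$ could be as large as $\normtwo{\delta}_1^4$, and only the cancellation enforced by $\sum_i\delta_i=0$ brings it down to $O(\normtwo{\delta}^4)$; this is the one point where the zero-sum hypothesis on $\delta$ is used in a nontrivial way, and the rest of the argument is bookkeeping.
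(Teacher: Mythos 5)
Your proof is correct and follows essentially the same approach as the paper: expand the fourfold sum, use the key observation that the joint indicator vanishes unless $\{i,j\}\cap\{k,l\}=\emptyset$ (which forces the leading probability term to scale as $1/L^2$ rather than $1/L$), decompose by coincidence pattern of the index tuple, bound the coefficient sums in each case using exchangeability together with the zero-sum constraint $\sum_i\delta_i=0$, and finally sum over the $\binom{L}{2}$ pairs to collapse the $L^2$ denominator. Your coefficient-sum computations for the three- and four-distinct cases are slightly more direct than the paper's (which for three distinct indices first passes through the unrestricted triple sum and a bound $\sum_{a\neq b}\delta_a^3|\delta_b|\leq\norminf{\delta}\norm{\delta}_3^3$, and for four distinct indices just asserts ``similar manipulations''), but the structure and all the key ideas coincide.
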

\begin{proof}
Fix any $r\neq r'$. As before, we expand
\begin{align*}
    \expect{Z_r^2Z_{r'}^2}
    &= \expect{ \left(\sum_{i=1}^\ab \delta_i \indic{X_i=r} \right)^2\left(\sum_{i=1}^\ab \delta_i \indic{X_i=r'} \right)^2 } \\
    &= \sum_{1\leq a,b,c,d \leq \ab} \delta_a\delta_b\delta_c\delta_d\expect{ \indic{X_a=r}\indic{X_b=r}\indic{X_c=r'}\indic{X_d=r'} }\,.
\end{align*}
Using symmetry once again, note that the term
$\expect{ \tilde{X}_a\tilde{X}_b\tilde{X}_c\tilde{X}_d }$ depends only
on the number of distinct elements in the multiset $\{a,b,c,d\}$,
namely the cardinality $\abs{\{a,b,c,d\}}$. The key observation here
is that if $\{a,b\}\cap\{c,d\} \neq \emptyset$, then
$\indic{X_a=r}\indic{X_b=r}\indic{X_c=r'}\indic{X_d=r'} = 0$. This
will be crucial as it implies that the expected value can only be
non-zero if $\abs{\{a,b,c,d\}}\geq 2$, yielding a $1/L^2$ dependence
for the leading term in place of $1/L$. 
 \begin{align}
    \expect{Z_r^2Z_{r'}^2}
    &= \sum_{\abs{\{a,b,c,d\}}=2} \delta_a^2\delta_b^2\expect{ \indic{X_a=r}\indic{X_b=r'}} 
\notag
\\
    &\qquad+\sum_{\abs{\{a,b,c,d\}}=3} \delta_a^2\delta_b\delta_c\expect{ \indic{X_a=r}\indic{X_b=r'}\indic{X_c=r'}} \notag\\
    &\qquad+\sum_{\abs{\{a,b,c,d\}}=3} \delta_a\delta_b\delta_c^2\expect{ \indic{X_a=r}\indic{X_b=r}\indic{X_c=r'}} \notag\\
    &\qquad+\sum_{\abs{\{a,b,c,d\}}=4} \delta_a\delta_b\delta_c\delta_d\expect{ \indic{X_a=r}\indic{X_b=r}\indic{X_c=r'}\indic{X_d=r'}}\,. 
\label{eq:cross:terms}
\end{align}
The first term, which we will show dominates, is bounded as
\[
\sum_{\abs{\{a,b,c,d\}}=2} \delta_a^2\delta_b^2\expect{ \indic{X_a=r}\indic{X_b=r'}}
= \expect{ \indic{X_{\ab-1}=r}\indic{X_\ab=r'} } \normtwo{\delta}^4 \leq \frac{2}{L^2}\normtwo{\delta}^4
\]
where the inequality uses
\[
\expect{ \indic{X_{\ab-1}=r}\indic{X_\ab=r'} }
= \frac{1}{L^2}\cdot \frac{\ab}{\ab-1} \leq \frac{2}{L^2},
\]
which in turn is obtained in the manner of \eqref{e:symmetry_for_conditional_prob}.

For the second and the third terms, noting that 
\[
\expect{ \indic{X_a=r}\indic{X_b=r'}\indic{X_c=r'} }
= \abs{\delta_a^2\delta_b\delta_c}\cdot \frac{1}{L^3} \frac{\ab(\ab-L)}{(\ab-1)(\ab-2)}, 
\]
and that 
\[
 \sum_{\abs{\{a,b,c,d\}}=3} \delta_a^2\delta_b\delta_c = \sum_{1\leq a,b,c\leq \ab} \delta_a^2\delta_b\delta_c - \sum_{a\neq b}\delta_a^2\delta_b^2 - 2\sum_{a\neq b}\delta_a^3\delta_b
\]
with $\sum_{1\leq a,b,c\leq \ab} \delta_a^2\delta_b\delta_c = \left(\sum_{a=1}^\ab \delta_a^2\right)\left(\sum_{a=1}^\ab \delta_a\right)^2 = 0$, $\sum_{a\neq b}\delta_a^2\delta_b^2\leq \sum_{1\leq a,b\leq \ab} \delta_a^2\delta_b^2 = \normtwo{\delta}^4$, and $\sum_{a\neq b}\delta_a^3\abs{\delta_b}\leq \sum_{1\leq a,b\leq \ab}\delta_a^3\abs{\delta_b} \leq \norminf{\delta}\norm{\delta}_3^3\leq \normtwo{\delta}^4$, we get
\[
      -\frac{6}{L^3}\normtwo{\delta}^4 \leq  \sum_{\abs{\{a,b,c,d\}}=3} \delta_a^2\delta_b\delta_c \expect{ \indic{X_a=r}\indic{X_b=r'}\indic{X_c=r'}}  \leq \frac{6}{L^3}\normtwo{\delta}^4\,.
\]
Finally, as $\expect{ \indic{X_a=r}\indic{X_b=r}\indic{X_c=r'}\indic{X_d=r'}} = \frac{1}{L^4}\frac{\ab^2(\ab-L)^2}{(\ab-1)(\ab-2)(\ab-3)(\ab-4)} \leq \frac{10}{L^4}$, similar manipulations yield
\[
    -\frac{\alpha}{L^4}\normtwo{\delta}^4
    \leq \sum_{\abs{\{a,b,c,d\}}=4} \delta_a\delta_b\delta_c\delta_d\expect{ \indic{X_a=r}\indic{X_b=r}\indic{X_c=r'}\indic{X_d=r'}}
    \leq \frac{\alpha}{L^4}\normtwo{\delta}^4
\]
for some absolute constant $\alpha>0$. Gathering all this in~\eqref{eq:cross:terms}, we get that there exists some absolute constant $C'>0$ such that
\[
    \sum_{r<r'} \expect{Z_r^2Z_{r'}^2}
    \leq C'\sum_{r<r'} \frac{1}{L^2}\normtwo{\delta}^4
    \leq \frac{C'}{2} \normtwo{\delta}^4\,.
\]
\end{proof}
The lemma follows by combining the previous claim with \eqref{e:first_term}.
\end{proofof}
We are now ready to establish~\cref{theo:uniformity:z:concentration:anticoncentration:general}. By~\cref{lemma:variance:z:general:vanilla,lemma:variance:z:general:vanilla,lemma:fourth:moment:z:general:vanilla}, we have $\expect{\normtwo{Z}^2} \geq \frac{1}{2}\normtwo{\delta}^2$ and $\expect{\normtwo{Z}^4} \leq C\normtwo{\delta}^4$, for some absolute constant $C>0$. Therefore, by the Payley--Zygmund inequality (\cref{theo:paley:zygmund}) applied to $\normtwo{Z}^2$ for $\theta=1/2$,
\begin{equation*}
  \probaOf{ \normtwo{Z}^2 > \frac{1}{4}\normtwo{\delta}^2 } 
  \geq   \probaOf{ \normtwo{Z}^2 > \frac{1}{2}\expect{\normtwo{Z}^2} }
  \geq \frac{1}{4}\frac{\expect{\normtwo{Z}^2}^2}{\expect{\normtwo{Z}^4}} \geq \frac{1}{16C}\,.
\end{equation*}
This concludes the proof.
\end{proofof}

\end{document}